\definecolor{shadecolor}{rgb}{0.95, 0.95, 0.86}
\DeclareMathOperator{\Ai}{Ai}
\DeclareMathOperator{\area}{area}
\DeclareMathOperator{\diag}{diag}
\DeclareMathOperator{\dist}{dist}
\DeclareMathOperator{\Tr}{Tr}
\DeclareMathOperator{\Real}{Re}
\DeclareMathOperator{\Imag}{Im}
\renewcommand{\Re}{\Real}
\renewcommand{\Im}{\Imag}
\newtheorem{theorem}{Theorem}[section]
\newtheorem{lemma}[theorem]{Lemma}
\newtheorem{proposition}[theorem]{Proposition}
\newtheorem{corollary}[theorem]{Corollary}
\newtheorem{definition}[theorem]{Definition}
\newtheorem{Remark}[theorem]{Remark}
\newenvironment{remark}{\begin{Remark}\rm}{\end{Remark}}
\newtheorem{RHproblem}[theorem]{RH problem}
\newenvironment{rhproblem}{\begin{RHproblem}\rm}{\end{RHproblem}}
\def\bt{\begin{theorem}}
\def\et{\end{theorem}}
\def\bc{\begin{corollary}}
\def\ec{\end{corollary}}
\def\bx{\begin{example}\small}
\def\ex{\end{example}}
\def\bxr{\begin{exercise}\small}
\def\exr{\end{exercise}}
\def\bl{\begin{lemma}}
\def\el{\end{lemma}}
\def\bd{\begin{definition}}
\def\ed{\end{definition}}
\def\bp{\begin{proposition}}
\def\ep{\end{proposition}}
\def\br{\begin{remark}}
\def\er{\end{remark}}
\def\be{\begin{equation}}
\def\ee{\end{equation}}
\def \part{\partial}
\numberwithin{equation}{section}
\author[1]{A.B.J. Kuijlaars}\author[2]{A. Tovbis} 
\affil[1]{\normalsize Department of Mathematics, KU Leuven, Belgium, email:~arno.kuijlaars@wis.kuleuven.be}
\affil[2]{Department of Mathematics, University of Central Florida, 
	Orlando, FL 32816-1364, email:~alexander.tovbis@ucf.edu}
\title{The supercritical regime in the normal matrix model with cubic potential}
\date{\today}
\begin{document}
\maketitle

\begin{abstract} 
The normal matrix model with a cubic potential is ill-defined and it develops a critical
behavior in finite time. We follow the approach of Bleher and Kuijlaars to reformulate the model
in terms of orthogonal polynomials with respect to a Hermitian form. 
This reformulation was shown to capture the essential features
of the normal matrix model in the subcritical regime, namely that the zeros of the polynomials
tend to a number of segments (the motherbody) inside a domain (the droplet) that attracts
the eigenvalues in the normal matrix model. 

In the present paper we analyze the supercritical regime and we find that the large $n$ behavior 
is described by the evolution of a spectral curve satisfying the Boutroux condition.
The Boutroux condition determines a system of contours $\Sigma_1$, consisting of the motherbody and whiskers
sticking out of the domain. 
We find a second critical behavior at which the original motherbody shrinks to a point at the origin
and only the whiskers remain.

In the regime before the second criticality we also give strong asymptotics of the orthogonal polynomials
by means of a steepest descent analysis of a $3 \times 3$ matrix valued Riemann-Hilbert problem. 
It follows that the zeros of the orthogonal polynomials tend to $\Sigma_1$, with the exception of
at most three spurious zeros.
\end{abstract}

\section{Introduction} \label{intro}

The normal matrix model is a probability measure on the space of $n \times n$
normal matrices $M$ of the form
\[ \frac{1}{Z_n} e^{-\frac{n}{t} \Tr (M M^* - V(M) - \overline{V}(M^*))} \, dM, \qquad t > 0, \]
with a given potential function $V$ and $\overline{V}(z)=\overline{V(\overline{z})}$.
In the limit as $n \to \infty$ the eigenvalues of $M$ fill out a two-dimensional domain
$\Omega = \Omega(t)$, called the droplet, see Figure \ref{fig:subcriticalgrowth}, whose boundary evolves 
according to Laplacian growth (also known as Hele-Shaw flow) 
as the time parameter $t$ increases. See \cite{TBAZW, WZ} and the surveys \cite{MPT,Zab}.

The eigenvalues of $M$ are a determinantal point process that is analyzed by polynomials 
$P_{k,n}$, $\deg P_{k,n} = k$,  that are orthogonal with respect to the inner product 
\begin{equation} \label{innerproduct} 
	\langle f, g \rangle = \iint_{\mathbb C} f(z) \overline{g(z)} e^{-\frac{n}{t}(|z|^2 - V(z) - \overline{V(z)})} \, dA(z) 
	\end{equation}
where $dA$ denotes Lebesgue measure on the complex plane, see \cite{Elb}. The inner product varies with $n$.
The zeros of the diagonal polynomials $P_{n,n}$ do not fill out the droplet,
but instead are believed to  cluster on certain one-dimensional arcs  inside $\Omega$. These arcs
are referred to as the motherbody or skeleton. The orthogonal polynomials were analyzed under
various situations cases in \cite{BBLM, Elb, IT}.

In the interesting case where $V$ is a polynomial of degree $\geq 3$, the above has to be modified
since the integrals do not converge. This is done by Elbau and Felder \cite{Elb, EF} by using a cut-off, or by
Ameur, Hedenmalm and Makarov \cite{AHM1,AHM2} by a modification of $V$ outside the droplet.
These approaches work for $t$ small enough, and in fact up to a critical time $t_*$ when
the boundary of the droplet develops one or more cusps.
At the same time the motherbody meets the boundary at the cusp. 

After criticality the Laplacian growth breaks down, but it may be continued in a weak averaged
form as in \cite{LTW2,LTW3}.   A main feature of the supercritical regime is the appearance of one-dimensional
arcs  (we call them whiskers) that point out of the droplet. In the context of \cite{LTW2} these
whiskers are interpreted as pressure shock waves. Appearance of the whiskers in the supercritical
regime was also observed in \cite{BT} (for complex orthogonal polynomials with some exponential weight).

We are going to analyze this phenomenon in the simplest case where $V(z) = \frac{1}{3} z^3$
is a polynomial of degree $3$. In this case we want to view the Laplacian growth as the
evolution of an algebraic equation (the spectral curve) also after the critical time.
Before criticality the spectral curve has genus zero. After criticality the curve
has higher genus and it is characterized by the Boutroux condition, which means that all
the periods of a certain meromorphic differential are purely imaginary.

We show in the model with a cubic potential that we can select such a curve
with the Boutroux condition for  $t \in (t_*, t_{**})$ where $t_{**}$ is a {finite} second
critical time. 
Then the motherbody consists of
\[ \Sigma_1 = \Sigma_1^o \cup \Sigma_1^w \]
where $\Sigma_1^o$ is the part that remains from the original motherbody  
and $\Sigma_1^w$ are the whiskers, that (partly) stick out of the droplet $\Omega$.
At the second criticality $t_{**}$ we find that both $\Omega$ and $\Sigma_1^o$ 
disappear, and only the whiskers $\Sigma_1^w$ remain. 

We also follow in this paper the approach of 
 \cite{BK} where the orthogonal polynomials are replaced by polynomials that
are orthogonal with respect to a certain  bilinear form. This bilinear form
is well-defined in the cubic model for all time. Before critical time
it was shown in \cite{BK} that the zeros of the polynomials $P_{n,n}$ 
tend to the motherbody $\Sigma_1$ with a limiting probability distribution $\mu_1$.
See \cite{KL} for an extension to higher degree potentials.
After criticality we find that the Boutroux condition guarantees
the existence of a certain probability measure $\mu_1$ on $\Sigma_1$ and
we prove that all but at most three zeros of the polynomials $P_{n,n}$ 
cluster on $\Sigma_1$ with $\mu_1$ as limiting zero counting measure.
This result further supports the use of the Boutroux condition after criticality.
From a different perspective, the Boutroux condition was also suggested 
in \cite{LTW3}, Section 2.2 as a possible way to define the
evolution of the Laplacian growth beyond criticatility.

We next give a more precise statement of our results.

\section{Statement of results}
\subsection{Spectral curves and Boutroux condition} \label{sec:Boutroux}
As it was already stated, we study the model of Laplacian growth in the plane with a cubic potential
$V(z) = \frac{1}{3} z^3$. In \cite{BK} the  more general potential $V(z) = \frac{t_3}{3} z^3$
with $t_3 > 0$ was studied, but it can be reduced to $t_3=1$ by a simple scaling.

\begin{figure}[t] 
\centering
\begin{overpic}[width=5cm,height=5cm]{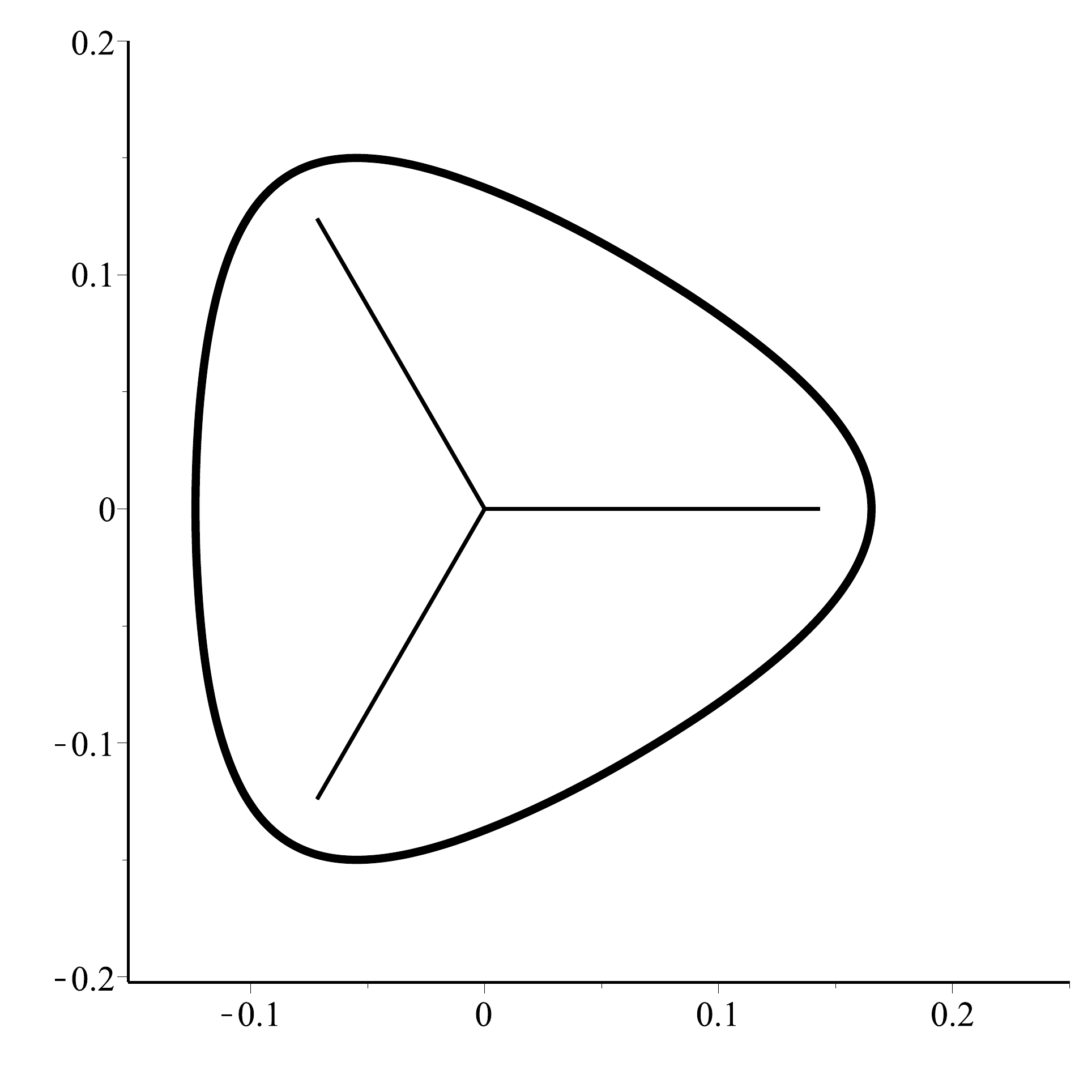}
	\put(70,48){$z_1$}
	\put(31,77){$\omega z_1$}
	\put(30,25){$\omega^2 z_1$}
	\put(65,75){$\partial \Omega$}
\end{overpic}
\qquad
\begin{overpic}[width=5cm,height=5cm]{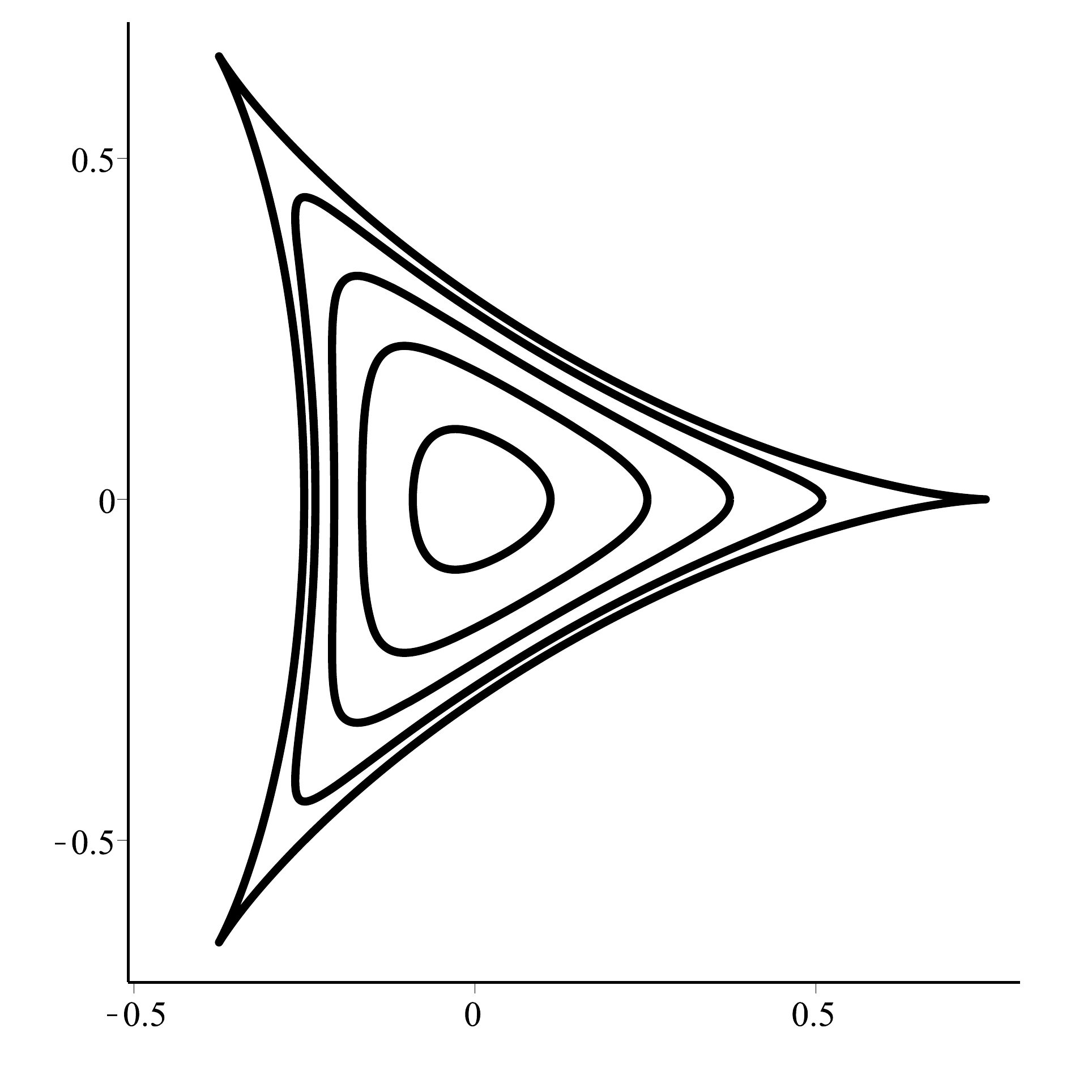}
\end{overpic}
\caption{Laplacian growth in the subcritical case. 
Left panel shows the boundary of the droplet $\Omega$ and skeleton $\Sigma_1$ for the value $t=0.02$. 
Right panel show the growth of $\partial \Omega$ for the values $t=0.01, 0.04, 0.07, 0.10, 0.125$.
\label{fig:subcriticalgrowth}}
\end{figure}

Before the first criticality $t_*$ the Laplacian growth is governed by the algebraic equation
\begin{equation} 
\label{def:speccurve}
	P(\xi, z)  = \xi^3 - z^2 \xi^2 - (1+t) z \xi + z^3 + A = 0
	\end{equation}
with {the $t$-dependent} constant 
\begin{equation} \label{def:A1}
 A = A_1(t) =   \frac{1}{32} \left(1 + 20 t - 8 t^2 - (1-8t)^{3/2} \right), \qquad t < t_* = \frac{1}{8},
\end{equation}
see \cite[Lemma 4.3]{BK}.
The choice of constant \eqref{def:A1} is dictated by the requirement that the Riemann surface {$\mathcal R$}
associated with \eqref{def:speccurve} has genus zero.

For $A = A_1(t)$, there are three branch points $z_1$, $\omega z_1$, $\omega^2 z_1$
and three nodes $z_2$, $\omega z_2$, $\omega^2 z_2$ {of $\mathcal R$}, where $z_2 > z_1 > 0$ and $\omega = e^{2\pi i/3}$.
Precise values are $z_1 = \frac{3}{4} \left(1- \sqrt{1-8 t} \right)^{2/3}$ and $z_2 = \frac{1}{4} \left(3 + \sqrt{1-8t}\right)$.

There is a solution $\xi_1(z)$ of \eqref{def:speccurve} that behaves like
\begin{equation} \label{xi1:asymptotics} 
	\xi_1(z) = z^2 + t z^{-1} + O(z^{-4}), \qquad \text{ as } z \to \infty. 
	\end{equation}
The solution \eqref{xi1:asymptotics} has an analytic continuation to the domain $\mathbb C \setminus \Sigma_1$ where
\begin{equation} \label{Sigma1}
	\Sigma_1 =  [0, z_1] \cup [0, \omega z_1] \cup [0, \omega^2 z_1].
	\end{equation}
Then the equation $\xi_1(z) = \overline{z}$ with $z \in \mathbb C \setminus \Sigma_1$ 
defines the boundary of a domain $\Omega(t)$ that evolves according to the model of Laplacian growth,
see \cite[Theorem 2.6]{BK}.
Putting $\xi = \overline{z}$ in \eqref{def:speccurve} we find that $\partial \Omega(t)$ is characterized
by the equation
\begin{equation} \label{dOmega} 
	\partial \Omega(t) : \quad 2 \Re(z^3) - |z|^4 - (1+t) |z|^2 + A_1(t) = 0. 
	\end{equation}
The boundary $\partial \Omega(t)$ encloses the motherbody $\Sigma_1$,
and the nodes $z_2$, $\omega z_2$, $\omega^2 z_2$ are exterior to $\partial \Omega(t)$.
Note that $z= \omega^j z_2$ for $j=0,1,2$ also satisfies the equation \eqref{dOmega}, 
but these are isolated points, and are not considered to be part of $\partial \Omega(t)$.

In the cut-off model of Elbau and Felder \cite{Elb,EF}, the eigenvalues in the normal matrix model cluster
on $\Omega(t)$ with uniform density. The zeros of the orthogonal polynomials accumulate
on the set $\Sigma_1$ with a probability measure $\mu_1$ on $\Sigma_1$ as the limit of the normalized
zero counting measures. The measure has the property that
\[ \int_{\Sigma_1} \frac{d\mu_1(s)}{z-s} = \frac{1}{\pi t} \iint_{\Omega} \frac{dA(s)}{z-s}, 
\qquad z \in \mathbb C \setminus \Omega. \]
At the critical $t= t_*$ the three branch points $\omega^j z_1$, $j=0,1,2$, come to the boundary $\partial \Omega(t)$,
which develops three cusps. Also the node $z_2$ then coincides with $z_1$,
see Figure \ref{fig:subcriticalgrowth}.

For the supercritical $t > t_*$, we are still working with a spectral curve of the form \eqref{def:speccurve} 
but with a different determination of $A = A(t)$. 
The corresponding Riemann surface $\mathcal R$ has now genus three, where the branch points 
of $\mathcal R$ can be obtained through the 
discriminant $ D(P)(z) = Q(z^3) $ of \eqref{def:speccurve} with respect to the variable $\xi$ 
(the cubic polynomial $Q(w)$ is given in \eqref{def:Q}).
The evolution is now described by the following
condition.

\begin{definition}\label{def-Boutroux}
Let $t > 0$ and $A \in \mathbb R$. We say that the meromorphic differential $\xi dz$ 
defined on the compact Riemann surface $\mathcal R$ associated with the equation \eqref{def:speccurve} 
has the \textbf{Boutroux condition}  if all the periods of $\xi dz$ are purely imaginary. That is,
\begin{equation} \label{cond:Boutroux} 
	\oint_{\gamma} \xi dz \in i \mathbb R 
	\end{equation}
for every closed contour $\gamma$ on $\mathcal R$ that avoids the poles of $\xi dz$.
\end{definition}
The poles of $\xi dz$ are at the two points at infinity, with real residues $\pm t$,
see e.g.\ \eqref{eq:xiatinfinity}.
The condition \eqref{cond:Boutroux} is therefore satisfied for contours $\gamma$ that only 
go around the poles and are homotopic to zero on $\mathcal R$. 
In particular the Boutroux condition is satisfied if the Riemann surface
 has genus zero.



In the supercritical case $t > t_*$, the Riemann surface associated with \eqref{def:speccurve}
has genus three (unless $A=0$ in which case the genus
is two) and then \eqref{cond:Boutroux} presents a condition on the $\xi dz$ periods of the 
non-trivial cycles $\gamma$ on the  surface. 

It turns out that the cubic equation \eqref{def:speccurve} has nine branch points, namely
$\omega^j z_k$, $j=0,1,2$, $k=1,2,3$ where $\omega = e^{2\pi i/3}$ and  $z_1, z_2, z_3$
are the branch points lying in the sector
\begin{equation} \label{eq:S0} 
		S_0 = \{ z \in \mathbb C \mid - \tfrac{\pi}{3} < \arg z < \tfrac{\pi}{3} \}.
\end{equation}
In the case of interest we can take $z_1$ to be real and $z_3 = \overline{z}_2$ with $\Im z_2 > 0$.
The three sheeted Riemann surface $\mathcal R$ associated with \eqref{def:speccurve} has the 
sheet structure as in Figure \ref{fig:three-sheets}.

\begin{figure}[!t]
\centering
\subfigure[$\mathcal R_1 = \mathbb C \setminus \Sigma_1$]{
\begin{tikzpicture}[scale=2]
\draw[very thick] (0,0)--(0.608,0) node[left,above]{$z_1$};
\draw[very thick] (0.608,0)--(0.821,0.135) node[right]{$z_2$};
\draw[very thick] (0.608,0)--(0.821,-0.135) node[right]{$z_3$};
\draw[very thick] (0,0)--(-0.304,0.527) node[right]{$\omega z_1$};
\draw[very thick] (-0.304,0.527)--(-0.527,0.643) node[left]{$\omega z_2$};
\draw[very thick] (-0.304,0.527)--(-0.294,0.778) node[right]{$\omega z_3$};
\draw[very thick] (0,0)--(-0.304,-0.527) node[right]{$\omega^2 z_1$};
\draw[very thick] (-0.304,-0.527)--(-0.527,-0.643) node[left]{$\omega^2 z_3$};
\draw[very thick] (-0.304,-0.527)--(-0.294,-0.778) node[right]{$\omega^2 z_2$};
\draw  (-1.0,-1.0) rectangle (1.5,1.0);
\end{tikzpicture}
}  
\end{figure}
\begin{figure}[!t] \setcounter{subfigure}{1}
\centering
\subfigure[$\mathcal R_2 = \mathbb C \setminus (\Sigma_1 \cup \Sigma_2)$]{
\begin{tikzpicture}[scale=2]
\draw[very thick] (0,0)--(0.608,0) node[left,above]{$z_1$};
\draw[very thick] (0.608,0)--(0.821,0.135) node[right]{$z_2$};
\draw[very thick] (0.608,0)--(0.821,-0.135) node[right]{$z_3$};
\draw[very thick] (0,0)--(-0.304,0.527) node[right]{$\omega z_1$};
\draw[very thick] (-0.304,0.527)--(-0.527,0.643) node[left]{$\omega z_2$};
\draw[very thick] (-0.304,0.527)--(-0.294,0.778) node[right]{$\omega z_3$};
\draw[very thick] (0,0)--(-0.304,-0.527) node[right]{$\omega^2 z_1$};
\draw[very thick] (-0.304,-0.527)--(-0.527,-0.643) node[left]{$\omega^2 z_3$};
\draw[very thick] (-0.304,-0.527)--(-0.294,-0.778) node[right]{$\omega^2 z_2$};
\draw[dashed](0,0)--(0.456,0.7905);
\draw[dashed](0,0)--(0.456,-0.7905);
\draw[dashed](0,0)--(-0.912,0);
\draw  (-1.0,-1.0) rectangle (1.5,1.0);
\end{tikzpicture}
} 
\subfigure[$\mathcal R_3 = \mathbb C \setminus \Sigma_2$]{
\begin{tikzpicture}[scale=2]
\draw[dashed](0,0)--(0.456,0.7905);
\draw[dashed](0,0)--(0.456,-0.7905);
\draw[dashed](0,0)--(-0.912,0);
\draw  (-1.0,-1.0) rectangle (1.5,1.0);
\end{tikzpicture}
} 

\caption{The three sheets $\mathcal R_1$, $\mathcal R_2$ and $\mathcal R_3$ of the Riemann surface \label{fig:three-sheets}}
\end{figure}

The restriction to the three sheets defines three functions that have the
asymptotic behavior
	\begin{equation} \label{eq:xiatinfinity}
\begin{aligned}
	\xi_1(z) & = z^2 + t z^{-1} + O(z^{-4}),  && \text{ as } z \to \infty, \\
	\xi_2(z) & = z^{1/2}  - \frac{1}{2} t z^{-1} + O(z^{-5/2}), && \text{ as } z \to \infty, \, z \in S_0, \\
	\xi_3(z) & = -z^{1/2} - \frac{1}{2} t z^{-1} + O(z^{-5/2}), && \text{ as } z \to \infty, \, z \in S_0. 
	\end{aligned}
\end{equation}
The first two sheets are connected by the cuts
\begin{align} \label{eq:Sigma1} 
	\Sigma_1 & = \Sigma_1^o \cup \Sigma_1^w, \qquad 
	\Sigma_1^o  = \bigcup_{j=0}^2 [0,\omega^j z_1]
	\end{align}
where $\Sigma_1^w$ are cuts that connect $\omega^j z_1$ with $\omega^j z_2$ and $\omega^j z_3$
for $j=0,1,2$. The sheets $\mathcal R_2$ and $\mathcal R_3$ are connected via the cut
\begin{align} \label{eq:Sigma2}
	\Sigma_2 = \{ z \in \mathbb C \mid z^3 \in \mathbb R^-\}.
	\end{align}

We refer to $\Sigma_1^w$ as  whiskers that stick out of the branch points $\omega^j z_1$ for $j=0,1,2$,
see Figure \ref{fig:whiskers}.
The whiskers that connect the branch points $\omega^j z_1$, $\omega^j z_2$ and $\omega^j z_3$ 
for $j=1,2,3$, are arbitrary at this point,
but it turns out that they can be defined in a special way if the Boutroux condition is satisfied.
This is part of our main result which we state as follows.

\begin{figure}[!t] 
\centering
\begin{overpic}[width=5cm,height=5cm]{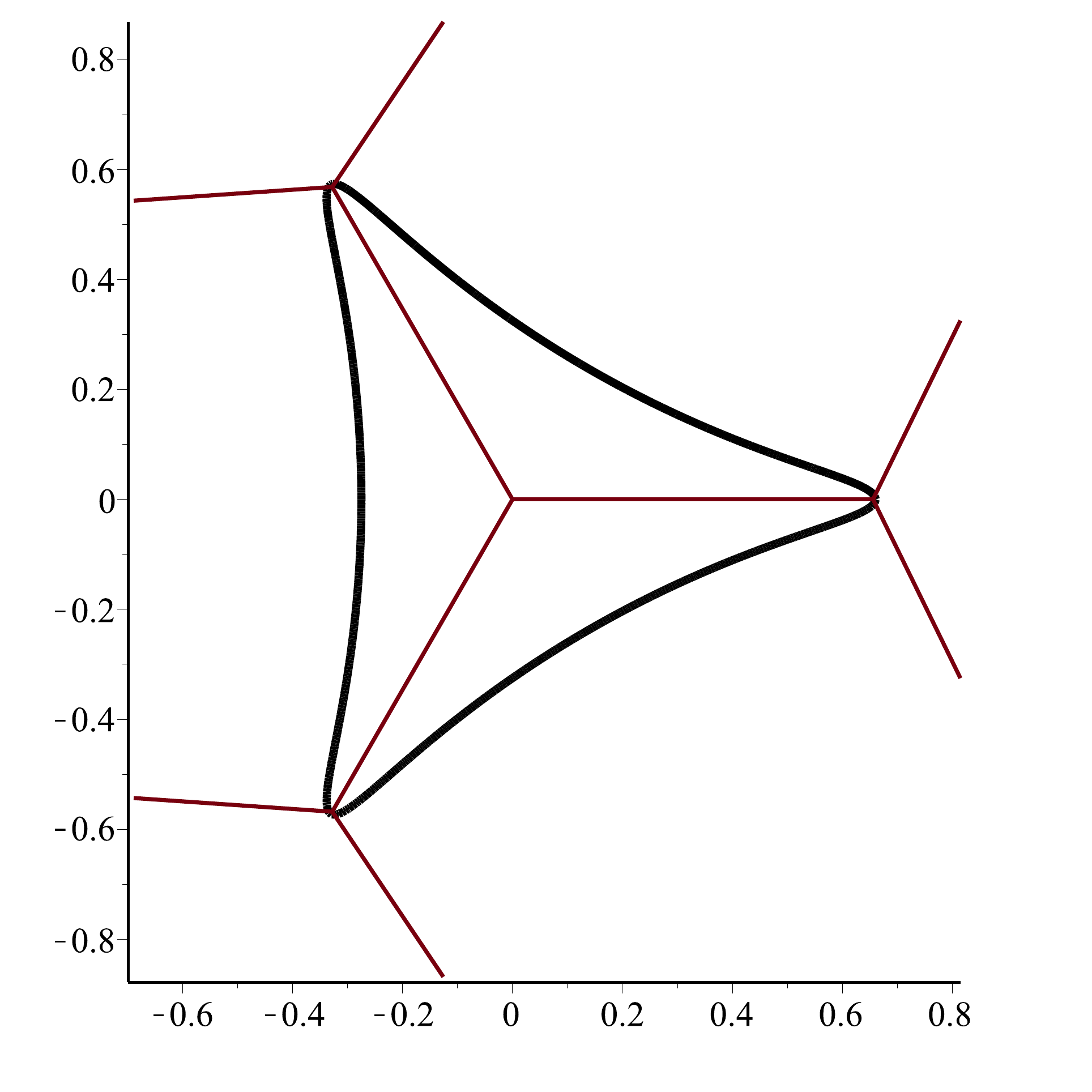}
	\put(40,52){$0$}
	\put(82,52){$z_1$}
	\put(88,70){$z_2$}
	\put(88,35){$z_3$}
	\put(58,67){$\partial \Omega$}
\end{overpic}
\qquad
\begin{overpic}[width=5cm,height=5cm]{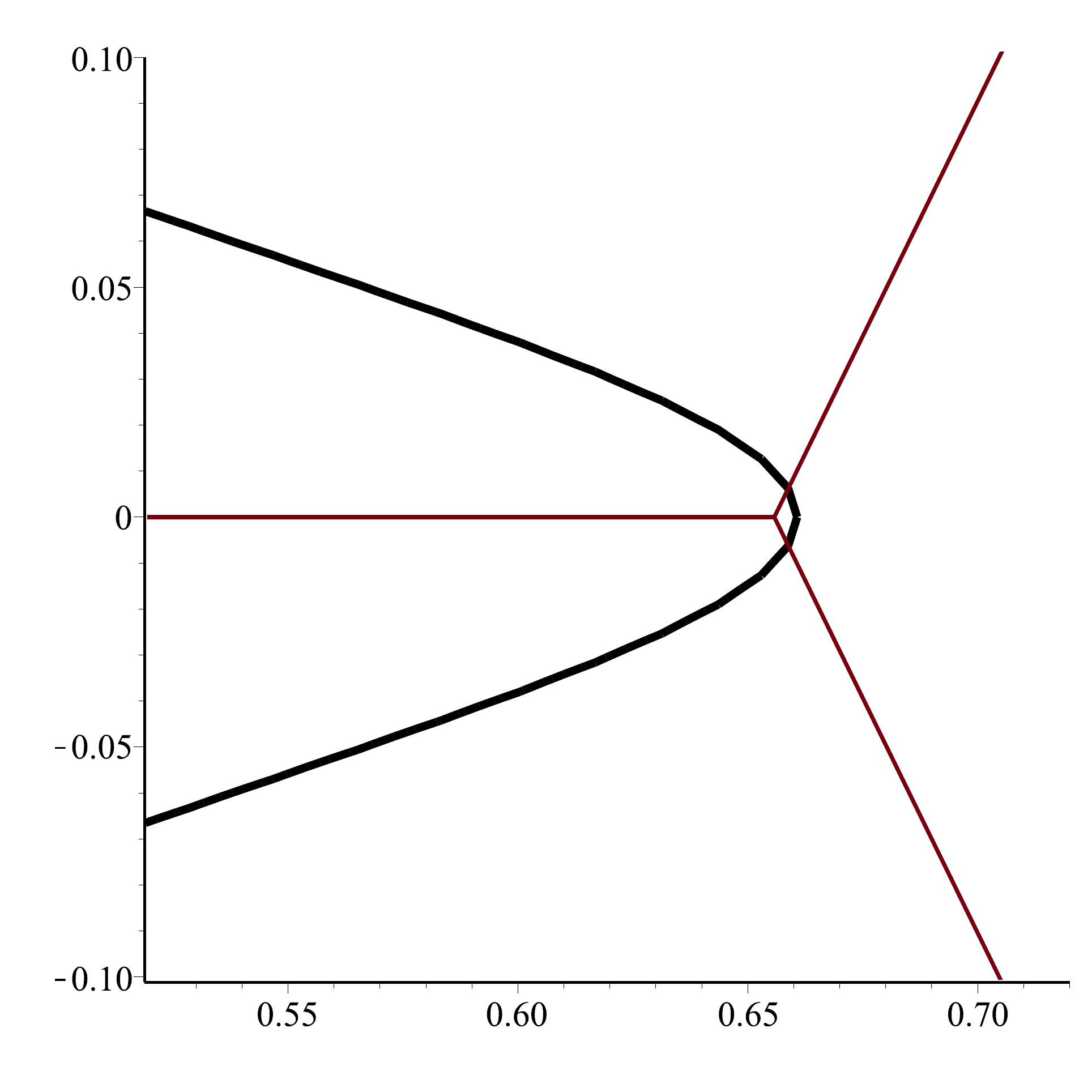}
	\put(74,50){$z_1$}
	\put(45,70){$\partial \Omega$}
\end{overpic}
\caption{Laplacian growth in the supercritical regime. Left panel shows
the domain  $\Omega(t)$ and the set $\Sigma_1(t)$ for $t=0.2$.
The right panel zooms in on a neighborhood of  $z_1$, which lies inside $\Omega(t)$. Most of the
whiskers stick out of $\Omega(t)$.
 \label{fig:whiskers}}
\end{figure}  

\begin{theorem} \label{theorem1}
\begin{enumerate}
{\item[\rm (a)]}
There exist $t_{**} > t_*$ and a unique continuous function $A : [t_*, t_{**}) \to (0, \infty)$ 
with $A(t_*) = A_1(t_*) = \frac{27}{256}$ and
\begin{equation} \label{boundaryA} 
 \lim_{t \to t_{**}-} A(t) = 0
	\end{equation}
such that the differential $\xi dz$ has the Boutroux condition
for every $t \in (t_*, t_{**})$ and $A = A(t)$. Moreover, for every $t \in (t_*, t_{**})$ and $A = A(t)$:
\item[\rm (b)] There is a simple analytic arc $\gamma_{1,2}$
from $z_1$ to $z_2$ lying in the sector $S_0$ such that
\begin{equation} \label{trajgamma12} 
\Re\int_{z_1}^{z} (\xi_1(s) - \xi_2(s)) ds =0, \qquad \text{for every } z \in \gamma_{1,2}. 
	\end{equation}
We define $\gamma_{1,3} = \overline{\gamma_{1,2}}$. Then $\gamma_{1,3}$
is a simple analytic arc  from $z_1$ to $z_3$ such that
\begin{equation} \label{trajgamma13} 
\Re\int_{z_1}^{z} (\xi_1(s) - \xi_2(s)) ds =0, \qquad \text{for every } z \in \gamma_{1,3}.
	\end{equation}
\item[\rm (c)] Let  $\Sigma_1$ be as in \eqref{eq:Sigma1} with
\begin{equation} \label{eq:Sigma1w} 
	\Sigma_1^w = \bigcup_{j=0}^2 \omega^j (\gamma_{1,2} \cup \gamma_{1,3}) 
	\end{equation}
where $\gamma_{1,2}$ and $\gamma_{1,3}$ are as in part  {({b})}. Then
\begin{equation} \label{eq:mu} 
	d\mu_1(s) = \frac{1}{2\pi i t} (\xi_{2,+}(s)-\xi_{1,+}(s)) ds, \qquad s \in \Sigma_1 
	\end{equation}
defines a probability measure on $\Sigma_1$. Here all parts in $\Sigma_1$ are oriented outwards, 
that is, away from the origin,
$ds$ is the complex line element that is compatible with this orientation 
and $\xi_{1,+}$, 
$\xi_{2,+}$ denote the limiting values of $\xi_{1,2}$ if we approach $\Sigma_1$ from the left 
(and $\xi_{1,-}$, $\xi_{2,-}$ are the limiting values from the right).
\item[\rm (d)] The equation 
\begin{equation} \label{dOmegasuper} 
	\partial \Omega(t) : \quad 2 \Re(z^3) - |z|^4 - (1+t)|z|^2 + A = 0 
	\end{equation}
 defines the boundary of a domain $\Omega(t)$ that is such that 
 the points $ \omega^j z_1$, $j=0,1,2$ are inside and the points $\omega^j z_2$, $\omega^j z_3$, $j=0,1,2$, 
are outside $\Omega(t)$, and
\begin{equation} \label{eq:areak} 
 \frac{1}{2\pi i} \int_{\partial \Omega(t)} \frac{\overline{s}}{s^k} ds\textsc{} + t \int_{\Sigma_1^w} \frac{d\mu_1(s)}{s^k} 
	= \begin{cases} t & \text{ for } k = 0, \\
		1 & \text{ for } k = 3, \\
		0 & \text{ otherwise.} \end{cases} 
		\end{equation}
\end{enumerate}
In particular the case $k=0$ in \eqref{eq:areak} reduces (by Green's theorem) to
\begin{equation} \label{eq:area}
	t = \frac{1}{\pi} \area \Omega (t)  + t  \mu_1(\Sigma_1^w).  
	\end{equation}
\end{theorem}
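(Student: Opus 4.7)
The Riemann surface $\mathcal R$ associated with \eqref{def:speccurve} inherits the $\mathbb Z_3$-symmetry $(z,\xi)\mapsto(\omega z,\omega^2\xi)$, and the associated action on $H_1(\mathcal R)$ together with the hyperelliptic-type involution reduces the Boutroux condition, a priori six real equations in genus three, to a single real equation in the single real unknown $A$. At $t=t_*$, the surface degenerates (nodes collide with the original branch points) and $A_1(t_*)=27/256$ satisfies the condition trivially. I would then apply the implicit function theorem: writing the surviving period as $F(t,A)=\Re\oint_\gamma \xi\,dz$ on an appropriate cycle $\gamma$, the derivative $\partial_A F$ equals a period of the holomorphic differential $dz/P_\xi(\xi,z)$, which is nonzero. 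This gives local continuation; extended maximally one obtains $A:[t_*,t_{**})\to(0,\infty)$. The endpoint $t_{**}$ is then identified as the first time $A(t)\to 0$, using that the only alternative breakdown (a branch-point collision on $\Sigma_1^o$) would correspond exactly to $z_1\to 0$, i.e.\ $A\to 0$ in the trace of the discriminant $Q(z^3)$.

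\textbf{Parts (b) and (c).} The relations \eqref{trajgamma12}--\eqref{trajgamma13} identify $\gamma_{1,2}$ and $\gamma_{1,3}$ as horizontal trajectories of the quadratic differential $(\xi_1-\xi_2)^2\,dz^2$ emanating from $z_1$; the relation $\gamma_{1,3}=\overline{\gamma_{1,2}}$ reflects the real symmetry of \eqref{def:speccurve}. The Boutroux condition precisely excludes recurrent trajectories, and a standard continuation/orientation argument (tracking the three Stokes rays leaving $z_1$, one of which points into $S_0$ at $t=t_*$) shows that the distinguished trajectory lands at $z_2$ rather than at another critical point. For (c), positivity of $\frac{1}{2\pi it}(\xi_{2,+}-\xi_{1,+})\,ds$ on $\Sigma_1^o$ is inherited by continuity in $t$ from the subcritical case in \cite{BK}, while along $\Sigma_1^w$ the trajectory condition \eqref{trajgamma12} makes $(\xi_1-\xi_2)ds$ purely imaginary, and orientation outward from $z_1$ fixes the sign. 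Total mass is computed by deforming $\Sigma_1$ to a large circle at infinity (permissible since $\xi_2$ is analytic across $\Sigma_1^o$ on the relevant sheet pair), leaving only the residue of $\xi_1\,dz$ at $\infty$, which by \eqref{eq:xiatinfinity} equals $-t$, yielding $\mu_1(\Sigma_1)=1$.

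\textbf{Part (d).} Substituting $\xi=\bar z$ into \eqref{def:speccurve} yields \eqref{dOmegasuper}; the correct branch is $\bar z=\xi_1(z)$ (the one growing like $z^2$), selected by continuity from the subcritical case. That the resulting level set is a simple closed curve enclosing the $\omega^jz_1$ and excluding the $\omega^jz_{2,3}$ is proved by a continuation argument in $t$ starting from the subcritical picture and using the Boutroux-controlled branch-point positions to rule out any reversal. For the moments \eqref{eq:areak}, replace $\bar s=\xi_1(s)$ on $\partial\Omega$ and deform the contour outward to infinity. The deformation crosses only the portion of $\Sigma_1^w$ lying outside $\Omega(t)$, picking up $\oint (\xi_{1,+}-\xi_{1,-})s^{-k}\,ds$ over those arcs; by the sheet structure of Figure \ref{fig:three-sheets} this jump equals $\xi_1-\xi_2$, producing precisely the term $-2\pi it\int_{\Sigma_1^w}s^{-k}d\mu_1$. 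The contour at infinity contributes $-2\pi i$ times $\mathrm{Res}_{s=\infty}\xi_1(s)s^{-k}\,ds$, which by \eqref{eq:xiatinfinity} evaluates to $2\pi it$ for $k=0$, to $2\pi i$ for $k=3$, and to $0$ otherwise; rearrangement yields \eqref{eq:areak}, and \eqref{eq:area} then follows from Green's theorem applied to $k=0$.

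\textbf{Main obstacle.} The bulk of the work lies in (a): verifying that the symmetry reduction genuinely leaves a single real condition and that the corresponding implicit function theorem is non-degenerate \emph{uniformly} along the whole trajectory $t\mapsto A(t)$, so that continuation does not abort before $A(t)\to 0$. A secondary difficulty, which feeds into (b) and (d), is tracking global topology of the Stokes graph and of $\partial\Omega(t)$ throughout the continuation, in particular ensuring that no unexpected saddle-to-saddle trajectories or self-intersections appear in the intermediate regime.
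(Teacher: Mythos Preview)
Your outline captures the right skeleton---symmetry reduction to one real condition, the holomorphic differential $\partial\xi/\partial A\,dz$ for monotonicity, trajectory analysis, contour deformation---and parts (b) and (c) are close in spirit to the paper. There are, however, two genuine gaps.

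\textbf{Part (a).} The implicit function theorem plus non-vanishing of $\partial_A F$ gives local continuation, but nothing in your argument forces the maximal continuation to terminate at a \emph{finite} $t_{**}$ with $A(t)\to 0$. Since $\partial_A F\neq 0$ throughout the open region $0<A<A_3(t)$, the IFT never degenerates there, so the only way the curve $t\mapsto A(t)$ can stop is by exiting through $A=0$ or through $A=A_3(t)$; you give no reason to exclude the latter, nor to exclude $A(t)$ remaining positive for all $t$. The paper supplies exactly this missing barrier by an explicit computation (its Lemma~\ref{lemma:hnotzero}): one shows $h(t,A_3(t))>0$ for all $t\in[t_*,8]$. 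Since $A_3(8)=0$ this yields $h(8,0)>0$, and combined with $h(t_*,0)<0$ (from strict monotonicity in $A$ and $h(t_*,A_1(t_*))=0$) one gets a finite $t_{**}\in(t_*,8)$ with $h(t_{**},0)=0$. Note also that you need a definite \emph{sign} of $\partial_A h$, not just non-vanishing, to get global uniqueness of $A(t)$; the paper fixes the sign by comparing with the known value at $(t_*,A_1(t_*))$.

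\textbf{Part (d).} The substitution $\bar s=\xi_1(s)$ on all of $\partial\Omega$ is not correct in the supercritical regime. The paper shows that once $t>t_*$ a portion $(\partial\Omega)_2$ of the boundary satisfies $\bar s=\xi_2(s)$: the Schwarz function has migrated to the second sheet near the positive real axis (and its $\omega$-rotates). Hence your outward deformation does not simply pick up the jump of $\xi_1$ across the exterior part of $\Sigma_1^w$; there is an extra term $\int_{(\partial\Omega)_2}(\xi_1-\xi_2)s^{-k}\,ds$, which the paper handles by a second deformation of that arc onto $\Sigma_1^w\cap\Omega$. It is precisely this step that produces the integral over the \emph{entire} $\Sigma_1^w$ in \eqref{eq:areak}, not just the part outside $\Omega$. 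Relatedly, your continuation-in-$t$ argument for why \eqref{dOmegasuper} defines a simple closed curve is replaced in the paper by a direct elementary proof in polar coordinates (Lemma~\ref{lemma:dOmegapolar}), which works for every $t>1/8$ and $A>0$ and also underlies the location of $z_1$ inside and $z_{2},z_3$ outside.

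A smaller point on (b): you propose to track the three trajectories leaving $z_1$ and continue from $t=t_*$. At $t_*$ the points $z_1,z_2,z_3$ coincide, so this continuation is from a degenerate configuration and needs care. The paper sidesteps this by working at a fixed $t\in(t_*,t_{**})$ and tracking the three trajectories from $z_2$ instead, using concrete lemmas (one zero of $H$ on the ray $\arg z=\pi/3$, positivity of $H$ on $(z_1,\infty)$, asymptotics at infinity) to force exactly one of them to land at $z_1$.
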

The proof of Theorem \ref{theorem1} is in section \ref{section:proof1}.
The value of $t_{**}$ has been calculated numerically and it is approximately
\[ t_{**} = 3.6 \cdots \]

The equations \eqref{eq:areak} represent the continuation of the Laplacian growth after
criticality. In fact the equations \eqref{eq:areak} also holds before criticality,
since then $\Sigma_1^w = \emptyset$ and \eqref{eq:areak} reduces to the equations
that characterize the Laplacian growth in terms of the exterior harmonic moments of the 
droplet \cite{WZ}. After criticality these equations are modified by the extra term $t \int_{\Sigma_1^w} \frac{d\mu_1(s)}{s^k}$
representing the contribution from the whiskers. See Figure \ref{fig:whiskers} for an illustration.

\begin{figure}[t] \centering
\includegraphics[width=6cm,height=6cm]{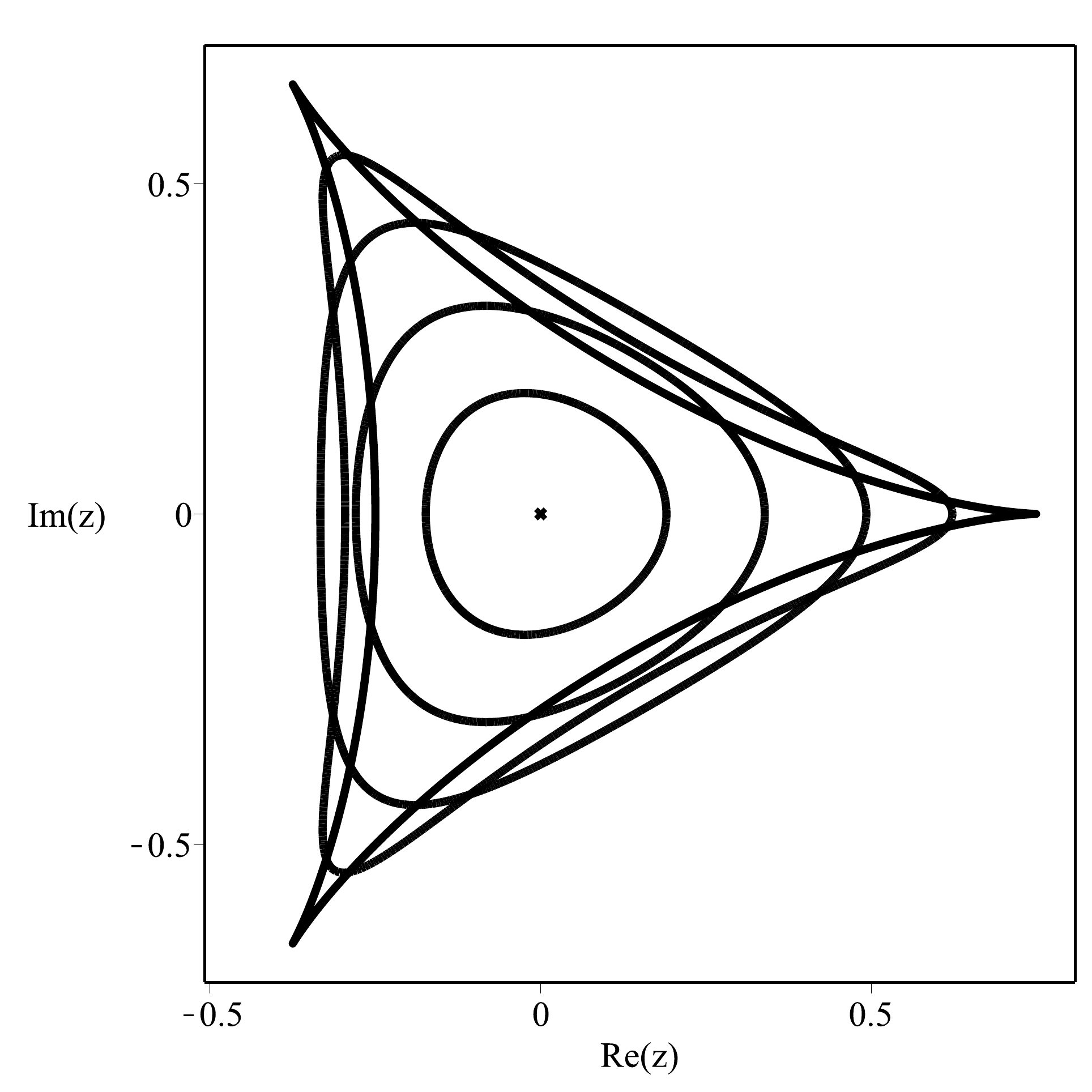}
\caption{Evolution of the domain $\Omega(t)$ in the supercritical case
for the values $t=0.125, 0.3, 1.0, 2.0, 3.0$ and $3.6$.
The domain does not grow with $t$ anymore, but starts to shrink and eventually
collapses to a point. 
\label{fig:Omega-supercritical}}
\end{figure}

In the supercritical case, the domain $\Omega(t)$ is no longer growing with $t$, 
see Figure \ref{fig:Omega-supercritical}. In fact, the following corollary shows that
 $t=t_{**}$ represents a second criticality, see also Figure \ref{fig:secondcrit}.
 
\begin{corollary} \label{cor-1}
 At $t=t_{**}$ we have $z_1(t_{**})=0$, whereas $z_{2}(t_{**})$,  $z_{3}(t_{**})$  
are solutions of 
\begin{equation} \label{eq-zt2}
 z^6+ \frac{t^2+20t-8}{4} z^3 +(1+t)^3=0
 \end{equation}
{with $t=t_{**}$.}
  Moreover,  at  $t = t_{**}$  the domain $\Omega(t)$ reduces to a point {at the origin.} 
 \end{corollary}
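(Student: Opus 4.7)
The plan is to exploit the boundary condition \eqref{boundaryA} from Theorem~\ref{theorem1}(a), namely that $A(t)\to 0$ as $t\to t_{**}-$, so that every computation reduces to setting $A=0$ in the relevant algebraic identities.

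First I would compute the discriminant $D(P)(z)$ of the cubic \eqref{def:speccurve} with respect to $\xi$ using the standard formula $18pqr-4p^{3}r+p^{2}q^{2}-4q^{3}-27r^{2}$ with $p=-z^{2}$, $q=-(1+t)z$, $r=z^{3}+A$. A direct calculation yields
\[
 D(P)(z) = 4z^{9}+(t^{2}+20t-8+4A)z^{6}+\bigl(18A(t-2)+4(1+t)^{3}\bigr)z^{3}-27A^{2},
\]
confirming that $D(P)(z)=Q(z^{3})$ for a cubic polynomial $Q$. Substituting $A=0$ factors this as
\[
 D(P)(z)\big|_{A=0}=4z^{3}\Bigl[z^{6}+\tfrac{t^{2}+20t-8}{4}z^{3}+(1+t)^{3}\Bigr].
\]
The triple root at $z=0$ is identified with $z_{1}(t_{**})$ as follows: by the continuous dependence of $A(t)$ on $[t_{*},t_{**}]$, the three distinguished branch points $z_{1}(t),z_{2}(t),z_{3}(t)\in S_{0}$ depend continuously on $t$, with $z_{1}(t)$ the unique real one and $z_{3}=\overline{z_{2}}$ a complex conjugate pair throughout the supercritical regime. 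Since the roots of the bracketed sextic come in complex conjugate pairs of triples $\{\omega^{j}z_{2},\omega^{j}z_{3}\}_{j=0,1,2}$, no real candidate other than $0$ is available, forcing $z_{1}(t_{**})=0$. The cube roots in $S_{0}$ of the two remaining roots of the quadratic factor then yield $z_{2}(t_{**}),z_{3}(t_{**})$, and by construction they satisfy exactly \eqref{eq-zt2}.

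For the collapse of $\Omega(t)$, I substitute $A=0$ and $z=re^{i\theta}$ into the boundary equation \eqref{dOmegasuper} to obtain
\[
 r^{2}\bigl[r^{2}-2r\cos(3\theta)+(1+t_{**})\bigr]=0.
\]
The quadratic factor in $r$ has discriminant $4\bigl(\cos^{2}(3\theta)-(1+t_{**})\bigr)$, which is strictly negative since $t_{**}>0$ while $\cos^{2}(3\theta)\le 1$. Thus the only solution is $r=0$, so $\partial\Omega(t_{**})=\{0\}$. Since $\Omega(t)$ is a bounded domain depending continuously on $t$ by Theorem~\ref{theorem1}(d) and encloses the origin for all $t\in(t_{*},t_{**})$, letting $t\to t_{**}-$ forces $\Omega(t)$ to shrink to the single point at the origin.

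The hardest step is really just the bookkeeping of identifying which of the nine branch points of $\mathcal{R}$ collides at the origin; this is resolved by the observation that $z_{1}$ is the unique real branch point in $S_{0}$ together with the continuity of $A(t)$. Everything else is an algebraic factorization plus the elementary inequality $1+t_{**}>\cos^{2}(3\theta)$.
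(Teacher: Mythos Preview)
Your proof is correct and follows essentially the same approach as the paper: set $A=0$ in the discriminant to factor out $z^3$ (equivalently, $w$ in $Q(w)$), and set $A=0$ in the polar form of \eqref{dOmegasuper} to see that only $r=0$ survives. You add some welcome extra care---the continuity argument singling out $z_1$ as the branch point that lands at the origin, and the explicit discriminant check for the quadratic in $r$---which the paper leaves implicit; note also that your factor $r^{2}$ in front of the bracket is the correct one (the paper writes $r$, a harmless slip).
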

\begin{proof} To prove Corollary \ref{cor-1}, it is sufficient to  observe that $A(t_{**})=0$ reduces: 
i) the discriminant $Q(w)$, given by \eqref{def:Q}, to 
\begin{equation} \label{Q(t_**)}
Q(w) = w[4w^2 + (t_{**}^2 + 20t_{**}  -8) w  + 4 (1+t_{**})^3 ],
\end{equation}		
and; ii) the expression \eqref{dOmegasuper} for $\partial \Omega(t_{**})$  in polar coordinates $z = re^{i\theta}$  
to 
\begin{equation} \label{Omega(t_**)}
	r(r^2-2r\cos 3\theta +1+t_{**})=0,
\end{equation}
with only solution $r = 0$, since the expression in brackets is positive.
\end{proof}

Solving  \eqref{eq-zt2}  we find the following explicit values for  $z_{2,3} = z_{2,3}(t_{**})$,
\[ z_{2,3} = \frac{1}{2} \left[8-20t_{**} - t_{**}^2 \pm i \sqrt{t_{**}(8-t_{**})^3} \right]^{\frac{1}{3}}. \]

\begin{figure}[t] 
\centering
\begin{overpic}[scale=0.6]{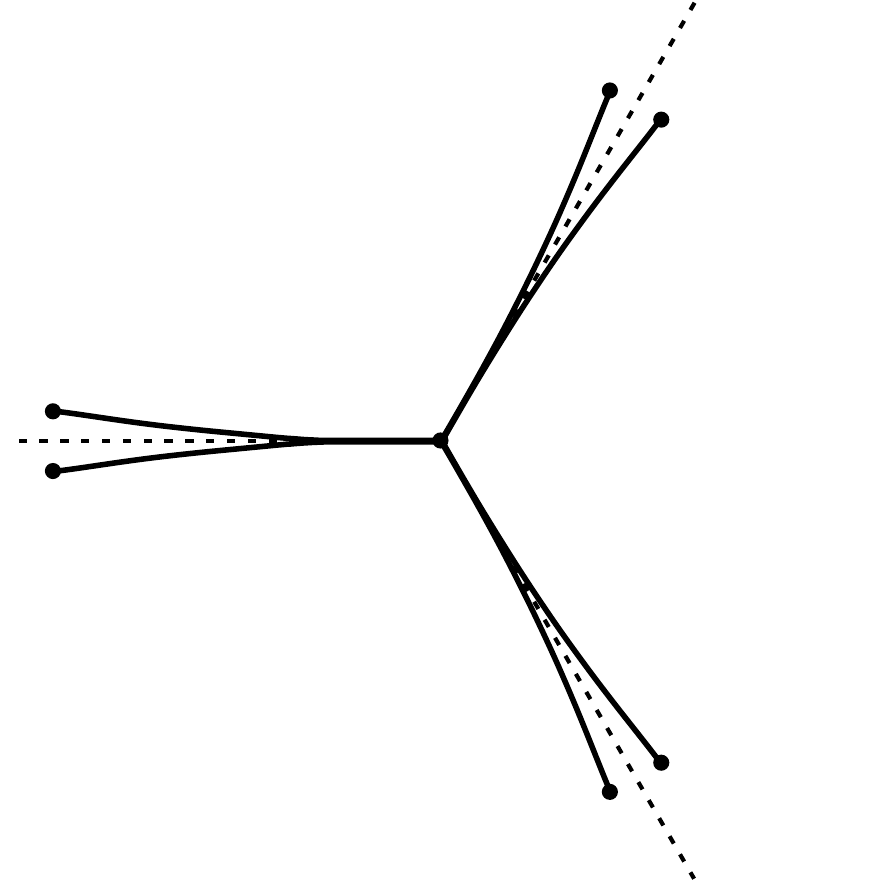}
	\put(52,49){$z_1 =0$}
	\put(76,84){$z_2$}
	\put(76,13){$z_3$}
\end{overpic}
 \caption{At the second critical time $t_{**}$, the domain $\Omega(t)$ shrinks to a point,
and the set $\Sigma_1$ consists of whiskers only, as shown in the figure. 
The figure also shows the rays $\arg z = \pm \pi/3$, $\arg z = \pi$ that make up the set $\Sigma_2$ (dotted lines).
The branch points $z_2$ and $z_3$ remain in the sector $S_0$ and do not come to $\Sigma_2$.
\label{fig:secondcrit}}
\end{figure}

The importance of the Boutroux condition is that it gives rise to the
probability measure \eqref{eq:mu} on $\Sigma_1$ when $t \in (t_*, t_{**})$,
which is a continuation of the probability measure $\mu_1$ on $\Sigma_1$
for subcritical $t \in (0, t_*)$ described in \cite{BK}. The equations \eqref{eq:areak} may
be viewed as characterizing the Laplacian growth after criticality.

The set $\Sigma_2$  carries the measure, see \cite[formula (4.12)]{BK},
\begin{equation} \label{defmu2} 
	d\mu_2(s) = \frac{1}{2\pi i t} \left( \pm 2s^{1/2} + \xi_{3,+}(s) - \xi_{2,+}(s)\right) ds, \qquad s \in \Sigma_2,
	\end{equation}
where the choice of sign $\pm$ is such that $\pm 2s^{1/2} + \xi_{3,+}(s) - \xi_{2,+}(s) = O(s^{-1})$  as $s \to \infty$.
As in the subcritical case, the measure $\mu_2$ is real and positive with total mass $1/2$.
In \cite[Theorems 2.5 and 2.6]{BK} the two measures $(\mu_1, \mu_2)$ were found as the minimizer of a vector
equilibrium problem. 

This vector equilibrium problem is still relevant in the supercritical case,
as it can be shown that the measures  jointly minimize the energy functional
\begin{multline} \label{vectorenergy}
	E(\mu_1,\mu_2) = \iint \log \frac{1}{|x-y|}d\mu_1(x)d\mu_1(y)
	+  \iint \log \frac{1}{|x-y|}d\mu_2(x)d\mu_2(y) \\
	- \iint \log \frac{1}{|x-y|}d\mu_1(x)d\mu_2(y) \\
	+ \frac{1}{t}  \sum_{j=0}^2 
		\int_{\Sigma_1 \cap S_j} \Re \left(  \frac{2}{3\sqrt{3}} (\omega^{-j} s)^{3/2} - \frac{1}{3} s^3 \right) d\mu_1(s)
		\end{multline}
among all vectors of measures $(\mu_1, \mu_2)$ with $\mu_j$ on $\Sigma_j$ and $\mu_j(\Sigma_j) = \frac{1}{j}$
for $j=1,2$. Here $S_j = \omega^j S_0$ for $j=1,2$. 
See \cite{Deift,HK,SaTo} for more on
(vector) equilibrium problems for logarithmic potentials.

We conjecture that the optimal set $\Sigma_1$ is characterized by a max-min property 
for the energy \eqref{vectorenergy}.
Let $\mathcal T$ denote the collection of contours $\Sigma$ such that:
\begin{enumerate}
\item[(a)] $\Sigma$ is a connected contour (= finite union of analytic arcs) that
is symmetric in the real axis, and invariant under the $\mathbb Z_3$ rotational
symmetry $z \mapsto \omega z$.
\item[(b)] The part of $\Sigma$ in $S_0$ connects $0$ to infinity in the directions $\arg z = \pm \frac{\pi}{3}$.
\end{enumerate}

For each  $\Sigma \in \mathcal T$ we define $\mathcal E(\Sigma)$
as the infimum of $E(\mu_1, \mu_2)$ where $\mu_1$ is a measure on $\Sigma$ with $\mu_1(\Sigma) = 1$
and $\mu_2$ is a measure on $\Sigma_2$ with $\mu_2(\Sigma_2) = \frac{1}{2}$.
Then we conjecture that
\[ \mathcal E(\Sigma_1) = \sup_{\Sigma \in \mathcal T} \mathcal E(\Sigma). \]
Such a characterization would be analogous to the $S$-curves that play
a role in rational approximation and complex non-Hermitian orthogonality 
\cite{KS,Rak}.

\subsection{Orthogonal polynomials} \label{sec:orthopoly}

A basic ingredient in the study of the normal matrix model before criticality are the orthogonal polynomials
with respect to the scalar product \eqref{innerproduct}, or with respect to
a modified version involving a cut-off. The cut-off approach does not work in the supercritical case.

In \cite{BK} a different regularization of the scalar product \eqref{innerproduct} was proposed
which for the model with cubic potential $V(z) = \frac{1}{3} z^3$ leads to the Hermitian form
\begin{equation} \label{Hform} 
	\langle f, g \rangle = \frac{1}{2\pi i} \sum_{j=0}^2 \sum_{k=0}^2 \epsilon_{j,k} 
	 \int_{\Gamma_j} dz \int_{\overline{\Gamma}_k} dw 
		f(z) \overline{g}(w) e^{- \frac{n}{t} (wz - \frac{1}{3} (w^3 + z^3))}, 
		\end{equation}
defined on polynomials $f$ and $g$, where $\Gamma_j$ is an unbounded contour 
stretching out to infinity from $e^{(2j-1)\pi i/3} \infty$ to $e^{(2j+1) \pi i /3} \infty$ for $j=0,1,2$, and
\[ \epsilon_{j,k} = \begin{cases}
			0  & \text{ if } j = k, \\
			1  &  \text{ if } j \equiv k + 1 \mod{3}, \\
			-1 &  \text{ if } j \equiv k -1 \mod{3}.
			\end{cases} \]
The Hermitian form \eqref{Hform} satisfies the  identity
\[  t \langle f, g' \rangle - n \langle zf, g \rangle + n \langle f, V'g \rangle=0 \]
which is also satisfied by \eqref{innerproduct} with the cut-off regularization if one forgets about boundary terms,
see \cite{BK}.

The Hermitian form \eqref{Hform} depends on $n$.
The orthogonal  polynomial $P_{k,n}$ is a monic polynomial of degree $k$ that satisfies
\begin{equation} \label{Horthogonal} 
	\langle P_{k,n}, z^j \rangle = 0, \qquad \text{for } j=0,1, \ldots, k-1.
	\end{equation}
In the subcritical case, it was shown in \cite{BK} that the zeros of the diagonal polynomials $P_{n,n}$
accumulate on the motherbody $\Sigma_1$ as $n \to \infty$, with $\mu_1$ as limit
of the normalized zero counting
measures. This result followed from a steepest descent analysis of the Riemann-Hilbert (RH)
problem  that characterizes the orthogonal polynomials with respect to \eqref{Hform}.
The RH problem has size $3 \times 3$. It results in a strong asymptotic formula
\[ P_{n,n}(z) = M_{11}(z) e^{n g_1(z)} ( 1 + O(1/n)), \qquad \text{ for } z \in \mathbb C \setminus \Sigma_1 \]
where $g_1(z) = \int \log(z-s) d\mu_1(s)$ and $M_{11}(z)$ is a prefactor that arises as the $11$-entry
of a global parametrix $M$ that is used in the steepest descent analysis. 

The Hermitian form \eqref{Hform} also makes sense in the supercritical case, and so does
the characterization of the orthogonal polynomials by means of the RH problem. 
We could do the steepest descent analysis also in this case. The outcome is a strong
asymptotic formula for $P_{n,n}$ as in Theorem \ref{theorem2}, namely
\[ P_{n,n}(z) = M_{n,11}(z) e^{n g_1(z)} ( 1 + O(1/n)), \qquad \text{ for } z \in \mathbb C \setminus \Sigma_1 \]
where now the prefactor $M_{n,11}(z)$ varies with $n$ and contains an elliptic theta function,
see \eqref{def-Mn11} below.

The prefactor $M_{n,11}$ has at most three zeros in $\mathbb C \setminus \Sigma_1$ 
that do not tend to $\Sigma_1$ as $n \to \infty$. They correspond to spurious zeros of $P_{n,n}$ 
{and} perform a quasi-periodic motion on the Riemann surface {$\mathcal R$}. 
This is the same phenomenon as happens for usual orthogonal polynomials
with an orthogonality measure that is supported on several intervals.

In the present case, the quasi-periodic motion takes place on the real part of $\mathcal R$ which
can be identified with the cycle $c_R$ that starts at $z_1$ and goes along
the intervals $[z_1, \infty)$ and $(-\infty,0]$ on the first sheet $\mathcal R_1$,
then along $[0, \infty)$ on the third sheet $\mathcal R_3$, 
and, finally, from right to left along the interval $(\infty, z_1]$ on the second sheet $\mathcal R_2$. 

For a precise description of $M_{n,11}$ we need some more notions related to the 
Riemann surface $\mathcal R$, which as we recall, has genus three in the supercritical case
$t \in (t_{*}, t_{**})$. There is a unique holomorphic 
differential $\omega_R$ on $\mathcal R$ that is $\mathbb Z_3$-invariant
and that is normalized such that
\begin{equation} \label{omegaRnormalized} 
	\oint_{c_R} \omega_R = 1. 
	\end{equation}
It can be explicitly given by
\begin{equation} \label{omegaR} 
	\omega_R =   \frac{3C}{3\xi^2 - 2z^2 \xi - (1+t)z} dz 
	\end{equation}
with the constant 
\[ C = \frac{1}{3} \left[ \oint_{c_R} \frac{1}{3\xi^2 - 2z^2 \xi - (1+t)z} dz \right]^{-1} \]

There is another cycle $a_R$ going around  $\Sigma_1^w \cap S_0$ on the first sheet in
counterclockwise direction (the cycle passes through the branch point $z_1$). By symmetry in the real axis we have
\begin{equation} \label{def-tau} 
	\tau := \oint_{a_R} \omega_R \in i \mathbb R^+ 
	\end{equation}
The theta function with  $q=e^{\pi i(1 + \tau)/2}$ (and quasi-period $(1+\tau)/2$) is defined by
\begin{equation} \label{theta3-def} \theta(s) = \sum_{n = - \infty}^{\infty} q^{n^2} e^{2n\pi is}
	= \sum_{n=-\infty}^{\infty} e^{\pi i n^2 \frac{1+\tau}{2} + 2\pi i ns}. \end{equation}
It gives an entire function 
in the complex $s$-plane with periodicity properties 
\begin{equation} \label{theta-periodicity} 
	\begin{aligned}
	\theta(s+1) & = \theta(s) = \theta(-s),  \\
	\theta(s+ \tfrac{1+\tau}{2}) & = e^{-\pi i \frac{1+\tau}{2}} e^{-2\pi i s} \theta(s), \\
	\quad \theta(s+ \tfrac{1-\tau}{2}) & = e^{-\pi i \frac{1+\tau}{2}} e^{2\pi i s} \theta(s).
	\end{aligned}
	\end{equation} 
The theta function has a simple zero at the values $s_0 + k + \frac{1+\tau}{2} l$, $k, l \in \mathbb Z$,
where
\begin{equation} \label{def-s0} 
	s_0 = \frac{-1+\tau}{4},  
	\end{equation}
and no other zeros.

We further define
\begin{equation} \label{def-beta}
	\beta := \frac{1}{6} \mu_1(\Sigma_1^w) \end{equation}
and for $\varepsilon >0$,
\begin{equation} \label{Nepsilon} 
	\mathbb N_{\varepsilon} = \{ n \in \mathbb N \mid 
	\dist_{\mathbb R \slash \mathbb Z} \left(n \beta, \tfrac{1}{2} + \tfrac{\tau}{2\pi i} \log 2 +
		\int_{\infty_1}^{-A^{1/3}} \omega_R\right) \geq \varepsilon \}
		\end{equation}
where $A = A(t)$ as before, and $-A^{1/3}$ denotes the point $z = -A^{1/3}$, $\xi = 0$ that is
on the first sheet of the Riemann surface, see \eqref{def:speccurve}.

\begin{theorem} \label{theorem2}
Suppose $t \in (t_*, t_{**})$. Let $\varepsilon >0$. Then for any large $n \in \mathbb N_{\varepsilon}$
the polynomials $P_{n,n}$ exist and
\begin{equation}\label{Pnn-asymp} 
	P_{n,n}(z) =  (M_{n,11}(z) + O(1/n)) e^{n g_1(z)} \qquad \text{as } n \in \mathbb N_{\varepsilon},~n \to \infty  
	\end{equation}
uniformly for $z$ in compact subsets of $\mathbb C \setminus \Sigma_1$, where: 
\begin{equation} \label{def-g1} g_1(z) = \int_{\Sigma_1} \log(z-s) d\mu_1(s),
\end{equation}
and 
\begin{multline}  \label{def-Mn11} 
	M_{n,11}(z) = 2^{2 \int_z^{\infty_1} \omega_R}
	\frac{\theta(s_0 + \int_{-A^{1/3}}^{-\infty_1} \omega_R)}{\theta(s_0 + \int_{-A^{1/3}}^z \omega_R)} 
		\frac{\theta(s_0 + \int_{-A^{1/3}}^z \omega_R + n\beta - \frac{\tau}{2\pi i} \log 2 - 1/2)} 
		{\theta(s_0 + \int_{-A^{1/3}}^{\infty_1} \omega_R + n \beta - \frac{\tau}{2\pi i} \log 2 - 1/2)} \\
		\times
		  \frac{\xi_1(z)}{(3\xi_1^2(z) - 2 z^2 \xi_1(z) - (1+t)z )^{1/2}}. 
			\end{multline}
The error term $O(1/n)$ in \eqref{Pnn-asymp}, which depends on $\varepsilon$, is uniform
on compact subsets of $(t_*,t_{**})$.
\end{theorem}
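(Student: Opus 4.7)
The plan is to prove Theorem \ref{theorem2} by a Deift--Zhou steepest descent analysis of the $3\times 3$ Riemann--Hilbert problem characterizing $P_{n,n}$ with respect to the Hermitian form \eqref{Hform}, following the scheme of \cite{BK} but with the genus--zero global parametrix replaced by a genus--three, $\mathbb Z_3$--symmetric, theta--function parametrix dictated by the spectral curve of Theorem \ref{theorem1}. First I would recall the $3\times 3$ RH problem for a matrix $Y$ whose $(1,1)$ entry equals $P_{n,n}$, with jumps supported on the contours $\Gamma_j$ entering \eqref{Hform}. A preliminary transformation $Y\mapsto X$ deforms these contours to a star-shaped configuration that contains $\Sigma_1\cup\Sigma_2$, using the $\mathbb Z_3$ symmetry and the analytic extensions of the weight, exactly as in \cite{BK}.

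The second step is the $g$--function transformation. Using the measures $(\mu_1,\mu_2)$ provided by Theorem \ref{theorem1}(c) and \eqref{defmu2}, define
\begin{equation*}
g_1(z)=\int_{\Sigma_1}\log(z-s)\,d\mu_1(s),\qquad g_2(z)=\int_{\Sigma_2}\log(z-s)\,d\mu_2(s),
\end{equation*}
and set $T(z)=\mathrm{diag}(e^{-n\ell_1},e^{-n\ell_2},e^{-n\ell_3})\,X(z)\,\mathrm{diag}(e^{-n g_1},e^{n(g_1-g_2)},e^{ng_2})$ with appropriate Lagrange multipliers $\ell_j$. The Boutroux condition \eqref{cond:Boutroux} together with \eqref{trajgamma12}--\eqref{trajgamma13} ensures that $\Re(g_{1,+}+g_{1,-}-2g_1)$ vanishes on $\Sigma_1$ and is negative on either side, which is precisely what allows the next step, opening lenses around each branch of $\Sigma_1^o$, $\Sigma_1^w$ and $\Sigma_2$, to yield a matrix $S$ whose jumps are exponentially close to the identity off the cuts and have a constant, piecewise-off-diagonal form on the cuts. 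Here the main technical point is to check, on each piece of $\Sigma_1^w$, that the $\Re$--part of $\int (\xi_1-\xi_2)ds$ has the correct sign structure on both sides of the arc, which follows from the fact that $\gamma_{1,2},\gamma_{1,3}$ are trajectories of the quadratic differential $(\xi_1-\xi_2)^2 dz^2$.

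The central and most delicate step is the construction of the global parametrix $M_n$ solving the pure jump model problem on $\Sigma_1\cup\Sigma_2$. Because $\mathcal R$ has genus three but is $\mathbb Z_3$--invariant, the quotient surface is an elliptic curve, and after implementing the rotational symmetry in $M_n$ one reduces to a scalar RH problem on the real cycle $c_R$ whose solution is built from the theta function \eqref{theta3-def} associated to $\tau$ in \eqref{def-tau}. The matrix entry $M_{n,11}$ must be meromorphic on $\mathcal R$ with the prescribed behavior $\xi_1(z)\cdot(\cdots)^{-1/2}$ at infinity on the first sheet, no zeros or poles other than those forced by the branch structure, and the jump $M_{n,11,+}=M_{n,11,-}\cdot 1$ across $\Sigma_1$ with a phase shift of $e^{2\pi i n\beta}$ generated by the $a_R$--period. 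Using the Abel map normalized by \eqref{omegaRnormalized} and recognizing that $\frac{\tau}{2\pi i}\log 2$ arises from the jump of $2^{2\int_z^{\infty_1}\omega_R}$ across $\Sigma_2$ (the factor $2$ coming from the $\pm z^{1/2}$ behavior of $\xi_2,\xi_3$), the Riemann vanishing theorem together with the explicit zero $s_0$ in \eqref{def-s0} produces the formula \eqref{def-Mn11}. Existence of $M_n$ fails exactly when the theta function in the denominator vanishes, which is the condition excluded by $\mathbb N_\varepsilon$ in \eqref{Nepsilon}; for $n\in\mathbb N_\varepsilon$ the denominator is bounded below by a constant times $\varepsilon$, so $M_n$ is uniformly bounded on compact subsets of $\mathbb C\setminus\Sigma_1$.

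Finally, I would construct local parametrices at the branch points $\omega^j z_k$ out of Airy functions, matching them to $M_n$ on small circles; this is standard since at each branch point two of the three sheets meet in a square-root manner. The remaining ratio $R=S\cdot P^{-1}$ then solves a small-norm RH problem with jumps that are $I+O(1/n)$ in $L^2\cap L^\infty$, uniformly on compact subsets of $(t_*,t_{**})$ thanks to the quantitative bounds from the previous step, and hence $R=I+O(1/n)$. Unravelling the transformations $Y\to X\to T\to S$ in the sector where $\Re(g_1)$ is the dominant exponent yields
\begin{equation*}
P_{n,n}(z)=Y_{11}(z)=(M_{n,11}(z)+O(1/n))\,e^{ng_1(z)},
\end{equation*}
which is \eqref{Pnn-asymp}. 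The main obstacle I anticipate is the construction of $M_n$: one must carefully track how the $\mathbb Z_3$--quotient acts on the genus--three divisor, verify the matching of jumps across $\Sigma_2$ where the factor $2^{2\int_z^{\infty_1}\omega_R}$ generates the shift $-\frac{\tau}{2\pi i}\log 2$, and check that the divisor $s_0+\int_{-A^{1/3}}^\cdot \omega_R$ is a non-special divisor away from the excluded set $\mathbb N_\varepsilon$.
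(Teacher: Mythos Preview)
Your overall architecture matches the paper's: the chain $Y\mapsto X\mapsto V\mapsto U\mapsto T\mapsto S$, then a theta-function global parametrix $M_n$ on the $\mathbb Z_3$-quotient elliptic curve, Airy local parametrices, and a small-norm remainder $R$. Two points in your global-parametrix sketch are, however, misdirected and would block the construction as you describe it.

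First, the factors $\tfrac12$ and $2$ in the model jump matrices do \emph{not} live on $\Sigma_2$ and are unrelated to the $\pm z^{1/2}$ behavior of $\xi_2,\xi_3$. They appear on the whiskers $\omega^j\gamma_{1,2}$ and $\omega^j\gamma_{1,3}$ and come from the Airy-function weights: the constant $\alpha=\tfrac12+\tfrac{i}{6}\sqrt{3}$ in the jumps of $V$ reduces to $\Re\alpha=\tfrac12$ after the transformation $U\mapsto T$. Consequently, the factor $2^{2\int_z^{\infty_1}\omega_R}$ is introduced to remove these $\tfrac12$, $2$ entries on $\Sigma_1^w$, not to fix a jump on $\Sigma_2$; the shift $-\tfrac{\tau}{2\pi i}\log 2$ arises because the Abel-map jump $u_{1,\pm}-u_{2,\mp}=-\tfrac12(1\mp\tau)$ across $[w_1,w_2]$ and $[w_1,w_3]$ converts $2^{-2u_1}$ into a factor $2^{\pm\tau}$ there.

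Second, the jump of $M_{n,11}$ across $\Sigma_1$ is not the identity. The first row of $M_n$ satisfies an off-diagonal jump $(M_{n,11},M_{n,12})_+=(M_{n,11},M_{n,12})_-\begin{pmatrix}0&*\\ *&0\end{pmatrix}$, so $M_{n,11,+}=-(\text{const})\cdot M_{n,12,-}$. The paper therefore builds not a scalar function but a meromorphic function on the three-sheeted quotient $\mathcal S$ (genus one): first an explicit solution for the parameter $\nu=\tfrac12$ via a square root of $\eta^2/(3\eta^2-2w\eta-(1+t)w)$, then a theta-function ratio to shift $\nu$ to $n\beta-\tfrac{\tau}{2\pi i}\log 2$. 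The non-special divisor lemma you mention is used not for $M_{n,11}$ but to fill the second and third rows of $M_n$ by multiplying the first row by a basis of $L(D+2\infty_2)$.
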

The proof of Theorem \ref{theorem2} can be found in section \ref{section:proof2}.

The theta function $\theta(s_0 + \int_{-A^{1/3}}^z \omega_R)$ has zeros at $z= - A^{1/3}$ and
also at $ - \omega^j A^{1/3}$ for $j=1,2$, since  by the $\mathbb Z_3$ symmetry of $\omega_R$ 
one has that $\int_{-A^{1/3}}^{- \omega^j A^{1/3}} \omega_R=0$.
However, these zeros in the denominator are cancelled by the fact that the $\xi_1(z)$ has 
zeros at these same values. If $n \not\in \bigcup_{\varepsilon > 0} \mathbb N_{\varepsilon}$ then
\[ \theta\left(s_0 + \int_{-A^{1/3}}^{\infty_1} \omega_R + n \beta - \frac{\tau}{2\pi i} \log 2 - 1/2\right) = 0 \]
and then the right-hand side of \eqref{def-Mn11} is not well-defined.

For each $n$ there is a point $Q_n$ on the cycle $c_R$ on the Riemann surface, such that
\[  \int_{-A^{1/3}}^{Q_n} \omega_R = - n\beta + \frac{\tau}{2\pi i} \log 2 + 1/2  \quad \mod{\mathbb Z}. \]
If $Q_n$ happens to be on the first sheet, say $Q_n = (x_n, \xi_1(x_n)) \in \mathcal R_1$, then
\[ \theta\left(s_0 + \int_{-A^{1/3}}^{x_n} \omega_R + n\beta - \frac{\tau}{2\pi i} \log 2 - 1/2\right) = 0 \]
and so $M_{n,11}(x_n) = 0$ by formula \eqref{def-Mn11}. By symmetry we then also have $M_{n,11}(\omega^j x_n) = 0$ for $j=1,2$.
 It then follows from the asymptotic formula \eqref{Pnn-asymp} and Hurwitz's theorem from complex analysis 
that $P_{n,n}$ has a simple zero near each of $\omega^j x_n$, $j=0,1,2$, if $n \in \mathbb N_{\varepsilon}$ is large enough.
These are spurious zeros of $P_{n,n}$. The other zeros of $P_{n,n}$ are non-spurious zeros.

\begin{corollary}
Let $n \to \infty$ with $n \in \mathbb N_{\varepsilon}$.
Then the non-spurious zeros of $P_{n,n}$ tend to $\Sigma_1$ and
$\mu_1$ is the limit of the normalized zero counting measures.
\end{corollary}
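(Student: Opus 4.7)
My plan is to deduce the corollary directly from the strong asymptotic formula in Theorem \ref{theorem2}, using a Hurwitz-type argument for the localization of the non-spurious zeros and a standard logarithmic-potential argument for the limiting zero distribution.

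For the first claim, fix a compact set $K \subset \mathbb C \setminus \Sigma_1$ and let $U$ be an arbitrarily small open neighborhood of the subsequential limits of the (at most three) points $\omega^j x_n$, $j=0,1,2$, identified in the paragraph preceding the corollary as the zeros of $M_{n,11}$. On $K \setminus U$, the prefactor $M_{n,11}(z)$ in \eqref{def-Mn11} is uniformly bounded above and bounded away from zero: the hypothesis $n \in \mathbb N_\varepsilon$ keeps the theta factor in the second denominator away from its zero lattice; the zeros $-\omega^j A^{1/3}$ of $\xi_1(z)$ in the numerator cancel against the zeros of $\theta(s_0 + \int_{-A^{1/3}}^z \omega_R)$ in the first denominator; and the remaining factors are analytic and non-vanishing off $\Sigma_1$. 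Hence, by \eqref{Pnn-asymp}, $P_{n,n}$ has no zeros on $K \setminus U$ for $n \in \mathbb N_\varepsilon$ sufficiently large, which proves that non-spurious zeros accumulate only on $\Sigma_1$.

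For the convergence of the zero counting measures, let $\nu_n = \frac{1}{n}\sum_{z:\,P_{n,n}(z)=0} \delta_z$. Since $P_{n,n}$ is monic of degree $n$, one has $\frac{1}{n}\log|P_{n,n}(z)| = \int \log|z-s|\, d\nu_n(s)$. A preliminary tightness argument (from the leading behavior $e^{n g_1(z)} \sim z^n$ at infinity together with $M_{n,11}(z) \to 1$ as $z \to \infty$) confines the zeros to a bounded set independent of $n$. Any weak limit $\nu$ of a subsequence of $\{\nu_n\}$ therefore satisfies, by Theorem \ref{theorem2},
\[ \int \log|z-s|\, d\nu(s) = \Re g_1(z) = \int_{\Sigma_1} \log|z-s|\, d\mu_1(s) \]
on a dense open subset of $\mathbb C \setminus \Sigma_1$. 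By the unicity theorem for logarithmic potentials (a compactly supported measure is determined by its potential off a polar set) we conclude $\nu = \mu_1$. Since every subsequential limit coincides with $\mu_1$, the full sequence $\{\nu_n\}_{n \in \mathbb N_\varepsilon}$ converges weakly to $\mu_1$.

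The main obstacle is the preliminary tightness, since Theorem \ref{theorem2} only yields information on compact subsets of $\mathbb C \setminus \Sigma_1$ and a separate estimate is needed to rule out zeros escaping to infinity. The standard fix is a large-$z$ estimate that comes for free from the steepest descent analysis itself: outside a sufficiently large disk the final RH-solution returns to the identity and gives $P_{n,n}(z) = z^n(1+o(1))$, which forbids zeros there. The at-most-three spurious zeros contribute mass $O(1/n)$ to $\nu_n$ and therefore do not affect the weak limit.
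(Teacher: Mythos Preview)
Your argument is correct and follows essentially the same route as the paper: deduce $\frac{1}{n}\log|P_{n,n}(z)| \to \Re g_1(z)$ from \eqref{Pnn-asymp} and invoke the unicity theorem for logarithmic potentials, which is precisely what the paper packages as ``standard arguments from logarithmic potential theory, see \cite{SaTo}.'' Your treatment is in fact more careful than the paper's (you address tightness and the Hurwitz step explicitly); the only wobble is the phrase ``subsequential limits of the points $\omega^j x_n$,'' since the $x_n$ move quasi-periodically along $c_R$ and their closure may be a whole arc---it is cleaner to take $U$ to be an $n$-dependent $\delta$-neighborhood of the current zeros of $M_{n,11}$ and use the uniform boundedness of the family $\{M_{n,11}\}_{n\in\mathbb N_\varepsilon}$ (Proposition~\ref{prop:modelRHP}(b)) to get the lower bound on $K\setminus U$.
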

\begin{proof}
This follows from the asymptotic formula \eqref{Pnn-asymp} and the above consideration on the spurious zeros.
Indeed, from \eqref{Pnn-asymp} and \eqref{def-g1} we have
\[ \lim_{n \to \infty \atop n \in \mathbb N_{\varepsilon}}  \frac{1}{n} \lim \log |P_{n,n}(z)| =  \int_{\Sigma_1} \log |z-s| d\mu_1(s), \]
almost everywhere in $\mathbb C \setminus \Sigma_1$, which by standard arguments from
logarithmic potential theory, see \cite{SaTo}, yields that the non-spurious zeros of $P_{n,n}$
tend to $\Sigma_1$ with $\mu_1$ as limiting distribution.
\end{proof}

\subsection{Remark on perturbation analysis around the critical regime} 

Although the focus of this paper is not on the critical regime, it is interesting
to see how the branch points $z_j$, $j=1,2,3$ approach the cusp point as $\Delta t = t-t_* > 0$ tends
to zero, and to compare this with  \cite{LTW3}. 

For the critical values $t=t_* = 1/8$, $A = A_* = 27/256$ the cubic  equation \eqref{def:speccurve} 
for the spectral curve
has the branch point $z = z_* = 3/4$, $\xi = \xi_* = 3/4$, which is a triple zero of the discriminant
of \eqref{def:speccurve}.

Let us  introduce  small $\Delta t > 0$ and define $t  = t_* + \Delta t$. Direct perturbation analysis
of \eqref{def:speccurve} indicates the following asymptotic behavior:
\begin{equation} 	\label{pert-values}
\begin{aligned}
	A & = A_* + \frac{9}{16} \Delta t - k (\Delta t)^{3/2} + O\left( (\Delta t)^2 \right), \cr
	z & = z_* + x (\Delta t)^{1/2} + O\left(\Delta t \right), \cr
	\xi & = \xi_* + x (\Delta t)^{1/2} + \frac{4}{3\sqrt{3}} y (\Delta t)^{3/4}  + O\left( (\Delta t)^2 \right),
\end{aligned}
\end{equation}
where the constants $x,y,k$ are to be defined.
Plugging \eqref{pert-values} into \eqref{def:speccurve} we find 
\[ \left(y^2 - x^3 - \frac{3}{2} x  - k\right) (\Delta t)^{3/2} + O\left((\Delta t)^{7/4}\right) =0, \]
which  means that the  equation
\begin{equation} \label{eq-cub}
y^2 = x^3 + \frac{3}{2} x + k  
\end{equation}
should hold true. The equation \eqref{eq-cub} defines an elliptic Riemann surface that is obtained from 
the blow up at the criticality. 

Recall that the choice of $A$ as a function of $t$ is dictated by the fact that $\xi dz$ has the Boutroux condition.
Then in the new $xy$- variables  it means that $y dx$ has the Boutroux condition, 
see Definition \ref{def-Boutroux},  on the surface \eqref{eq-cub}. This is a condition on $k$ in \eqref{eq-cub}. 
The elliptic Riemann surface has one real branch point $x_1$ and two non-real branch points $x_2$ and $x_3 = \overline{x}_2$
with $\Im x_2 > 0$.  There are non-trivial cycles $\alpha$ and $\beta$ analogous to the cycles
$\alpha_0$ and $\beta_0$ on $\mathcal R$, see Figure \ref{fig:cycles} below. The period $\oint_{\alpha} ydx$
is always purely imaginary. The mapping $k \mapsto \Re (\oint_{\beta} ydx)$ is
strictly increasing (this is analogous to Lemma \ref{lemma:hincreasing} below) and there is a unique
value of $k$ with $\Re (\oint_{\beta} ydx) = 0$. This value for $k$ is approximately  
\begin{equation} \label{eq:k}
k = 0.647 \cdots .
\end{equation}
After $k$ is determined, the  branch points $x_j$, $j=1,2,3$, can be calculated from the cubic equation 
\eqref{eq-cub} as they are the zeros of $x^3 + \frac{3}{2}x + k$. 
We find the  approximate values 
\begin{equation} \label{val-x}
x_1 \approx -0.391, \qquad x_2 \approx 0.196 + 1.27 i, \qquad x_3 \approx 0.196 - 1.27 i 
\end{equation}
which by \eqref{pert-values} determine approximations for $z_j$ for $j=1,2,3$.
In particular
\begin{equation} \label{z1super}
	z_1 \approx z_* - 0.391 (\Delta t)^{1/2}  \qquad \text{ as } \Delta t \to 0+.
\end{equation}

Recall from \cite[formula (2.19)]{BK} that for $t < t_*$ one
has $z_1 = \frac{3}{4} \left(1- \sqrt{1-8t}\right)^{2/3}$
and so if $\Delta t = t-t_* < 0$,
\begin{equation} \label{z1sub} 
	z_1 = z_* - \sqrt{2} (-\Delta t)^{1/2} + O\left( \Delta t\right) \qquad \text{ as } \Delta t \to 0-.
	\end{equation}

In the paper \cite{LTW3} of Lee, Teodorescu and Wiegmann the following equations are given for the motion
of the fingertip (formula (28) in \cite{LTW3})
\begin{equation} \label{LTWtip} 
	e(T) = \begin{cases} -2 \sqrt{-T}, & T < 0, \\
		- 0.553594  \sqrt{T}, & T > 0.
		\end{cases} \end{equation}
To  compare the results one should identify $e(T) = z_1 - z_*$ and $T = \frac{1}{2} \Delta t$. 
Then $e(T) = - 2\sqrt{-T}$ corresponds to the leading behavior in \eqref{z1sub} for $\Delta t < 0$,
and for $T > 0$,
\[ e(T) = -0.553594 \sqrt{T} = - \frac{0.553594}{\sqrt{2}} (\Delta t)^{1/2} = 0.39145 (\Delta t)^{1/2},\]
which corresponds to \eqref{z1super}.

It is also of interest to consider the intersection point $\widehat{z}$ of $\partial \Omega$ with the positive real line.
If we substitute  $t=t_*+\Delta t$, $\widehat{z}=z_*+ x (\Delta t)^{1/2} + O(\Delta t)$,
and the expression for $A$ from \eqref{pert-values}
into \eqref{dOmegasuper}, then we find
\[ -\left(x^3 + \frac{3}{2} x + k\right) (\Delta t)^{3/2} + O(\Delta t)^2 = 0. \]
This means that $x=x_1$, where $x_1$ is the real solution of $x^3 + \frac{3}{2} x + k = 0$.  
Thus, the distance from $\widehat{z}$ to $z_1$ is only $O(\Delta t)$ as $\Delta t \to 0+$,
and this is much smaller than the distance from $z_1$ to the other branch points $z_2$ and $z_3$, which is $O(\Delta t)^{1/2}$. 
Therefore, in the leading order $O(\Delta t)^{1/2}$, the branch point $z_1$ is indistinguishable from 
$\widehat{z}$. This effect is clearly visible in the left panel of Figure \ref{fig:whiskers}.
		
\section{Proof of Theorem \ref{theorem1}} \label{section:proof1}

\subsection{The Riemann surface}

We start by investigating the Riemann surface that is associated with the cubic
equation \eqref{def:speccurve}. 
The discriminant of \eqref{def:speccurve} with respect to the variable $\xi$
is a polynomial in $z$ of degree $9$,
which because of the $\mathbb Z_3$ symmetry takes the form 
\[ D(P)(z) = Q(z^3) \]
with a cubic polynomial $Q(w)$ (these and other calculations were made 
with the help of Maple)
\begin{equation} \label{def:Q} 	
	Q(w) = 4w^3 + (t^2 + 20t + 4A -8) w^2  
		+ (4 (1+t)^3 + 18 At -36 A) w - 27 A^2. 
		\end{equation}
		
The discriminant of \eqref{def:Q} with respect to $w$ has the remarkable factorization 
\begin{equation} 
	\label{def:Deltaq} 
		D(Q) = 16 (t^2 - 7 t - 8 + 3A)^3  (t (1+t)^3 - 20 A t + 16 A^2 - A),
		\end{equation}
which is zero for $A = A_j(t)$, $j = 1,2,3$, where $A_1(t)$ is given by \eqref{def:A1}, 
\begin{align} 
	\label{def:A2}
	A_2(t) = \frac{1}{32} \left(1 + 20 t - 8 t^2 + (1-8 t)^{3/2} \right), \quad t < t_*,
\end{align}
and
\begin{align}
  A_3(t) & = \frac{1}{3} (1+t)(8-t). \label{def:A3}
\end{align}

The cubic polynomial \eqref{def:Q} has three zeros $w_1, w_2, w_3$, and at least one of
them, say $w_1$, is real. If $A \neq 0$ then $w_1 > 0$. 
Then $D(P)(z)$ has nine zeros, namely
\[ z_1 = w_1^{1/3}, \quad z_2 = w_2^{1/3}, \quad z_3 = w_3^{1/3}, \]
and their rotations $\omega z_j$, $\omega^2 z_j$, for $j=1,2,3$,
where $z_1 > 0$ is real, and $-\frac{\pi}{3} \leq \arg z_3 \leq 0 \leq \arg z_2  \leq \frac{\pi}{3}$.

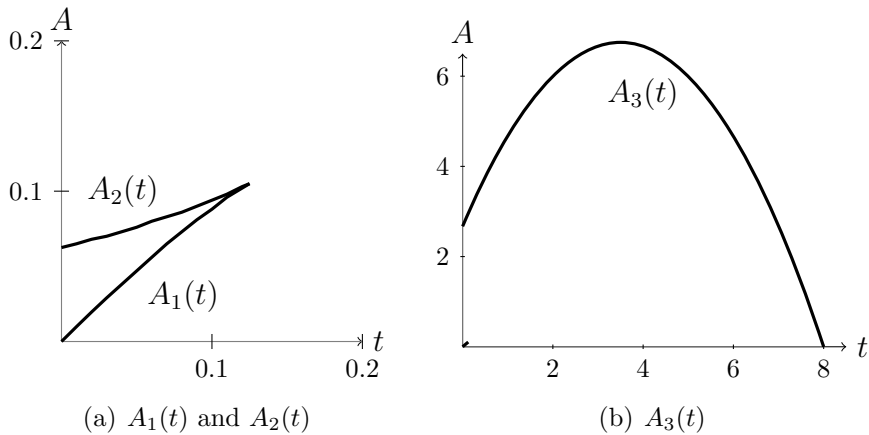
\begin{figure}[t]
\centering
\subfigure[$A_1(t)$ and $A_2(t)$]{
\begin{tikzpicture}[scale=20]
\draw[->](0,0)--(0.2,0) node[right]{$t$};
\draw[->](0,0)--(0,0.2) node[above]{$A$};
\draw[help lines] (0,0.2)--(0,0)--(0.2,0);
\draw[very thick] (0,0)--(0.01,0.0099)--(0.02,0.0196)--(0.03,0.029)--(0.04,0.038)--(0.05,0.047)--(0.06,0.056)--(0.07,0.065)--(0.08,0.073)--(0.09,0.081)--(0.10,0.088)--(0.11,0.096)--(0.12,0.102)--(0.125,0.105);
\draw[very thick] (0,0.0625)--(0.01,0.065)--(0.02,0.068)--(0.03,0.070)--(0.04,0.073)--(0.05,0.076)--(0.06,0.080)--(0.07,0.083)--(0.08,0.086)--(0.09,0.090)--(0.10,0.094)--(0.11,0.098)--(0.12,0.103)--(0.125,0.105);
\draw (0.05,0.03)  node[right]{$A_1(t)$};
\draw (0.01,0.1) node[right]{$A_2(t)$};
\draw (0.1,0.005)--(0.1,-0.005) node[font=\footnotesize,below]{$0.1$};
\draw (0.2,0.005)--(0.2,-0.005) node[font=\footnotesize,below]{$0.2$};
\draw (0.005,0.1)--(-0.005,0.1) node[font=\footnotesize,left]{$0.1$};
\draw (0.005,0.2)--(-0.005,0.2) node[font=\footnotesize,left]{$0.2$};
\end{tikzpicture}
}
\subfigure[$A_3(t)$]{
\begin{tikzpicture}[scale=0.6]
\draw[->](0,0)--(8.5,0) node[right]{$t$};
\draw[->](0,0)--(0,6.5) node[above]{$A$};
\draw[help lines] (0,6)--(0,0)--(8,0);
\draw[very thick] (0,2.67) parabola bend (3.5,6.75)  (8,0);
\draw[very thick] (0,0)--(0.125, 0.105);
\draw (4,5)  node[above]{$A_3(t)$};
\draw (2,0.05)--(2,-0.05) node[font=\footnotesize,below]{$2$};
\draw (4,0.05)--(4,-0.05) node[font=\footnotesize,below]{$4$};
\draw (6,0.05)--(6,-0.05) node[font=\footnotesize,below]{$6$};
\draw (8,0.05)--(8,-0.05) node[font=\footnotesize,below]{$8$};
\draw (0.05,2)--(-0.05,2) node[font=\footnotesize,left]{$2$};
\draw (0.05,4)--(-0.05,4) node[font=\footnotesize,left]{$4$};
\draw (0.05,6)--(-0.05,6) node[font=\footnotesize,left]{$6$};
\end{tikzpicture}
}
\caption{Graphs of $A_1(t)$, $A_2(t)$ and $A_3(t)$. Note the difference
in scale between the two plots.} 
\end{figure}

If $A = A_j(t)$ for some $j=1,2,3$, then $Q(w)$ has one simple and one double zero.
This gives three simple zeros and three  double zeros for the discriminant of $P$.
In case $A= A_1(t)$ or $A=A_2(t)$ we have that the double zeros are nodes 
and there is no branching at these points. Thus, keeping
in mind the branch point at infinity, we have only four branch points, which by the
Riemann-Hurwitz formula, see e.g.\ \cite{FK,Mir}, gives that the genus is zero.

For $A = A_3(t)$ the double zeros are branch points located on the
rays with angles  $\pm \pi/3$ and $\pi$, that connect all three sheets.
Then the genus is three.

Finally, if $A \in \mathbb R^+ \setminus \{ A_1(t), A_2(t), A_3(t)\}$ then all zeros are simple zeros 
of the discriminant of $P$. Being simple
zeros, these give rise to nine branch points of the Riemann surface. Taking
note that there is also branching at infinity (between sheets $\mathcal R_2$ and $\mathcal R_3$)
we have in total 10 branch points on the Riemann surface when viewed as a three-fold cover 
of the $z$-plane. The Riemann-Hurwitz formula then tells us that the genus is three.

At the critical value $t = t_*$, $A = A_* = A_1(t_*)$, we have triple zeros of the discriminant at $z_1$, $\omega z_1$, $\omega^2 z_1$
for some $z_1 > 0$.
The sheet structure of the Riemann surface is then given by 
$\mathcal R_1 = \mathbb C \setminus \Sigma_1$, $\mathcal R_2 = \mathbb C \setminus (\Sigma_1 \cup \Sigma_2)$,
$\mathcal R_3 = \mathbb C \setminus \Sigma_3$ where
$\Sigma_1 = \bigcup_j [0, \omega^j z_1]$ and $\Sigma_3 = \{ z \in \mathbb C \mid z^3 \in \mathbb R^-\}$.
The sheets $\mathcal R_1$ and $\mathcal R_2$ are connected via $\Sigma_1$,
and $\mathcal R_2$ and $\mathcal R_3$ are connected via $\Sigma_2$.

If we increase $t$ and move in parameter space from $(t_*, A_*)$ to $(t_*+\delta, A_*)$
then the triple zero $z_1$ of the discriminant splits into three zeros $z_1, z_2, z_3$, where $z_1$ is real
and $z_2$ and $z_3$ are non-real and each other complex conjugates. We take
$\Im z_2 > 0$.
Then by continuity we find the sheet structure of the Riemann surface as shown in Figure \ref{fig:three-sheets}.

Also by continuity the same sheet structure will hold throughout the region in the  first quadrant of the $(t,A)$  plane
that is bounded by the three curves $A = A_j(t)$, $j=1,2,3$, and $A > 0$. This region consists of the three pieces
\begin{enumerate}
\item[(1)] $0 < t \leq t_*$, $0 < A < A_1(t)$,
\item[(2)] $0 < t \leq t_*$, $A_2(t) < A < A_3(t)$,
\item[(3)] $t_* < t < t_{**}$, $0 < A < A_3(t)$.
\end{enumerate} 

For each such $(t,A)$  we have three branch points $z_1, z_2, z_3$  in sector $S_0$ with
$z_1 > 0$, $\Im z_2 > 0$ and $z_3 = \overline{z}_2$. There are six other branch points $\omega^j z_k$, $j=1,2$,
$k=1,2,3$, and we use the sheet structure of $\mathcal R$ as in Figure \ref{fig:three-sheets}.

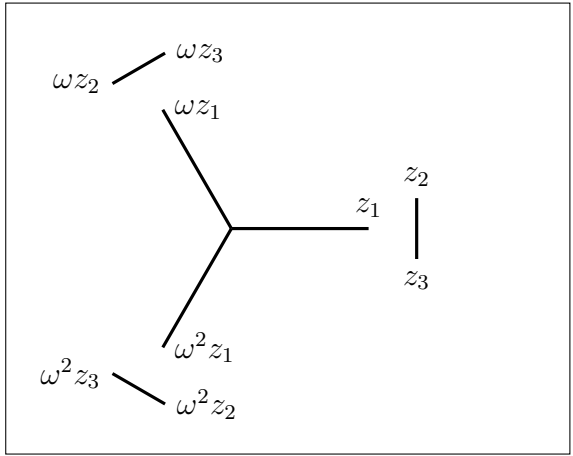
\begin{figure}[!t]
\centering
\begin{tikzpicture}[scale=3]
\draw[very thick] (0,0)--(0.608,0) node[above]{$z_1$};
\draw[very thick] (0.821,0.135)--(0.821,-0.135) node[below]{$z_3$};
\draw[very thick] (0.821,0.135) node[above]{$z_2$};

\draw[very thick] (-0.527,0.643)--(-0.294,0.778) node[right]{$\omega z_3$};
\draw[very thick] (-0.527,0.643) node[left]{$\omega z_2$};
\draw[very thick] (0,0)--(-0.304,0.527) node[right]{$\omega z_1$};
\draw[very thick] (-0.527,-0.643)--(-0.294,-0.778) node[right]{$\omega^2 z_2$};
\draw[very thick] (0,0)--(-0.304,-0.527) node[right]{$\omega^2 z_1$};
\draw[very thick] (-0.527,-0.643) node[left]{$\omega^2 z_3$};
\draw  (-1.0,-1.0) rectangle (1.5,1.0);
\end{tikzpicture}
  

\caption{The first sheet $\mathcal R_1$ after deformation of the whiskers $\Sigma_1^w$ to 
arcs joining $\omega^j z_2$ to $\omega^j z_3$ disjoint from $\omega^j z_1$, for $j=0,1,2$. \label{fig:sheet1}}

\end{figure}

\subsection{Proof of Theorem \ref{theorem1} (a)}

We take $t > t_*$ and $0 < A < A_3(t)$.  From the previous section we then know
the sheet structure of the Riemann surface. The surface has genus three, and we  
take the following  canonical homology  basis 
\begin{equation} \label{homologybasis}
		\{ \alpha_0, \beta_0, \alpha_1, \beta_1, \alpha_2, \beta_2  \} 
		\end{equation}
where $\alpha_0$ is a nontrivial cycle going around $\Sigma_1^w \cap S_0$ on the first sheet with counterclockwise
orientation,
$\beta_0$ is a cycle from $z_1$ to $z_2$ on the first sheet, lying to the right of $\Sigma_1^w$, 
and back from $z_2$ to $z_1$ on the second sheet. 
The other cycles are obtained by rotation over angles $2\pi/3$ and $4\pi/3$, i.e.,
\[ \alpha_j = \omega^j \alpha_0, \qquad \beta_j = \omega^j \beta_0, \qquad j=1,2. \]

To visualize these cycles it is convenient to first deform the whiskers $\Sigma_1^w$ to cuts that connect
the branch points $\omega^j z_2$ and $\omega^j z_3$ and are disjoint from $\omega^j z_1$, as in 
Figure \ref{fig:sheet1}.
Then homotopic versions of the cycles \eqref{homologybasis} are shown in Figure \ref{fig:cycles}.

\begin{figure}[!t]
\centering
\begin{tikzpicture}[scale=3]
\draw[thick] (0,0)--(0.608,0) node[below]{$z_1$};
\draw[thick] (0.821,0.135)--(0.821,-0.135) node[below]{$z_3$};
\draw[thick] (0.821,0.135) node[above]{$z_2$};

\draw[thick] (-0.527,0.643)--(-0.294,0.778); 
\draw[thick] (-0.527,0.643); 
\draw[thick] (0,0)--(-0.304,0.527); 
\draw[thick] (-0.527,-0.643)--(-0.294,-0.778);  
\draw[thick] (0,0)--(-0.304,-0.527); 
\draw[thick] (-0.527,-0.643); 
\draw  (-1.0,-1.0) rectangle (1.5,1.0);

\draw[ultra thick] (0.821,0) ellipse (2pt and 15pt);
\draw (1.0,0.14) node[above]{$\alpha_0$};
\draw[ultra thick] (0.608,0)--(0.821,0.135);
\draw (0.608,0) node[above]{$\beta_0$};

\draw (-0.75,0.4) node[right]{$\alpha_1$};
\draw (-0.4,0.57) node[below]{$\beta_1$};
\draw (-0.7,-0.45) node[right]{$\alpha_2$};
\draw (-0.2, -0.5) node[below]{$\beta_2$};

\draw[ultra thick,rotate=120] (0.821,0) ellipse (2pt and 15pt);
\draw[ultra thick,rotate=-120] (0.821,0) ellipse (2pt and 15pt);

\draw[ultra thick,rotate=120] (0.608,0)--(0.821,0.135);
\draw[ultra thick,rotate=-120] (0.608,0)--(0.821,0.135);
\end{tikzpicture}
\caption{Cycles $\alpha_j$, $\beta_j$, for $j=0,1,2$ on the first sheet of the Riemann surface
after deformation of the cuts. The $\beta$ cycles
also have a part on the second sheet. The $\alpha$ cycles are oriented counterclockwise and
the cycle $\beta_j$ is oriented from $\omega^j z_1$ to $\omega^j z_2$ on the first sheet  \label{fig:cycles} }  
\end{figure}
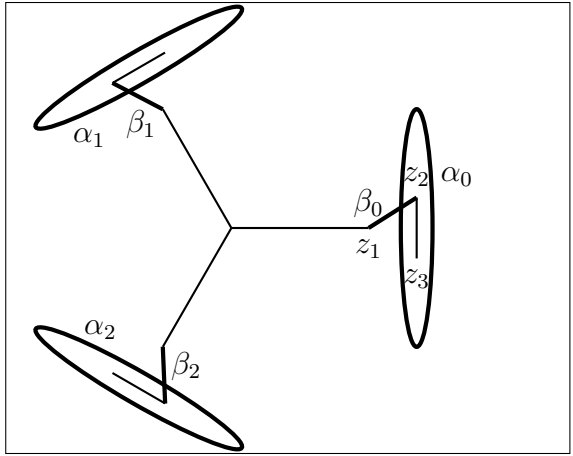

\begin{lemma}  \label{lemma:Boutrouxbeta}
$\xi dz$ has the Boutroux condition if and only if
\begin{equation} \label{eq:Boutrouxbeta0} 
	\Re \left( \oint_{\beta_0} \xi dz \right) = 0. 
	\end{equation}
\end{lemma}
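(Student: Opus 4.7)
The plan is to reduce the six real conditions that define the Boutroux property---one for each basis cycle in \eqref{homologybasis}---to a single real condition, by exploiting the two natural symmetries of the spectral curve \eqref{def:speccurve}: the $\mathbb Z_3$-rotational symmetry and the complex conjugation symmetry. Both are present for free because $P(\xi,z)$ has real coefficients (with $A,t\in\mathbb R$) and depends on $(z,\xi)$ only through combinations that are invariant under $(z,\xi)\mapsto(\omega z,\omega^{-1}\xi)$.

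For the rotational reduction, I would observe that $P(\omega^{-1}\xi,\omega z)=P(\xi,z)$, so $\sigma:(z,\xi)\mapsto(\omega z,\omega^{-1}\xi)$ is a holomorphic automorphism of $\mathcal R$, and it preserves the differential $\xi dz$ because $(\omega^{-1}\xi)\,d(\omega z)=\xi dz$. The leading asymptotic $\xi_1(z)=z^2+tz^{-1}+O(z^{-4})$ in \eqref{eq:xiatinfinity} shows that $\sigma$ stabilizes the first sheet $\mathcal R_1$, and by the $\mathbb Z_3$-symmetric placement of the branch points and the cuts one has $\sigma_*\alpha_j=\alpha_{j+1}$, $\sigma_*\beta_j=\beta_{j+1}$ (indices mod $3$). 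Invariance of $\xi dz$ under $\sigma$ then yields
\[ \oint_{\alpha_0}\xi dz=\oint_{\alpha_1}\xi dz=\oint_{\alpha_2}\xi dz, \qquad \oint_{\beta_0}\xi dz=\oint_{\beta_1}\xi dz=\oint_{\beta_2}\xi dz, \]
so the Boutroux condition is equivalent to the two real equations $\Re\oint_{\alpha_0}\xi dz=0$ and $\Re\oint_{\beta_0}\xi dz=0$.

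Next, since the coefficients of \eqref{def:speccurve} are real, the map $\tau:(z,\xi)\mapsto(\bar z,\bar\xi)$ is an anti-holomorphic involution of $\mathcal R$ with $\tau^*(\xi dz)=\overline{\xi dz}$. Because $z_1\in\mathbb R$ and $z_3=\overline{z_2}$, the region enclosed by $\alpha_0$ can be chosen $\tau$-invariant, and the real asymptotic of $\xi_1$ again shows that $\tau$ preserves $\mathcal R_1$. The cycle $\alpha_0$ is a counterclockwise loop on $\mathcal R_1$ around this $\tau$-symmetric set, and an anti-holomorphic map reverses orientation, so $\tau$ sends $\alpha_0$ to the same loop traversed clockwise, giving $\tau_*\alpha_0=-\alpha_0$ in $H_1(\mathcal R)$. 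Therefore
\[ \overline{\oint_{\alpha_0}\xi dz}=\oint_{\tau_*\alpha_0}\xi dz=-\oint_{\alpha_0}\xi dz, \]
so $\oint_{\alpha_0}\xi dz\in i\mathbb R$ automatically, without any restriction on $A$. Combined with the previous step, this collapses the Boutroux condition to the single equation $\Re\oint_{\beta_0}\xi dz=0$, as claimed.

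The only genuinely non-routine point is to verify the action of $\sigma$ and $\tau$ on the chosen homology basis: one must check that the sheet labelling in Figure \ref{fig:three-sheets} is compatible with both symmetries and that the cuts can be drawn $\mathbb Z_3$- and conjugation-symmetrically throughout the parameter range $t\in(t_*,t_{**})$, $A\in(0,A_3(t))$. This is a direct consequence of \eqref{eq:xiatinfinity} together with the fact that the branch points $\omega^j z_k$ themselves enjoy both symmetries. It is also worth noting that no analogous automatic vanishing occurs for $\beta_0$: under $\tau$, the cycle $\beta_0$ runs from $z_1$ to $z_3$ on $\mathcal R_1$ and back on $\mathcal R_2$, which is a genuinely different homology class from $-\beta_0$, so the condition $\Re\oint_{\beta_0}\xi dz=0$ is a real constraint that must be imposed by hand---and this constraint is what will select $A$ as a function of $t$ in the subsequent proof of Theorem \ref{theorem1}(a).
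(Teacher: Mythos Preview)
Your proof is correct and follows essentially the same approach as the paper's. Both arguments exploit the $\mathbb Z_3$-rotational symmetry to equate all $\alpha$-periods (respectively $\beta$-periods) with one another, and then use complex-conjugation symmetry together with the orientation reversal it induces on $\alpha_0$ to conclude that $\oint_{\alpha_0}\xi\,dz$ is automatically purely imaginary; the paper states these symmetries directly as functional identities $\xi_1(\omega z)=\omega^2\xi_1(z)$ and $\xi_1(\bar z)=\overline{\xi_1(z)}$, whereas you phrase them more geometrically as automorphisms of $\mathcal R$ acting on homology, but the content is the same.
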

\begin{proof}
First observe that the residues of $\xi dz$ in the poles (the two points at infinity) are real;
in fact they are $\pm t$, as can be deduced from \eqref{eq:xiatinfinity}.

Because of symmetry $\xi_1(\overline{z}) = \overline{\xi_1(z)}$ in the real axis,
we have 
\begin{equation} \label{alphasymmetry1} 
	\oint_{\alpha_0} \xi dz = \oint_{\alpha_0} \xi_1(z) dz   \in i \mathbb R. 
	\end{equation}
By the rotational $\mathbb Z_3$ symmetry $\xi_1(\omega z) = \omega^2 \xi_1(z)$, we then also have
\begin{equation} \label{alphasymmetry2}  
	\oint_{\alpha_j} \xi dz = \oint_{\alpha_0} \xi dz \in i \mathbb R, \qquad j=1,2.  
	\end{equation}
Similarly,
\begin{equation} \label{betasymmetry}  \oint_{\beta_j} \xi dz = \oint_{\beta_0} \xi dz, \qquad j=1,2. 
\end{equation}
and therefore we have the Boutroux condition if and only if \eqref{eq:Boutrouxbeta0} 
is satisfied.
\end{proof}

We define for $t \geq t_*$, $0 < A \leq A_3(t)$, 
\begin{equation} \label{def:h} 
	h(t,A) = \Re \left( \oint_{\beta_0} \xi dz \right) = \Re \left( \int_{z_1}^{z_2} \left(\xi_1(z) - \xi_2(z) \right) dz \right).
	\end{equation}
The second identity in \eqref{def:h} comes from the definition of $\beta_0$ as a path
from $z_1$ to $z_2$ on the first sheet and back from $z_2$ to $z_1$ on the second sheet. Note that
\begin{equation} \label{htstar} 
	h(t_*, A_1(t_*)) = 0,
	\end{equation}
since if we approach the critical values $t=t_*$, $A = A_1(t_*)$, the cycle $\beta_0$ shrinks to a point.

\begin{lemma} \label{lemma:hnotzero} 
For each $t\in [t_*, 8]$ 	we have 
\[ h(t, A_3(t)) > 0. \]
\end{lemma}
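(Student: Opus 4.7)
The plan is to exploit the degeneracy of the spectral curve at the upper boundary $A = A_3(t)$, where $Q(w)$ acquires a double root. By the description preceding Figure \ref{fig:three-sheets}, the double root $w_*$ of $Q$ lies in $\mathbb{R}^-$, so at $A = A_3(t)$ one has $z_2 = |w_*|^{1/3} e^{i\pi/3}$, i.e.\ $z_2$ sits on the ray $\arg z = \pi/3$ which is part of $\Sigma_2$. At this configuration the $\Sigma_2$-cut (separating $\mathcal R_2, \mathcal R_3$) absorbs the former endpoint $z_2$ of $\Sigma_1^w$, producing a triple branch point where all three sheets meet.

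First, I would deform the path underlying $\beta_0$ to the broken contour $z_1 \to |z_2| \to z_2$, the first leg along the positive real axis and the second along the arc $|z| = |z_2|$ inside $S_0$. Since both $\xi_1$ and $\xi_2$ are analytic on $(z_1, \infty)$ and satisfy $\xi_j(\bar z) = \overline{\xi_j(z)}$, the integrand $\xi_1 - \xi_2$ is real on $[z_1, |z_2|]$, and that leg contributes a genuine real number to $h$. For the arc leg, the symmetry $z \mapsto \omega \bar z$, which fixes the ray $\arg z = \pi/3$ where $z_2$ lies, constrains $(\xi_1 - \xi_2)\, dz$ and allows one to extract $\Re$ in closed form.

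Second, to verify the sign, I would work in the depressed variables $\eta = \xi - z^2/3$, under which $P(\xi,z)=0$ becomes $\eta^3 - p(z)\eta - q(z) = 0$ with $p(z) = \tfrac{1}{3}z^4 + (1+t)z$ and $q(z) = \tfrac{2}{27} z^6 - \tfrac{2-t}{3}z^3 - A$. At $A = A_3(t)$ the double root at $z = z_2$ is $\eta_* = -3q(z_2)/(2p(z_2))$, giving explicit control of $\xi_1 - \xi_2$ near $z_2$ and an explicit asymptotic behavior $\xi_1-\xi_2 \sim c(z-z_2)^{1/3}$ at the triple point. One then needs to establish that the net real contribution of the deformed integral is strictly positive, which can be supported by a monotonicity argument based on
\[ \frac{\partial h}{\partial A} \;=\; -\Re \int_{z_1}^{z_2} \left(\frac{1}{3\xi_1^2 - 2z^2\xi_1 - (1+t)z} - \frac{1}{3\xi_2^2 - 2z^2\xi_2 - (1+t)z}\right) dz, \]
which, up to the constant $3C$ in \eqref{omegaR}, is $-\Re \oint_{\beta_0} \omega_R$ and is amenable to period-matrix analysis.

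The main obstacle will be the clean verification of positivity uniformly across $t \in [t_*, 8]$. At the endpoints the situation is delicate: near $t = t_*$ the cycle $\beta_0$ is short but $A_3(t_*) - A_1(t_*) > 0$ so one expects a genuinely positive value (by continuity from $h(t_*, A_1(t_*)) = 0$ and monotonicity in $A$), while as $t \to 8^-$ the value $A_3(t) \to 0$ forces an additional degeneration that must be controlled separately. I would handle these two regimes by the explicit geometric analysis above, and cover the interior by combining the monotonicity of $h$ in $A$ with the continuity of $(t, A) \mapsto h(t, A)$ up to the boundary.
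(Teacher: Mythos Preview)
Your proposal is a sketch rather than a proof, and it contains a circularity that would have to be removed before it could work. The ``monotonicity argument based on $\partial h/\partial A$'' that you invoke to force positivity is exactly Lemma~\ref{lemma:hincreasing}, and in the paper the \emph{sign} of $\partial h/\partial A$ is fixed by appealing to the present lemma (the holomorphic-differential argument only gives $\partial h/\partial A\neq 0$). So you cannot use $\partial h/\partial A>0$ here without an independent determination of its sign, and ``period-matrix analysis'' is not such a determination. In addition, your contour $z_1\to|z_2|\to z_2$ is not uniformly usable: at $A=A_3(t)$ one has $z_1^3=(8-t)^2/12$ and $|z_2|^3=3(1+t)$, so $|z_2|<z_1$ for $t$ near $t_*$ and $|z_2|>z_1$ for $t$ near $8$; in the former regime your ``first leg along the positive real axis'' runs backward and the circular arc would have to cross $\Sigma_1$, so the clean real/arc split you describe does not hold.

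The paper's proof avoids both issues by a different contour choice and a purely one-variable sign computation. It deforms the $\beta_0$-integral to $z_1\to 0\to z_2$, with the second leg along the ray $\arg z=\pi/3$ on which $z_2$ lies. On $[0,z_1]$ the values $\xi_{1,\pm}$ and $\xi_{2,\pm}$ are complex conjugates, so that leg contributes nothing to the real part. The ray leg is then mapped by the $\mathbb Z_3$ symmetry $\xi_j(\omega z)=\omega^2\xi_j(z)$ to an integral on $(-|z_2|,0)\subset\mathbb R$, and using $\xi_1+\xi_2+\xi_3=z^2$ one obtains the closed form
\[
h(t,A_3(t))=\tfrac12\int_{-|z_2|}^{0}\bigl(s^2-3\xi_1(s)\bigr)\,ds.
\]
Positivity is then a direct algebraic check: inserting $\xi=\tfrac13 s^2$ into \eqref{def:speccurve} with $A=A_3(t)$ shows that $s^2-3\xi_1(s)$ has a unique zero on $(-\infty,0]$, at $s=-|z_2|$, and is positive near $0$ (since $\xi_1(0^-)=-A^{1/3}<0$), hence positive on the whole interval of integration. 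No monotonicity in $A$ and no period analysis are needed. If you want to rescue your route, you would need to replace the appeal to $\partial h/\partial A$ by an explicit sign computation of this kind; otherwise the argument does not close.
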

\begin{proof}
For $A = A_3(t)$ it is easy to calculate from \eqref{def:Q} and \eqref{def:A3} that
the three zeros of $Q$  are $w_1 = \frac{1}{12}(t-8)^2$ and $w_2 = w_3 = -3(1+t)$. 
Thus
\begin{equation} \label{eq:z2forA3}
\begin{aligned} 
	 z_1 & = 12^{-1/3} (8-t)^{2/3}, \\
	z_2 & = 3^{1/3} (1+t)^{1/3} e^{\pi i/3}, \\
	z_3 & = 3^{1/3} (1+t)^{1/3} e^{-\pi i/3} = \omega^2 z_2.
	\end{aligned} \end{equation}
Then $z_2$, $\omega z_2 = -|z_2|$ and $\omega^2 z_2 = z_3$ are double branch points 
that connect all three sheets, and the three values $\xi_1$, $\xi_2$, $\xi_3$ coincide 
for these values of $z$. 
From the spectral curve equation \eqref{def:speccurve} we have
\[ \xi_1 + \xi_2 + \xi_3 = z^2 \]
and it follows that 
\begin{align}  \label{eq:xi123equal}
	\xi_1(-|z_2|) = \xi_2(-|z_2|) = \xi_3(-|z_2|) = \frac{1}{3} |z_2|^2 
	= 3^{-1/3} (1+t)^{2/3}, 
	\end{align}
see also \eqref{eq:z2forA3}.

Now we calculate \eqref{def:h} by integrating $\xi_1 - \xi_2$ from $z_1$ to $0$
and then from $0$ to $z_2$ along the ray $\arg z = \pi/3$. The integral from
$z_1$ to $0$ does not contribute to the real part, since $\xi_1$ and $\xi_2$
are complex conjugates of  each other there. What remains is
\begin{align} \nonumber
		h(t,A_3(t)) & = \Re \left( \int_0^{z_2} ( \xi_1(z) - \xi_{2,-}(z)) dz \right) \\
 	 & = \Re \left( e^{\pi i/3} \int_0^{|z_2|} [\xi_1(r e^{\pi i/3}) - \xi_{2,-}(r e^{\pi i/3})] dr \right)
	\label{eq:htA2} 
	\end{align}
	where we put $z = r e^{\pi i/3}$.
	The value $\xi_{2,-}$ in \eqref{eq:htA2} denotes the limit of $\xi_2$ as we approach $\arg z = \pi/3$ from 
	the sector $S_0$.
This is the limiting value from the right if we orient $\arg z = \pi/3$ from $0$ to $\infty$.

By the symmetry $\xi_j(z) = \omega \xi_j( \omega z)$, we have $e^{\pi i/3} \xi_1(re^{\pi i/3}) = -\xi_1(-r)$,
$e^{\pi i/3} \xi_{2,-}(r e^{\pi i/3}) = -\xi_{2,+}(-r)$, where $\xi_{2,+}$ denotes the limit from the upper
half plane. Thus  by putting $s = -r$ in \eqref{eq:htA2} we obtain
\begin{align} 
	h(t,A_3(t)) = \Re  \left( \int_{-|z_2|}^0 (\xi_{2,+}(s) - \xi_1(s)) ds  \right)
	\label{eq:htA3} \end{align}
where the negative real axis is oriented from left to right.
For $s < 0$ we have that $\xi_1(s)$ is real and $\xi_{2,+}(s)$ and $\xi_{3,+}(s)$
are complex conjugate.
Since $\xi_1 + \xi_2 + \xi_3 = z^2$, we have $\Re \xi_{2,+}(s) = \frac{1}{2}(s^2 - \xi_1(s))$ for $s < 0$,
and so by \eqref{eq:htA3}
\begin{align}
	h(t,A_3(t)) = \frac{1}{2} \int_{-|z_2|}^0 (s^2 - 3\xi_1(s) ) ds 
	\label{eq:htA4}
\end{align}

Putting $z=s < 0$, $\xi = \frac{1}{3} s^2$, $A = A_3(t)$, in the spectral curve equation \eqref{def:speccurve}
we obtain
\[ P(\tfrac{1}{3} s^2, s) = - \frac{1}{27}(2s^3 + 3t-24)(s^3 + 3t + 3) \]
which, since $t \leq 8$, has exactly one zero for $s < 0$, namely
\[ s = - 3^{1/3}(1+t)^{1/3} = - |z_2|, \]
see \eqref{eq:z2forA3}.
It follows that
\begin{equation} \label{eq:xi1equality} 
	\xi_1(s) = \frac{1}{3} s^2  \qquad \text{if and only if } s = - |z_2|. 
	\end{equation}
[The equality $\xi(s) = \frac{1}{3} s^2$ for $s= - |z_2|$ is also immediate
from  \eqref{eq:z2forA3} and \eqref{eq:xi123equal}.]
As $s \to 0-$ we have $\xi_1(s) \to - A^{1/3}$ (this follows from \eqref{def:speccurve})
and so $\xi_1(s) < \frac{1}{3} s^2$ for $s$ close to $0$. From \eqref{eq:xi1equality} we then
get
\[ 	\xi_1(s) < \frac{1}{3} s^2  \qquad \text{for } s \in (-|z_2|, 0) \] 
and then \eqref{eq:htA4} tells us that $h(t,A_3(t)) > 0$, which proves the lemma.
\end{proof}

\begin{lemma} \label{lemma:hincreasing} 
For all $t \geq t_*$ and $0 < A < A_3(t)$	we have
\begin{equation} \label{eq:partialhA} 
	\frac{\partial h}{\partial A} > 0. 
	\end{equation}
\end{lemma}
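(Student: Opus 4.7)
The natural strategy is to differentiate under the integral sign in the representation $h(t,A) = \Re \int_{z_1}^{z_2} (\xi_1(z) - \xi_2(z))\,dz$ of \eqref{def:h}. The branch points $z_1 = z_1(t,A)$ and $z_2 = z_2(t,A)$ do depend on $A$, but since $\xi_1(z_j) = \xi_2(z_j)$ at these branch points, the boundary contributions arising from $\partial z_j/\partial A$ vanish. Hence
\[
\frac{\partial h}{\partial A}
= \Re \int_{z_1}^{z_2} \left(\frac{\partial \xi_1}{\partial A} - \frac{\partial \xi_2}{\partial A}\right) dz.
\]
From $P(\xi_j(z),z) = 0$ and $\partial P/\partial A = 1$, implicit differentiation yields $\partial \xi_j/\partial A = -1/P_\xi(\xi_j,z)$, where $P_\xi(\xi,z) = 3\xi^2 - 2z^2\xi - (1+t)z$. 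Thus
\[
\frac{\partial h}{\partial A}
= -\Re \int_{z_1}^{z_2} \left[\frac{1}{P_\xi(\xi_1,z)} - \frac{1}{P_\xi(\xi_2,z)}\right] dz,
\]
which is the real part of a $\beta_0$-period of the holomorphic differential $dz/P_\xi(\xi,z)$ on $\mathcal R$ (this is precisely $\omega_R/(3C)$ in the notation of \eqref{omegaR}).

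To extract a sign, I plan to use the partial fraction identity $P_\xi(\xi_j) = \prod_{k\neq j}(\xi_j - \xi_k)$ together with $\xi_1+\xi_2+\xi_3 = z^2$ to rewrite the integrand as
\[
\frac{1}{P_\xi(\xi_1)} - \frac{1}{P_\xi(\xi_2)}
= \frac{z^2 - 3\xi_3(z)}{(\xi_1-\xi_2)(\xi_1-\xi_3)(\xi_2-\xi_3)},
\]
the denominator being a branch of $\sqrt{Q(z^3)}$. I would then deform the path from $z_1$ to $z_2$, mirroring the argument in Lemma \ref{lemma:hnotzero}: first along the positive real axis from $z_1$ down to the origin (approached from the appropriate side of the cut $\Sigma_1^o \cap \mathbb R^+$), and then from $0$ to $z_2$ along the ray $\arg z = \pi/3$. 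The $\mathbb Z_3$-symmetry $\xi_j(\omega z) = \omega^2 \xi_j(z)$ transforms the ray piece into an integral over a subinterval of the negative real axis, where $\xi_1(s)$ is real and $\xi_{2,+}(s),\xi_{3,+}(s)$ are complex conjugates. Using $\Re \xi_{2,+}(s) = \tfrac{1}{2}(s^2 - \xi_1(s))$ one can then rewrite the real part as an integral of an explicit real rational expression in $s$, $\xi_1(s)$, and $\sqrt{|Q(s^3)|}$.

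The main obstacle is the final sign analysis. One must carefully track the branches of $\xi_j$ that arise when the deformed contour crosses (or skirts) the cuts in $\Sigma_1$, and then exploit a strict inequality analogous to the bound $\xi_1(s) < s^2/3$ on $(-|z_2|,0)$ used at the end of Lemma \ref{lemma:hnotzero}. I expect this inequality to extend to every $A \in (0,A_3(t))$ by a continuity/monotonicity argument based on the spectral curve equation evaluated at $\xi = s^2/3$, which forces equality only at $s = -|z_2|$ (the branch point where sheets $\mathcal R_1,\mathcal R_2,\mathcal R_3$ meet when $A = A_3(t)$) and a strict inequality elsewhere. Combined with the explicit sign of $\sqrt{Q(z^3)}$ on the two pieces of the deformed contour, this should yield the strict positivity $\partial h/\partial A > 0$ claimed in \eqref{eq:partialhA}.
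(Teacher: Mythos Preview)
Your setup is correct and matches the paper exactly: differentiating under the integral, noting that the endpoint contributions vanish because $\xi_1=\xi_2$ at $z_1,z_2$, and recognising $\partial h/\partial A$ as the real part of the $\beta_0$-period of the holomorphic differential $-dz/P_\xi(\xi,z)$ (a constant multiple of $\omega_R$). Your partial-fraction identity is also correct.

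The gap is in the second half. The contour deformation you propose — from $z_1$ to $0$ along $\mathbb R^+$, then from $0$ to $z_2$ along the ray $\arg z=\pi/3$ — does not reach $z_2$ in general: in Lemma~\ref{lemma:hnotzero} one has $A=A_3(t)$, and precisely for that value $z_2=3^{1/3}(1+t)^{1/3}e^{\pi i/3}$ lies \emph{on} the ray, which is what makes the deformation and the subsequent reduction to a real integral over $(-|z_2|,0)$ possible. For generic $A\in(0,A_3(t))$ the point $z_2$ sits strictly inside the sector $0<\arg z<\pi/3$, so the ray does not terminate at $z_2$ and the reduction to a one-variable real integrand involving $s^2-3\xi_1(s)$ collapses. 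Likewise, the inequality $\xi_1(s)<s^2/3$ on $(-|z_2|,0)$ was tied to the fact that at $A=A_3(t)$ all three sheets coalesce at $-|z_2|$; there is no analogous distinguished point for general $A$, and your sketch does not supply a replacement.

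The paper avoids any direct sign computation. Having identified $\partial h/\partial A$ as the real part of a $\beta_0$-period of a nonzero holomorphic differential, it invokes the classical fact (Riemann's bilinear relations, e.g.\ \cite[Prop.~III.3.3]{FK}) that a nonzero holomorphic differential cannot have all periods purely imaginary. Since the $\alpha_j$-periods are automatically imaginary by the real symmetry \eqref{alphasymmetry1}, and all $\beta_j$-periods coincide by the $\mathbb Z_3$-symmetry, one concludes $\partial h/\partial A\neq 0$ for \emph{every} $(t,A)$ in the region. Continuity in $(t,A)$ then forces a fixed sign, and the boundary information $h(t_*,A_1(t_*))=0$ together with $h(t,A_3(t))>0$ from Lemma~\ref{lemma:hnotzero} pins it down as positive. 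This abstract argument is short and robust, whereas a direct positivity proof along your lines would need a genuinely new contour and integrand analysis valid for all $A$.
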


\begin{proof} Note that by \eqref{def:speccurve}
\begin{equation} \label{eq:partialxiA} 
	\frac{\partial \xi}{\partial A} =  - \frac{ \frac{\partial P}{\partial A}}{\frac{\partial P}{\partial \xi}} =
	 - \frac{1}{3\xi^2 - 2z^2 \xi - (1+t) z} 
	\end{equation}
is a meromorphic function on the Riemann surface with simple poles at the branch points 
$\omega^j z_k$, $j,k=1,2,3$, since
these are the points where $\frac{\partial P}{\partial \xi} = 0$. 
The meromorphic differential $dz$ has a zero at the branch points, and therefore
$\frac{\partial \xi}{\partial A} dz$ is a meromorphic differential whose only possible
poles are at the points at infinity.

Because of \eqref{eq:xiatinfinity} and \eqref{eq:partialxiA} we have
\begin{align*} 
	\frac{\partial \xi_1}{\partial A}(z) & = O(z^{-4}) && \text{ as } z \to \infty, \\
  \frac{\partial \xi_2}{\partial A}(z) & = O(z^{-5/2}) && \text{ as } z \to \infty, \, z \in S_0,
	\end{align*}
Then it easily follows that the singularities at the points at infinity are removable, 
and so $\frac{\partial \xi}{\partial A} dz$ is a holomorphic differential.
It is
a multiple of the holomorphic differential $\omega_R$ given in \eqref{omegaR}.
The holomorphic differential has a double zero at both points at infinity, and these are the only
zeros of $\frac{\partial \xi}{\partial A} dz$, since the genus is $3$.

Because of the symmetries \eqref{alphasymmetry2}--\eqref{betasymmetry}, we have  
for $j=1,2$,
\begin{equation} \label{eq:Riemann} \begin{aligned} 
	\oint_{\alpha_j} \frac{\partial \xi}{\partial A} dz & = 
	\oint_{\alpha_0} \frac{\partial \xi}{\partial A}dz \in i \mathbb R,  \\
	\oint_{\beta_j} \frac{\partial \xi}{\partial A} dz 
	& = \oint_{\beta_0} \frac{\partial \xi}{\partial A}dz.
	\end{aligned} \end{equation}
Not all periods of a non-zero holomorphic differential can be 
purely imaginary, see e.g.\ \cite[Proposition III.3.3]{FK}. 
Hence from \eqref{eq:Riemann} and \eqref{def:h} we conclude that 
\[ \frac{\partial h}{\partial A} = \Re 	\left( \oint_{\beta_0} \frac{\partial \xi}{\partial A} dz \right) \neq 0. \]

Then by continuity in the parameters $t$ and $A$, we either have
$\frac{\partial h}{\partial A} > 0$,  or $\frac{\partial h}{\partial A} < 0$,
for all $t,A$ with $t \geq t_*$ and $0 < A < A_3(t)$.  
Since $h(t,A_3(t)) > 0$ by Lemma \eqref{lemma:hnotzero} and $h(t_*, A_1(t_*)) = 0$ by \eqref{htstar},  we have the 
first possibility, and the lemma is proved.
\end{proof}

After these preparations it is easy to prove part (a) of Theorem \ref{theorem1}.

\begin{proof}[Proof of Theorem \ref{theorem1} (a)]

Since $A \mapsto h(t_*,A)$ is increasing because of Lemma \ref{lemma:hincreasing},
we have $h(t_*,0) < h(t_*,A_1(t_*)) = 0$, see \eqref{htstar}.
By Lemma \ref{lemma:hnotzero}, we have  $h(8,0) >  0$ and so by continuity there is $t_{**} \in (t_*, 8)$
such that $h(t_{**}, 0) = 0$ and $h(t,0) < 0$ for every $t \in [t_{*}, t_{**})$. 
Let $t \in (t_*, t_{**})$. Since $h(t,0) < 0$ and $h(t,A_3(t)) > 0$ (see Lemma \ref{lemma:hincreasing})
there is a value $A = A(t) \in (0, A_3(t))$ such that $h(t,A(t))=0$. This value for $A$ is unique because
of Lemma \ref{lemma:hincreasing}, and so $t \mapsto A(t)$ is continuous with $A(t_{**}) = 0$. 
For this value of $A=A(t)$ we have the Boutroux condition by Lemma \ref{lemma:Boutrouxbeta}
and the definition \eqref{def:h} of $h(t,A)$. 
The fact that $h(t_{**}, 0) = 0$ implies \eqref{boundaryA}. This proves part (a).
\end{proof}

\subsection{Proof of Theorem \ref{theorem1} (b)}

Let $t \in (t_*, t_{**})$ and $A = A(t)$.
We define
\begin{equation} \label{def:H}
	H(z) = \Re \int_{z_1}^z (\xi_1(s)  - \xi_2(s)) ds, \qquad z \in S_0 \setminus [0,z_1] 
	\end{equation}
with a path of integration in $S_0 \setminus [0,z_1]$. 
Because of the Boutroux condition \eqref{cond:Boutroux} $H(z)$ is well-defined, and it is independent of the path
from $z_1$ to $z$. Indeed, if we take two paths $\gamma_1$ and $\gamma_2$ in $S_0$ with corresponding values $H_1$ and $H_2$ then
\[ H_1(z) - H_2(z) =  \Re \int_{\gamma_2^{-1} \circ \gamma_1} (\xi_1(s) - \xi_2(s)) ds  \]
which can be identified as the real part of $\oint_{\gamma} \xi ds$ for a closed curve $\gamma$
on the Riemann surface and the real part is zero because of \eqref{cond:Boutroux}.

Then $H$ is a well-defined harmonic function on $S_0 \setminus [0,z_1]$ and it extends to
a continuous function on $\overline{S}_0$. Its level sets $H(z) = c$ are the trajectories
of the quadratic differential $- (\xi_1-\xi_2)^2 ds^2$, see \cite{Str}.  
Also by Lemma \eqref{lemma:Boutrouxbeta} and \eqref{def:h}, \eqref{def:H} we have $H(z_2) = 0$.
By the local theory of quadratric differential near a simple zero \cite{Str},
there are three trajectories emanating from $z_2$ that are on the zero level set $H(z) = 0$.
Similarly, $H$ is zero on three trajectories from $z_3$.

\begin{lemma} \label{lemma:Honray}
$H$ has exactly one zero on the half-ray $\arg z = \pi/3$.
\end{lemma}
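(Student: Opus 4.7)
The plan is to parametrize the half-ray by $r\mapsto re^{i\pi/3}$, set $\phi(r) := H(re^{i\pi/3})$, and analyze $\phi$ as a real-valued function of $r$. The strategy is to show that $\phi$ has a single-humped shape on $(0,\infty)$: $\phi(0)=0$, then $\phi$ strictly increases to a positive maximum, then strictly decreases to $-\infty$. The key observation is that on this ray the spectral curve can be rewritten as a cubic with real coefficients, so the relevant bookkeeping reduces to a one-variable problem.

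First I would establish the two endpoint behaviors. For $r\to 0$, I claim $H$ extends continuously to $[0,z_1]$ by the value $0$. Indeed, on $[0,z_1]$ we have $\xi_{2,+}-\xi_{1,+}=2\pi i t\,d\mu_1/ds$ by \eqref{eq:mu}, which is purely positive imaginary since $\mu_1$ is a positive real measure; hence $(\xi_1-\xi_2)_+\,ds$ is purely imaginary, and $\Re\int_{z_1}^s(\xi_1-\xi_2)_+\,ds=0$ for every $s\in[0,z_1]$. For the behavior at infinity, the expansions \eqref{eq:xiatinfinity} give $\int_{z_1}^z(\xi_1-\xi_2)ds \sim z^3/3 - (2/3)z^{3/2}$, and on $z=re^{i\pi/3}$ this is $-r^3/3 - (2i/3)r^{3/2}$, so $\phi(r)\sim -r^3/3 \to -\infty$.

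The central step is the substitution $z=re^{i\pi/3}$, $\xi = e^{2\pi i/3}\eta$ in \eqref{def:speccurve}, which yields the real cubic
\[
P_r(\eta) = \eta^3 - r^2\eta^2 + (1+t)r\eta + (A - r^3) = 0.
\]
Since the substitution is by a sixth root of unity, the $\eta$-discriminant of $P_r$ equals the $\xi$-discriminant of $P(\cdot,re^{i\pi/3})$, namely $Q(-r^3)$. In the regime $0<A<A_3(t)$, $Q$ has one positive real root and a complex conjugate pair; combined with $Q(0)=-27A^2<0$ and positive leading coefficient, $Q<0$ on $(-\infty,0]$. Hence $P_r$ has exactly one real root $\eta_1(r)$ and a complex conjugate pair $\eta_2(r),\overline{\eta_2(r)}$ for every $r\geq 0$; matching the asymptotics $\xi_1\sim z^2$ and $\xi_2\sim z^{1/2}$ identifies $\xi_1 \leftrightarrow \eta_1$ and $\xi_2^{S_0}\leftrightarrow \eta_2$ throughout the ray, with $\eta_1(0)=-A^{1/3}$. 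Using $e^{i\pi/3}\xi_j = -\eta_j$ and the Vieta relation $\eta_1+2\Re\eta_2 = r^2$, one obtains
\[
\phi'(r) = \Re\bigl(e^{i\pi/3}(\xi_1-\xi_2)(re^{i\pi/3})\bigr) = \Re\eta_2(r)-\eta_1(r) = \frac{r^2 - 3\eta_1(r)}{2}.
\]
In particular $\phi'(0) = \tfrac{3}{2}A^{1/3} > 0$ and $\phi'(r) \to -\infty$ (since $\eta_1(r)\sim r^2$). The condition $\phi'(r) = 0$ reads $\eta_1(r) = r^2/3$; substituting $\eta = r^2/3$ into $P_r$ yields $2r^6 - 9(t-2)r^3 - 27A = 0$, a quadratic in $u = r^3$ whose product of roots is $-27A/2 < 0$, so there is a unique positive solution $u^*>0$ and a unique $r^* > 0$ with $\phi'(r^*)=0$. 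Consequently $\phi$ strictly increases on $(0,r^*)$ from $\phi(0)=0$ to a positive maximum, then strictly decreases on $(r^*,\infty)$ to $-\infty$, and by the intermediate value theorem has exactly one zero on $(0,\infty)$.

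The principal technical nuisance is the bookkeeping of the sheet identification on the ray (which is a cut for $\xi_2$ since it lies in $\Sigma_2$): one must know that, on the $S_0$ side, $\xi_1$ and $\xi_2^{S_0}$ correspond respectively to the real root and the complex root of $P_r$ for every $r \geq 0$, without any branch reshuffling. The nonvanishing of the discriminant $Q(-r^3)$ on $r \geq 0$ rules out such reshuffling and reduces the lemma to the elementary single-variable calculus argument above.
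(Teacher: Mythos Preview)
Your proof is correct and follows essentially the same approach as the paper: both reduce to showing that $r \mapsto H(re^{i\pi/3})$ has derivative $\tfrac{1}{2}(r^2 - 3\eta_1(r))$ for a real-valued function $\eta_1$, and that this derivative vanishes exactly once on $(0,\infty)$ via the same quadratic $2u^2 - 9(t-2)u - 27A = 0$ in $u = r^3$. The paper achieves the real-cubic reduction by invoking the $\mathbb{Z}_3$ symmetry to map the ray $\arg z = \pi/3$ onto the negative real axis (so that your $\eta_1(r)$ is the paper's $\xi_1(-r)$), which is exactly equivalent to your substitution $\xi = e^{2\pi i/3}\eta$.
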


\begin{proof}
For $z = x e^{\pi i/3}$, $x > 0$, we integrate from $z_1$ to $z$ by first going from
$z_1$ to $0$ and then from $0$ to $z$ along the half ray of angle $\pi/3$. 
Then as in the proof of Lemma \ref{lemma:hnotzero}  we find from \eqref{def:H} that
\begin{align} \nonumber 
	H(x e^{\pi i/3}) & = \Re \left( \int_0^{z} (\xi_1(s) - \xi_{2,-}(s)) ds \right) \\
		& = \frac{1}{2} \int_{-x}^0  \left(s^2 - 3\xi_1(s) \right) ds, 
		\label{eq:Hintegral}
		\end{align}
see, in particular, \eqref{eq:htA4}.

The function $s^2 - 3 \xi_1(s)$ satisfies  
\begin{equation} \label{eq:dHbehavior}
\begin{aligned} s^2 - 3 \xi_1(s) & = - 2s^2 + O(s^{-1}), &&  \text{ as } s \to -\infty, \\
	s^2 - 3 \xi_1(s) & \to 3 A^{1/3} >0 &&  \text{ as } s \to 0-, 
	\end{aligned}
	\end{equation}.
	Hence it changes sign at least once on the negative real axis.
Suppose $s^*$ is a zero of $s^2 - 3 \xi_1(s)$ on the negative real axis. Then,
inserting $\xi_1(s^*) = \frac{1}{3} (s^*)^2$ into the cubic equation \eqref{def:speccurve}, we find
that $w^* = (s^*)^3$ is a zero of 
\[ w^2 - \frac{9(2-t)}{2} w - \frac{27}{2} A.  \]
This quadratic polynomial has one positive zero and one negative zero (since $A >0$).
Thus $w^*$ is the unique negative root, and then $s^*$ is unique as
the negative real solution of  $s^2-3\xi_1(s)$ and together with \eqref{eq:dHbehavior}
we find  
\[ s^2 - 3 \xi_1(s) \begin{cases} < 0 & \text{ for } s < s^* \\
	   > 0 & \text{ for } s^* < s < 0. \end{cases} \]
Then by \eqref{eq:Hintegral} we find that $x \mapsto H(x e^{\pi i/3})$ is strictly increasing for $0 < x < -s^*$ and 
strictly decreasing for $x > -s^*$. Then there is exactly one zero on the half ray $\arg z = \pi/3$, since
\begin{align*}
	H(x e^{\pi i/3}) & = - \frac{1}{3} x^3 + O(\log x),  \quad \text{ as } x \to +\infty, \\
	H(x e^{\pi i/3}) & \to 0, \qquad \text{ as } x \to 0+,
	\end{align*}
	see \eqref{eq:Hintegral} and \eqref{eq:dHbehavior}. 
\end{proof}

\begin{lemma} \label{lemma:Hatinfinity} As $z \to \infty$ in $S_0$, 
\[ H(z) = \Re \left( \frac{1}{3} z^3 - \frac{2}{3} z^{3/2} + O(\log|z|) \right). \]
\end{lemma}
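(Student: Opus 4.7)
The plan is to obtain this asymptotic estimate simply by integrating the asymptotic expansions of $\xi_1$ and $\xi_2$ from \eqref{eq:xiatinfinity}, termwise, along a suitable path from $z_1$ to $z$ in $S_0 \setminus [0,z_1]$. Since $H$ is well-defined independent of path (by the Boutroux condition, as noted right after \eqref{def:H}), we are free to choose any convenient path, for instance the concatenation of a segment from $z_1$ to a fixed point $z_0 \in S_0$ far from the origin, followed by a ray out to $z$ staying in $S_0$.

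First, I subtract the two expansions in \eqref{eq:xiatinfinity} to get, for $s \to \infty$ in $S_0$,
\[ \xi_1(s) - \xi_2(s) = s^2 - s^{1/2} + \frac{3t}{2}\, s^{-1} + O(s^{-5/2}). \]
Antidifferentiating term by term yields a primitive $F(z)$ of $\xi_1 - \xi_2$ of the form
\[ F(z) = \frac{1}{3} z^3 - \frac{2}{3} z^{3/2} + \frac{3t}{2} \log z + O(|z|^{-3/2}), \qquad z \to \infty, \ z \in S_0, \]
with an appropriate branch of $z^{3/2}$ and $\log z$ in $S_0$. Here the $O$-error is the antiderivative of the $O(s^{-5/2})$ tail; because the tail is integrable at infinity along any ray in $S_0$, standard term-by-term integration is justified.

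Next, I use $H(z) = \Re\bigl(F(z) - F(z_1) + \int_{z_1}^{z_0}(\xi_1-\xi_2)\,ds\bigr)$, where the first piece of the path contributes only an $O(1)$ additive constant to $F$. Taking real parts, the explicit leading terms $\tfrac{1}{3}z^3$ and $-\tfrac{2}{3}z^{3/2}$ are displayed, while $\Re\bigl(\tfrac{3t}{2}\log z\bigr) = \tfrac{3t}{2}\log|z|$ is absorbed into the $O(\log|z|)$ remainder, as are the bounded constants and the $O(|z|^{-3/2})$ term. This gives exactly the claimed expansion.

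The only mild subtlety is keeping track of the branches of $\xi_2$, $z^{1/2}$ and $\log z$ so that the primitive $F$ is consistent with the choice $\xi_2(z) \sim z^{1/2}$ in $S_0$ from \eqref{eq:xiatinfinity}, and ensuring the path of integration remains in $S_0 \setminus [0,z_1]$ to avoid the cut $\Sigma_1^o$. I do not expect any substantial obstacle: the estimate is essentially a direct consequence of the asymptotics at infinity already written down in \eqref{eq:xiatinfinity}.
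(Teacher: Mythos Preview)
Your proposal is correct and follows essentially the same approach as the paper, which simply states that the lemma is immediate from \eqref{eq:xiatinfinity} and \eqref{def:H}. You have merely supplied the details of the termwise integration that the paper leaves implicit.
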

\begin{proof}
This is immediate from \eqref{eq:xiatinfinity} and \eqref{def:H}.
\end{proof}

From Lemma \ref{lemma:Hatinfinity} it follows that there are two unbounded branches of the zero level set 
$H(z) = 0$
in $S_0$.  They tend to infinity at asymptotic angles $\pm \pi/6$,
respectively.

\begin{lemma} \label{lemma:Hforxreal}
For $x$ real and $x > z_1$ we have $H(x) > 0$.
\end{lemma}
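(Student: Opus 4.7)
The plan is to integrate $\xi_1 - \xi_2$ along the positive real ray from $z_1$ to $x$, to verify that the integrand is real-valued and strictly positive on $(z_1,\infty)$, and to conclude that $H(x)$, being the real-axis integral of a strictly positive function, is strictly positive.

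First I would justify taking the path of integration along the positive real axis. The interior part of the cut $\Sigma_1^o$ terminates at $z_1$, the set $\Sigma_2$ is a union of three rays at arguments $\pm\pi/3$ and $\pi$ and hence disjoint from $(z_1,\infty)$, and the whiskers $\gamma_{1,2}$ and $\gamma_{1,3}=\overline{\gamma_{1,2}}$ form a conjugate pair of simple arcs joining $z_1$ to the non-real branch points $z_2,z_3$ inside $S_0$. Hence the open ray $(z_1,\infty)$ meets no cut of $\xi_1-\xi_2$; both functions are analytic in a neighborhood of it, and by the real-coefficient symmetry $\xi_j(\bar s)=\overline{\xi_j(s)}$ they take real values there.

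Next I would show $\xi_1(s)>\xi_2(s)$ for every $s\in(z_1,\infty)$. The discriminant $D(P)(s)=Q(s^3)$ is strictly positive there: indeed $Q(0)=-27A^2<0$ while $Q(w)\to+\infty$ as $w\to+\infty$, and $Q$ has a unique real positive zero $w_1=z_1^3$ since the other two zeros $w_2,w_3$ give rise to the non-real branch points $z_2,z_3$. A real cubic with positive discriminant has three distinct real roots, so $\xi_1(s),\xi_2(s),\xi_3(s)$ are real and pairwise distinct throughout $(z_1,\infty)$. By \eqref{eq:xiatinfinity} they satisfy $\xi_1(s)\sim s^2>\xi_2(s)\sim s^{1/2}>\xi_3(s)\sim -s^{1/2}$ as $s\to+\infty$, and since any coalescence of two roots on $(z_1,\infty)$ would contradict $D(P)>0$, continuity propagates this strict ordering to every $s>z_1$. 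Integrating,
\[
        H(x)=\int_{z_1}^{x}(\xi_1(s)-\xi_2(s))\,ds>0.
\]

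The main subtlety is the geometric claim that neither whisker crosses the open ray $(z_1,\infty)$. This can be seen from the local structure of the zero level set of $H$ at the simple branch point $z_1$: writing $\xi_1(z)-\xi_2(z)=C\sqrt{z-z_1}+O(z-z_1)$ with $C$ real, one obtains $H(z)=\tfrac{2C}{3}\Re[(z-z_1)^{3/2}]+O((z-z_1)^{5/2})$, whose local zero set consists of three smooth arcs leaving $z_1$ at arguments $\pm\pi/3$ and $\pi$. The outgoing positive real direction $\arg(z-z_1)=0$ is not among them; combined with the conjugate symmetry $\gamma_{1,3}=\overline{\gamma_{1,2}}$ this confines the whiskers away from $(z_1,\infty)$ near $z_1$, and standard quadratic-differential dynamics (no further critical points on the positive real axis, since $D(P)>0$ there) then excludes any re-entry.
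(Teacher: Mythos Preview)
Your core argument is correct and matches the paper's proof exactly: on $(z_1,\infty)$ the functions $\xi_1,\xi_2$ are real (by the real-coefficient symmetry), they never coincide (since the only real positive zero of the discriminant is $z_1$), and the asymptotics \eqref{eq:xiatinfinity} fix the sign $\xi_1>\xi_2$; integrating gives $H(x)>0$.

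Your final paragraph, however, is unnecessary and mildly circular. At this point in the paper the whiskers $\gamma_{1,2},\gamma_{1,3}$ have not yet been constructed as trajectories of the quadratic differential---that is precisely what part~(b) of Theorem~\ref{theorem1} establishes, and Lemma~\ref{lemma:Hforxreal} is one of the ingredients in that proof. The cuts $\Sigma_1^w$ in the sheet structure are explicitly declared ``arbitrary at this point'' (see the sentence following \eqref{eq:Sigma2}), so you are free to take them as, say, straight segments from $z_1$ to $z_2$ and $z_3$, which trivially avoid $(z_1,\infty)$. Moreover, even without such a choice the function $H$ is well-defined on $S_0\setminus[0,z_1]$ by the Boutroux condition (as noted just after \eqref{def:H}), so the real part of the integral is path-independent regardless of where the provisional cuts sit. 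You can therefore drop the last paragraph entirely.
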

\begin{proof}
There is no solution of $\xi_1(s) = \xi_2(s)$ for $s > z_1$, since such a solution
would show up as a zero of the discriminant of $P$, and in the supercritical case
the zeros $z_2$ and $z_3$ are not real.  
Since  $\xi_{1,2}(s)\in \mathbb R$ when $s > z_1$ we conclude that $\xi_1(s) - \xi_2(s)$ has a constant
sign for $s > z_1$. Since $\xi_1(s) - \xi_2(s)  = s^2 + O(s^{1/2})$ as $s \to +\infty$,
see \eqref{eq:xiatinfinity}, the sign is positive. 
Then $H(x) > 0$ for $x > z_1$ by the definition \eqref{def:H} of $H$.
\end{proof}

With the help of these lemmas we can now prove part (b) of Theorem \ref{theorem1}.

\begin{proof}[Proof of Theorem \ref{theorem1} (b)]
We already noted that the trajectories of the quadratic differential $-(\xi_1-\xi_2)^2 ds^2$ that emanate
from $z_2$ are contained in the level set $H(z) = 0$ of $H$.
There are three such trajectories, and we follow them in the sector
\[ S_0^+ = \{ z \in \mathbb C \mid 0 < \arg z < \pi/3 \}. \]

The trajectories cannot exit $S_0^+$ along $(z_1, +\infty)$ by Lemma \ref{lemma:Hforxreal}.
They cannot go to a point in $[0, z_1)$ either, since the interval $[0,z_1]$ is a trajectory as well,
and trajectories do not intersect, except possibly at zeros of the quadratic differential.
Thus, if one of the trajectories from $z_2$ comes to the real axis, it will come to $z_1$.

There is at most one trajectory that comes to the ray $\arg z = \pi/3$, because of Lemma~\ref{lemma:Honray}.
Any unbounded trajectory that stays inside $S_0$ has to go to infinity with
asymptotic angle $\pi/6$. This follows from Lemma \ref{lemma:Hatinfinity}. 
There can be at most one such trajectory from $z_2$.

Combining all this we see that 
the only possible topology of the trajectories emanating from $z_2$ (that is, of
zero level sets of $H(z)$) is:
one of the trajectories from $z_2$ has to come to $z_1$, the other one intersects the 
ray $\arg z=\frac \pi 3$ and the third one goes to infinity, see Figure \ref{levelcurves}. 
The trajectory connecting  $z_1$ and $z_2$ forms the analytic arc $\gamma_{1,2}$, which
proves part (b) of Theorem \ref{theorem1}.
\end{proof}

\begin{figure}[t] 
\centering
\begin{overpic}[scale=1.0]{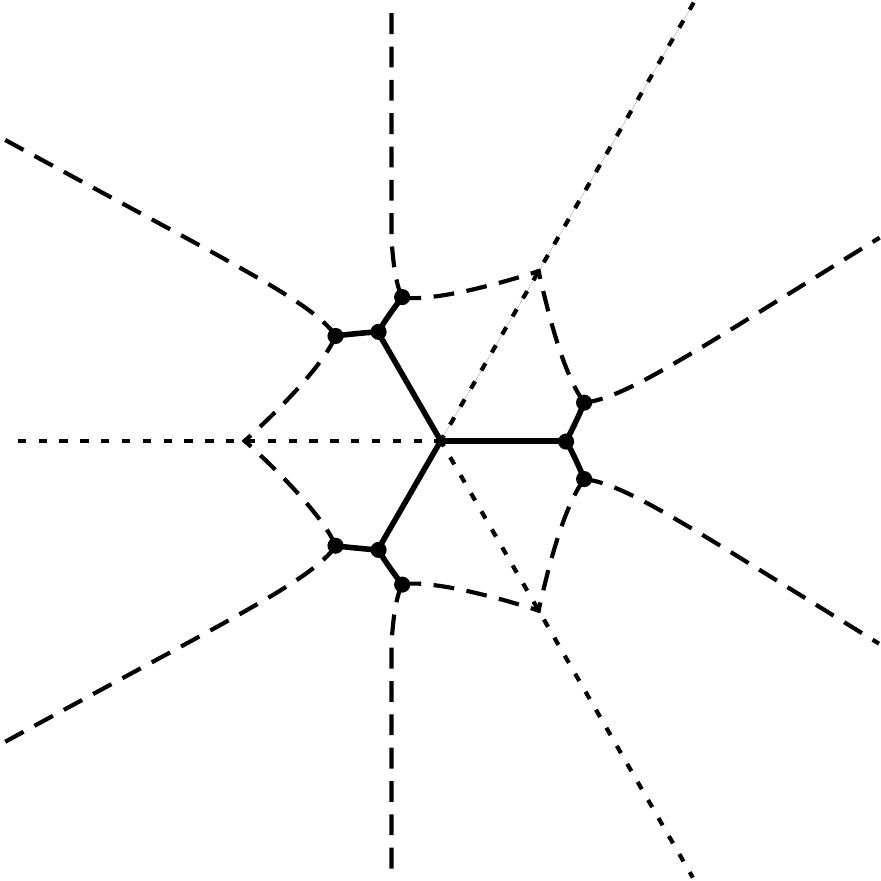}
	\put(65,49){$z_1$}
	\put(65,57){$z_2$}
	\put(65,41){$z_3$}
	\put(80,75){$\Huge -$}
	\put(82,49){$\Huge +$}
	\put(56,55){$\Huge +$}
	\put(56,43){$\Huge +$}
	\put(80,20){$\Huge -$}
\end{overpic}
 \caption{The zero level set of $H(z)$ given by \eqref{def:H} consists of the part of $\Sigma_1$ (solid lines) in $S_0$,
 and two trajectories that emanate from each of the branch points $z_2$, $z_3$
(dashed lines). The figure also shows the  sign of $H(z)$ in the sector $S_0$. 
 The set $\Sigma_2$ is shown with dotted lines. The figure is based on numerical calculations for the value
$t=0.155$. \label{levelcurves}}
\end{figure}

\begin{remark} \label{rem-signs}
In the proof of Theorem \ref{theorem1} (b) we, in fact, have shown that $H(z)>0$
on both sides of $\Sigma_1\cap S_0$.
\end{remark}

\subsection{Proof of Theorem \ref{theorem1} (c)}

The measure $\mu_1$ defined by \eqref{eq:mu} is real on the whiskers in $S_0$ because
of the properties \eqref{trajgamma12} and \eqref{trajgamma13}.
It is also real on $[0,z_1]$, since $\xi_{2,+}(s)$ and $\xi_{1,+}(s)$ are each others
complex conjugates for $s \in [0,z_1]$. Hence $\mu_1$ is a real measure on $\Sigma_1 \cap S_0$.
Because of $\mathbb Z_3$ symmetry it then also follows that $\mu_1$ is real on the full $\Sigma_1$.

Since $\xi_{2,+}(s)$ and $\xi_{1,+}(s)$ are only the same at the branch points, we then see
that the density of $\mu_1$ does not change sign on any arc $[0,z_1]$, $\gamma_{1,2}$ and $\gamma_{1,3}$,
which implies that $\mu_1^o$ and $\mu_1^w$ are either positive or negative measures,
where $\mu_1^o$ and $\mu_1^w$ denote the restriction of $\mu_1$ to $\Sigma_1^o$ and $\Sigma_1^w$,
respectively. For $t = t^*$ we know from \cite{BK} that $\mu_1^o = \mu_1$ is positive and so by 
continuity in $t$, $\mu_1^o$ is positive for every $t \in (t_{*}, t_{**})$. 

Now note that $\xi_1(x) > \xi_2(x)$ for $x \in (z_1, \infty)$ (we saw this in the proof of Lemma \ref{lemma:Hforxreal})
and 
\begin{equation} \label{xi12-atz1} 
	\xi_1(s) - \xi_2(s) = c(s-z_1)^{1/2}(1 + O(s-z_1)) \qquad \text{ as } s \to z_1, \, s \in S_0 
	\end{equation}
for some positive constant $c > 0$.
Then for $z \in \gamma_{1,2}$,
\begin{align*} \int_{z_1}^z d\mu_1(s) & = 
	 \frac{1}{2\pi it} \int_{z_1}^z (\xi_{1,-}(s) - \xi_{2,-}(s)) ds \\
	& = 	\frac{1}{2\pi it} \int_{z_1}^z c (s-z_1)^{1/2} ds \, ( 1 + O(z-z_1)) \\
	& = 		\frac{1}{2\pi it} \frac{2c}{3} (z-z_1)^{3/2} \, (1+O(z-z_1)) \qquad \text{ as } z \to z_1
	\end{align*}
which is positive for $z \in \gamma_{1,2}$ close enough to $z_1$,
since $\arg(z-z_1) \to \pi/3$ as $z \to z_1$. Thus $\mu_1$ is not a negative measure on $\gamma_{1,2}$
and thus has to be positive on $\gamma_{1,2}$. Similarly it is positive on $\gamma_{1,3}$.
Thus $\mu_1^w$ is a positive measure as well.

The total mass of $\mu_1$ is $\frac{1}{2\pi i t} \int_{\Sigma_1} (\xi_{1,-}(s) - \xi_{1,+}(s)) ds$
by \eqref{eq:mu}, which can be written as a contour integral
\[ \int d\mu_1 = \frac{1}{2\pi i t} \oint_{C} \xi_1(s) ds, \]
where $C$ is a contour that goes around $\Sigma_1$ in counterclockwise direction.
By deforming the contour to infinity, and noting the expansion \eqref{xi1:asymptotics}
with the residue $t$ at infinity, we find that $\int d\mu_1 = 1$.
Thus $\mu_1$ is indeed a probability measure on $\Sigma_1$.

\subsection{Proof of Theorem \ref{theorem1} (d)}

We start by studying the equation \eqref{dOmegasuper} and show that it
is indeed the boundary of domain $\Omega(t)$. 
\begin{lemma} \label{lemma:dOmegapolar}
Let $t > \frac{1}{8}$ and $A > 0$.
Then for each $\theta \in [-\pi, \pi]$ there is a unique $r > 0$ such that
$z=r e^{i\theta}$ satisfies the equation \eqref{dOmegasuper}. 
\end{lemma}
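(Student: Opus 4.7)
The plan is to work in polar coordinates and reduce the problem to a one-variable calculus exercise. Writing $z = re^{i\theta}$ in \eqref{dOmegasuper} gives
\begin{equation*}
	f_\theta(r) := -r^4 + 2r^3 \cos(3\theta) - (1+t) r^2 + A = 0,
\end{equation*}
so for each fixed $\theta$ we need to show that $f_\theta$ has exactly one zero on $(0,\infty)$.

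Existence is immediate: $f_\theta(0) = A > 0$ and $f_\theta(r) \to -\infty$ as $r \to \infty$ (because of the leading $-r^4$ term), so the intermediate value theorem gives at least one positive root.

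For uniqueness I would differentiate and factor:
\begin{equation*}
	f_\theta'(r) = -2r \bigl( 2r^2 - 3r \cos(3\theta) + (1+t) \bigr).
\end{equation*}
The quadratic factor has discriminant $9\cos^2(3\theta) - 8(1+t)$, which is bounded above by $9 - 8(1+t)$. The assumption $t > \tfrac{1}{8}$ gives $8(1+t) > 9$, so the discriminant is strictly negative for every $\theta$. The quadratic is therefore positive for all real $r$, and consequently $f_\theta'(r) < 0$ for every $r > 0$. Thus $f_\theta$ is strictly decreasing on $(0,\infty)$, and the sign change from $f_\theta(0) = A > 0$ to $-\infty$ produces exactly one zero, establishing uniqueness.

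There is no real obstacle here; the only thing to notice is that the hypothesis $t > \tfrac{1}{8}$ enters precisely at the discriminant step and is used in its sharp form $8(1+t) > 9$. In particular, the condition $A > 0$ is only used to guarantee $f_\theta(0) > 0$, which is what pins the sign of the unique root to be positive rather than zero.
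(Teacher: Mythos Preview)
Your proof is correct and follows essentially the same approach as the paper: both reduce to showing that the derivative $f_\theta'(r) = -2r\bigl(2r^2 - 3r\cos(3\theta) + (1+t)\bigr)$ is negative for $r>0$, and both do so by showing the quadratic factor is strictly positive using $t > \tfrac{1}{8}$. The only cosmetic difference is that the paper bounds $\cos(3\theta) \le 1$ first and then completes the square, whereas you argue directly via the discriminant $9\cos^2(3\theta) - 8(1+t) \le 1 - 8t < 0$.
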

\begin{proof}
Putting $z = re^{i \theta}$ into the left hand-side \eqref{dOmegasuper} we find
\begin{equation} \label{dOmegapolar} 
	2 r^3 \cos(3\theta) - r^4 - (1+t) r^2 + A, 
	\end{equation}
whose derivative with respect to $r$ is
\begin{align*} 
	6 r^2 \cos(3 \theta) - 4 r^3 - 2(1+t) r 
	& \leq 6 r^2 - 4r^3 - 2(1+t) r \\
	& = \frac{r}{4} \left[1-8t - (3-4r)^2 \right] 
	\end{align*}
which is $< 0$ for all $r > 0$ since $t > \frac{1}{8}$. 
Thus \eqref{dOmegapolar} is strictly decreasing in $r$.  The value for $r =0$
is equal to $A > 0$, and it tends to $-\infty$ for $r \to \infty$. There
is a unique $r > 0$ for which \eqref{dOmegapolar} is zero, which proves the lemma.
\end{proof}

Lemma \ref{lemma:dOmegapolar} shows that the curve  given in
polar coordinates by 
\[	2 r^3 \cos(3\theta) - r^4 - (1+t) r^2 + A = 0 \]
is  the boundary of a  starshaped domain that contains the origin.
If $A = A(t)$ then it agrees with the equation \eqref{dOmegasuper} for
$\partial \Omega(t)$, which arises from putting
\begin{equation} \label{xiiszbar} 
	\xi = \overline{z} 
	\end{equation}
 in the algebraic equation \eqref{def:speccurve}. For 
$t = t_*$ and $A = A_1(t_*)$ we know from  \cite{BK} that $\partial \Omega(t)$
is given by $\xi_1(z) = \overline{z}$. That is, if $z \in \partial \Omega(t)$,
then the relevant solution of \eqref{def:speccurve} which gives rise to \eqref{xiiszbar}
is the solution $\xi_1(z)$, which is defined on the first sheet of the Riemann surface.

\begin{lemma} For each $t\in(t_*,t_{**})$ we have $z_1 \in \Omega(t)$ and $z_2, z_3 \in \mathbb C \setminus \Omega(t)$.
\end{lemma}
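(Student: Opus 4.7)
The plan is to use Lemma \ref{lemma:dOmegapolar} together with the identity
\[
F(z) := 2\Re(z^3) - |z|^4 - (1+t)|z|^2 + A = P(\bar z, z),
\]
obtained from \eqref{def:speccurve} by substituting $\xi = \bar z$. By Lemma \ref{lemma:dOmegapolar}, $\Omega(t) = \{F > 0\}$ and $\mathbb C \setminus \overline{\Omega(t)} = \{F < 0\}$. Since $F(\bar z) = F(z)$ and $z_3 = \bar z_2$, the claim reduces to showing $F(z_1) > 0$ and $F(z_2) < 0$. The key algebraic fact is that at a branch point $z_k$ the cubic $P(\xi, z_k)$ has a double root $\alpha_k = \xi_{1,+}(z_k) = \xi_{2,+}(z_k)$, determined by the resolvent $3\alpha_k^2 - 2z_k^2 \alpha_k - (1+t) z_k = 0$, and a simple root $z_k^2 - 2\alpha_k$, yielding
\[
F(z_k) = (\bar z_k - \alpha_k)^2 \bigl(\bar z_k - z_k^2 + 2\alpha_k\bigr).
\]

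For $z_1$ (real, positive), continuity from $\alpha_1(t_*) = z_* = 3/4$ fixes the branch $\alpha_1 = \tfrac{1}{3}\bigl(z_1^2 + \sqrt{z_1^4 + 3(1+t)z_1}\bigr)$, and the factorization becomes
\[
F(z_1) = (z_1 - \alpha_1)^2 \cdot \tfrac{1}{3}\Bigl(z_1(3 - z_1) + 2\sqrt{z_1^4 + 3(1+t)z_1}\Bigr).
\]
The second factor is strictly positive for every $z_1 > 0$, and $z_1(t) \in (0, 3/4]$ on $(t_*, t_{**})$ (since $z_1^3$ is the unique positive real root of $Q$, $z_1(t_*) = 3/4$, and $z_1(t_{**}) = 0$ by Corollary \ref{cor-1}). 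Strict positivity of the squared factor follows because setting $\alpha_1 = z_1$ in the resolvent forces $2z_1^2 - 3z_1 + (1+t) = 0$, whose real solutions require $t \leq 1/8 = t_*$. Hence $F(z_1) > 0$ and $z_1 \in \Omega(t)$ throughout $(t_*, t_{**})$.

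For $z_2$, I argue by continuity starting from $t_*$. The perturbation expansions \eqref{pert-values}, \eqref{val-x} give $z_2 - z_* = x_2(\Delta t)^{1/2} + O(\Delta t)$ with $\Im x_2 \neq 0$, while the $\xi$-value at the branch point is $\alpha_2 = z_* + x_2(\Delta t)^{1/2} + O(\Delta t)$ (the $(\Delta t)^{3/4}$ correction in \eqref{pert-values} drops because $y = 0$ at the branch points of the elliptic curve \eqref{eq-cub}). Consequently $\bar z_2 - \alpha_2 = -2i\,\Im(x_2)(\Delta t)^{1/2} + O(\Delta t)$, so $(\bar z_2 - \alpha_2)^2 = -4(\Im x_2)^2 \Delta t + o(\Delta t)$, whereas $\bar z_2 - z_2^2 + 2\alpha_2 \to 3z_* - z_*^2 = 27/16 > 0$. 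Thus
\[
F(z_2(t), t) = -\tfrac{27}{4}(\Im x_2)^2 \Delta t + o(\Delta t) < 0
\]
for $\Delta t > 0$ small. For general $t \in (t_*, t_{**})$, $F(z_2(t), t)$ is continuous in $t$ and can only change sign by vanishing at some $t_0$. Such a vanishing forces $\bar z_2(t_0) \in \{\alpha_2(t_0), \xi_3(z_2(t_0))\}$, which combined with the branch-point condition gives an over-determined algebraic system on $(z_2, A, t)$ corresponding to a higher-order degeneracy of the discriminant $Q(z^3)$. By the sheet-structure analysis of the preceding subsection such degeneracies occur only on the exceptional curves $A = A_j(t)$, and the uniqueness of $A(t)$ from Theorem \ref{theorem1}(a) places the graph of $A(t)$ on $(t_*, t_{**})$ strictly inside $\{0 < A < A_3(t)\}$, disjoint from them. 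Hence $F(z_2(t)) < 0$ throughout.

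The main obstacle is this last continuity-plus-rigidity step for $z_2$: ruling out an accidental return of $z_2(t_0)$ to $\partial\Omega(t_0)$ at some interior $t_0$. The corresponding step for $z_1$ is fully real and reduces to a quadratic with no solutions past $t_*$; for the complex branch point $z_2$ one instead leans on the genus-three classification of $\mathcal R$ and the uniqueness of the Boutroux value $A(t)$.
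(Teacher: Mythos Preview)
Your treatment of $z_1$ is correct and essentially the paper's argument in disguise: the paper shows $\xi_2(z_1) > z_1$ via the same quadratic $3\alpha^2 - 2z_1^2\alpha - (1+t)z_1 = 0$ and then locates an intersection $\tilde x > z_1$ of $\partial\Omega$ with $\mathbb R^+$, which is equivalent to your factorization $F(z_1) = (z_1-\alpha_1)^2(z_1 - z_1^2 + 2\alpha_1) > 0$.

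For $z_2$, however, your continuity-plus-rigidity step does not go through. From $F(z_2(t_0)) = 0$ you correctly obtain $\bar z_2 \in \{\alpha_2,\,z_2^2 - 2\alpha_2\}$, but neither of these alternatives has anything to do with a ``higher-order degeneracy of the discriminant $Q$''. A multiple root of $Q$ means that two branch points $w_j = z_j^3$ coincide, and this is the condition $D(Q)=0$, i.e.\ $A=A_j(t)$; the condition $\bar z_2 = \alpha_2$ merely says $z_2 \in \partial\Omega$, which is a completely independent algebraic constraint. The sheet-structure analysis of the preceding subsection classifies when $D(Q)$ vanishes; it says nothing about when $\partial\Omega$ meets a branch point. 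So the system you arrive at is over-determined only in a naive counting sense (two real equations against the single parameter $t$), and that does not preclude isolated solutions along the Boutroux curve $A = A(t)$. The argument as written does not rule out a return of $z_2(t_0)$ to $\partial\Omega(t_0)$.

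The paper avoids this difficulty by a topological argument that works for each fixed $t$. It lifts $\partial\Omega$ to the Riemann surface via $(\partial\Omega)_j = \{z : \xi_j(z)=\bar z\}$, shows (using the intersection point $\tilde x$ on $\mathbb R^+$ found in the $z_1$ step) that the portion of $\partial\Omega$ near the positive real axis lies on sheet~2, while the portion near $\arg z = \pi/3$ lies on sheet~1 (it cannot reach sheet~3 since the $\Sigma_2$ cuts there connect sheets~2 and~3). Hence $\partial\Omega$ switches sheets an odd number of times as one traverses the whisker $\gamma_{1,2}$, so $\gamma_{1,2}$ crosses $\partial\Omega$ an odd number of times. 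Since $z_1$ is inside $\Omega$, the endpoint $z_2$ is outside. A separate caution: the perturbation expansions \eqref{pert-values} are presented in the paper as heuristics (``direct perturbation analysis indicates''), so even your initial sign at $t_*^+$ is resting on a formal computation rather than a proved expansion.
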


\begin{proof}
Since $z_1$ is a branch point of \eqref{def:speccurve},
\[ \frac{\partial P}{\partial \xi} = 3\xi^2 -2 z^2 \xi - (1+t) z = 0 \]
for $z = z_1$ and $\xi = \xi_1(z_1) = \xi_2(z_1)$. Note that $\xi_2(x) > 0$ for large positive $x$ due
to \eqref{eq:xiatinfinity}, and $\xi_2$ does not have any zeros on $[z_1, \infty)$ as can be easily
checked from the algebraic equation \eqref{def:speccurve}. Thus $\xi_2(z_1) > 0$.
Solving the quadratic equation, we then get $\xi_2(z_1) = F(z_1)$ where
\[ F(x) =  \frac{1}{3} \left[ x^2 + \sqrt{x^4 + 3(1+t) x} \right]. \] 
It is an easy calculus exercise to show that for $t > \frac{1}{8}$ we have
\[ F(x) > \frac{1}{3} \left[ x^2 + \sqrt{x^4 + \frac{27}{8} x} \right] \geq x, \qquad x > 0. \]
Thus 
\[ \xi_2(z_1) > z_1. \]
Now note that $\xi_2(x)$ is real for real $x > z_1$ and it behaves like $x^{1/2}$ as $x \to \infty$
by \eqref{eq:xiatinfinity}, so that clearly $\xi_2(x) < x$ for large enough $x > z_1$.
Thus by continuity there exists $\tilde{x} > z_1$ such that $\xi_2(\tilde{x}) = \tilde{x}$.
Then $\tilde{x}$ belongs to $\partial \Omega(t)$ since \eqref{xiiszbar} is satisfied, 
and noting Lemma \ref{lemma:dOmegapolar},
we conclude that it is the unique intersection point of $\partial \Omega(t)$ with the positive real axis.
Since $\tilde{x} > z_1$ we find  that $z_1$ lies in $\Omega(t)$.

We subdivide $\partial \Omega$ into pieces
\[ (\partial \Omega)_j = \{ z \in \mathbb C \mid \xi_j(z) = \overline{z} \}, \qquad j = 1,2,3, \]
where from now on we drop the $t$-dependance from the notation.
We visualize $\partial \Omega$ on the Riemann surface, by putting the part $(\partial \Omega)_j$
on sheet $j$.
For $t \leq 1/8$ we have $\partial \Omega = (\partial \Omega)_1$, and so $\partial \Omega$
is fully on the first sheet. For $t = 1/8$, the curve contains the branch points $\omega^j z_1^*$, $j=1,2,3$, 
and the structure of $\Sigma_1$ changes near these branch points if we move into the supercritical regime $t > 1/8$. 
Then part of $\partial \Omega$ may move to the second sheet, and this happens indeed since
we just proved that $\tilde{x} \in \partial \Omega$ with $\xi_2(\tilde{x}) = \tilde{x}$.
So the part of $\partial \Omega$ near the real axis is on the second sheet, as well as the parts near
the halfrays at angles $\pm 2\pi/3$. The remaining parts are on the first sheet, and in particular the parts near
the angles $\pm \pi/3$ and $\pi$, where the branch cut  from the second to the third sheet is. 
The curve can then never move to the third sheet, and it follows that $(\partial \Omega)_3 = \emptyset$
for all $t \in (t_*, t_{**})$. The curve intersects the whiskers $\gamma_{1,2}$ and $\gamma_{1,3}$.
Each intersection will mean a change from $(\partial \Omega)_1$ to $(\partial \Omega)_2$ or vice versa.
So there are an odd number of intersections\footnote{There is probably only one intersection,
but we have not been able to prove this.}, and this means that $z_2$ is outside of $\Omega$.
By symmetry with respect to complex conjugation, also $z_3 \not\in \Omega(t)$ and the lemma is proved.
\end{proof}

We are now ready for the proof of part (d).

\begin{proof}[Proof of Theorem \ref{theorem1} (d)]

We give the proof under the assumption that $\partial \Omega$ has one intersection point
with  $\gamma_{1,2}$. The proof can be modified to cover the hypothetical situation of more than
one intersection point (which probably does not occur).

Since $\xi_1(z) = z^2 + t z^{-1} + O(z^{-4})$ as $z \to \infty$, we have for a nonnegative
integer $k$,
\[ t_k := \frac{1}{2 \pi i} \oint_C \frac{\xi_1(s)}{s^k} ds = \begin{cases}
	t & \text{ if } k = 0, \\
	1 & \text{ if } k = 3, \\
	0 & \text{ otherwise,} \end{cases}  \]
where $C$ is a contour that encircles $\Sigma_1$ once in counterclockwise direction.

We deform $C$ inwards so that it consists of $\partial \Omega$ and the plus and
minus sides of the parts of $\Sigma_1^w$ that are outside of $\Omega$. 
We use $(\partial \Omega)_1$ and $(\partial \Omega)_2$ as in the proof of the last lemma.
Then we have
\begin{align} \nonumber t_k & = \frac{1}{2\pi i} \int_{(\partial \Omega)_1 \cup (\partial \Omega)_2} \frac{\xi_1(s)}{s^k} ds {+}
			\frac{1}{2\pi i} \int_{\Sigma_1^w \setminus \Omega} \frac{( \xi_{1,-} - \xi_{1,+})(s)}{s^k} ds \\
	& = \frac{1}{2\pi i} \int_{\partial \Omega} \frac{\overline{s}}{s^k} ds
	+ \frac{1}{2\pi i} \int_{(\partial \Omega)_2} \frac{\xi_1(s)-\xi_2(s)}{s^k} ds
		+	\frac{1}{2\pi i} \int_{\Sigma_1^w \setminus \Omega} \frac{( \xi_{2,+} - \xi_{1,+})(s)}{s^k} ds.
		\label{tksums}
		\end{align}
The integral over $(\partial \Omega)_2$ can be deformed to 
an integral over $\Sigma_1^w \cap \Omega$, which results in
\begin{equation} \label{tksums2}  
	\frac{1}{2\pi i} \int_{(\partial \Omega)_2} \frac{\xi_1(s)-\xi_2(s)}{s^k} ds 
	=   \frac{1}{2\pi i} \int_{\Sigma_1^w \cap \Omega} \frac{\xi_{2,+}(s)-\xi_{1,+}(s)}{s^k} ds. 
	\end{equation}
From \eqref{tksums}--\eqref{tksums2} we get
\[ t_k = \frac{1}{2\pi i} \int_{\partial \Omega} \frac{\overline{s}}{s^k} ds
	+ \frac{1}{2\pi i} \int_{\Sigma_1^w} \frac{( \xi_{2,+} - \xi_{1,+})(s)}{s^k} ds,
\]
which gives \eqref{eq:areak} in view of \eqref{eq:mu}. 
\end{proof}

The proof of Theorem \ref{theorem1} is now complete.

\section{Proof of Theorem \ref{theorem2}} \label{section:proof2}

\subsection{The Riemann-Hilbert problem in the supercritical case }\label{sec-RHP}

The proof of Theorem \ref{theorem2} is based on a steepest descent analysis of a RH problem for
the orthogonal polynomials $P_{n,n}$ that we take from \cite{BK}.
The RH problem arises from the fact that $P_{n,n}$ can be viewed as a multiple
orthogonal polynomial with respect to two weight functions $w_{0,n}$
and $w_{1,n}$ on $\bigcup_{j=1}^3 \Gamma_j$, see Lemma 5.1 of \cite{BK}.

We are free to move the contours  in the complex plane, as long as we respect
the starting and ending directions at infinity. We can then move them to
a contour
\[ \Gamma_Y = \Sigma_1 \cup \bigcup_{j=0}^2 C_j^{\pm} \]
where $C_j^{+}$ denotes the continuation of $\Sigma_1$ from $\omega^j z_2$
and $C_j^{-}$ is the continuation of $\Sigma_1$ from $\omega^j z_3$, for $j=0,1,2$,
see Figure \ref{fig:ContoursForY}.
Then the weights are expressed in terms of the Airy function $\Ai$ and
its derivative (see Definition 5.3 of \cite{BK}).  
They are given on the parts of $\Gamma_Y$ in the sector $S_0$ by
\[ 
		\begin{aligned}
		&\begin{cases} 
		w_{0,n}(x)  = \Ai( \frac{n^{2/3}}{t^{2/3}} x) e^{ \frac{n}{3t} x^3} \\
		w_{1,n}(x)  = \Ai'(\frac{n^{2/3}}{t^{2/3}} x) e^{\frac{n}{3t} x^3} 
		\end{cases} && \quad \text{ on } [0, z_1], \\
		& \begin{cases} 
		w_{0,n}(z) = \frac{1}{3}( \Ai( \frac{n^{2/3}}{t^{2/3}} z) - \omega \Ai(\omega \frac{n^{2/3}}{t^{2/3}} z))  e^{ \frac{n}{3t} z^3} \\
		w_{1,n}(z) = \frac{1}{3}( \Ai'( \frac{n^{2/3}}{t^{2/3}} z - \omega^2 \Ai'(\omega \frac{n^{2/3}}{t^{2/3}} z)) e^{\frac{n}{3t} z^3} 
		\end{cases} && \quad \text{ on } \gamma_{1,2}  \cup C_0^+, \\
	  & \begin{cases} 
		w_{0,n}(x) = \frac{1}{3}( \Ai( \frac{n^{2/3}}{t^{2/3}} z) - \omega^2 \Ai(\omega^2 \frac{n^{2/3}}{t^{2/3}} z))  e^{ \frac{n}{3t} z^3} \\
		w_{1,n}(z) = \frac{1}{3}( \Ai'( \frac{n^{2/3}}{t^{2/3}} z - \omega \Ai'(\omega^2 \frac{n^{2/3}}{t^{2/3}} z)) e^{\frac{n}{3t} z^3} 
		\end{cases} && \quad \text{ on } \gamma_{1,3} \cup C_0^-,
		\end{aligned} \]
and then continued to other sectors by the symmetry property
\[ w_{0,n}(\omega z)  = \omega^2 w_{0,n}(z), \quad 
	   w_{1,n}(\omega z) = \omega w_{1,n}(z), \quad \text{ for } z \in \Gamma_Y. \]

\begin{figure}[t] 
\centering
\begin{overpic}[width=8cm,height=8cm]{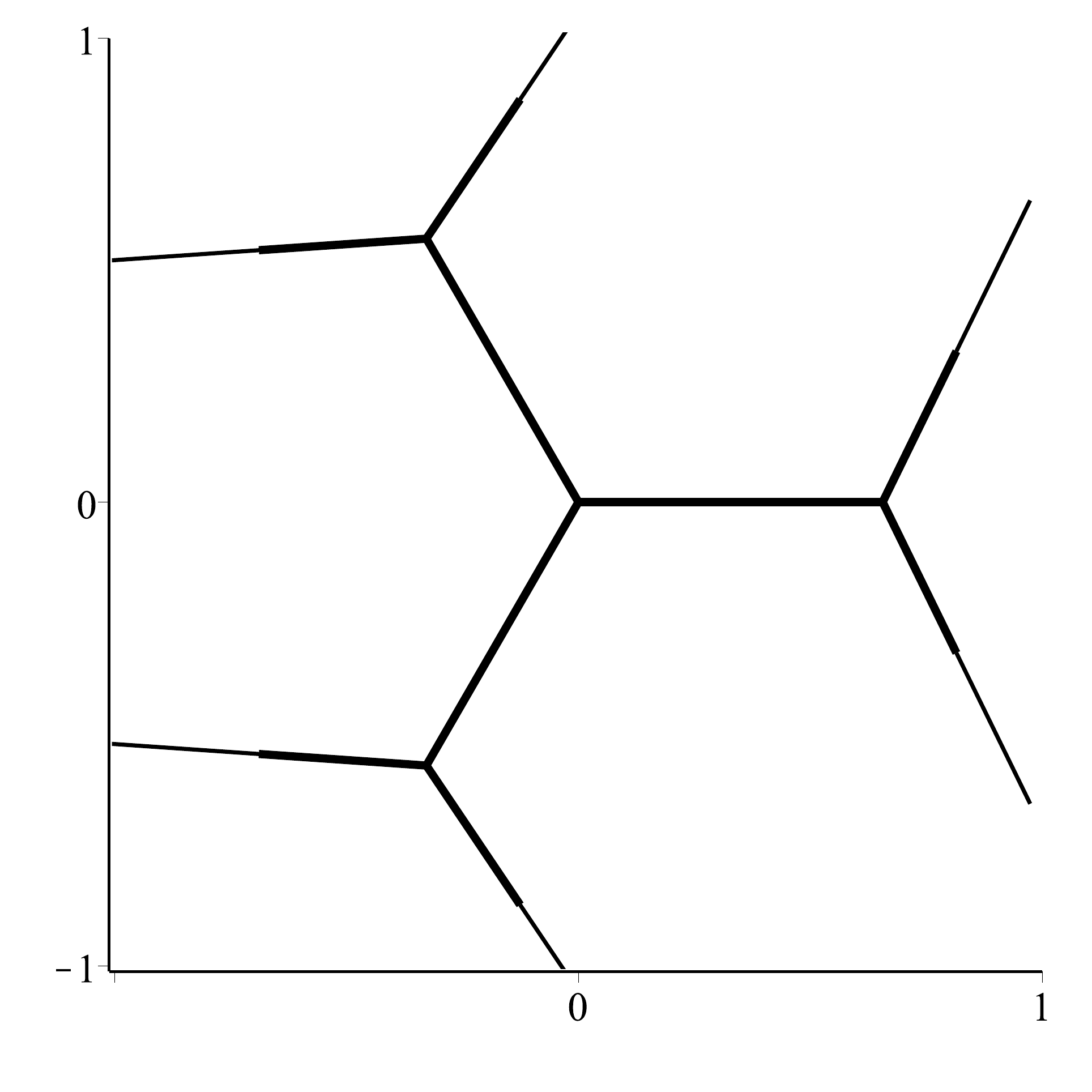}
	\put(49,52){$0$}
	\put(82,52){$z_1$}
	\put(88,66){$z_2$}
	\put(88,39){$z_3$}
	\put(60,56){$\Sigma_1$}
	\put(49,89){$\omega z_3$}
	\put(41,76){$\omega z_1$}
	\put(20,79){$\omega z_2$}
	\put(49,17){$\omega^2 z_2$}
	\put(41,28){$\omega^2 z_1$}
	\put(20,26){$\omega^2 z_3$}
	\put(86,78){$C_0^+$}
	\put(86,26){$C_0^-$}
	\put(44,94){$C_1^-$}
	\put(13,71){$C_1^+$}
	\put(13,33){$C_2^-$}
	\put(42,12){$C_2^+$}
\end{overpic}
\caption{Contour $\Gamma_Y$ in the RH problem for $Y$. \label{fig:ContoursForY}}
\end{figure}
		
The RH problem then is: 
\begin{rhproblem} \label{RHforY} 
The matrix-valued function $Y : \mathbb C \setminus \Gamma_Y \to \mathbb C^{3 \times 3}$ satisfies
\begin{itemize}
\item $Y$ is analytic.
\item $Y_+(z) = Y_-(z) \begin{pmatrix} 1 & w_{0,n}(z) & w_{1,n}(z) \\ 0 & 1  & 0 \\ 0 & 0 & 1 \end{pmatrix}$ for $z \in \Gamma_Y$.
\item $Y(z) = (I + O(1/z)) \diag\left(z^n, z^{-\lceil n/2 \rceil}, z^{-\lfloor n/2 \rfloor} \right)$ as $z \to \infty$.	
\end{itemize}
\end{rhproblem}
The RH problem \ref{RHforY} has a unique solution if and only if $P_{n,n}$ uniquely exist and
in that case		
\begin{equation} \label{Y11} 
	Y_{11}(z) = P_{n,n}(z).
	\end{equation}
In what follows we assume that $n$ is even, since it will simplify the exposition. With minor
modifications, the analysis can be done in the case of odd $n$ as well.

The asymptotic analysis of the RH problem \ref{RHforY} will be 
obtained through a chain of transformations. In the end, we will obtain
a ``model'' RH problem   \ref{RHforM} for a matrix $M(z)$
(also known as a global parametrix), that gives the leading pre-exponential behavior of $Y(z)$ 
as $n \to \infty$, and a number of 
local parametrices (RH problem \ref{RHforP}), that are used to obtain the error estimate.
This chain of transformations is fully described in  
\cite{BK} for the subcritical case $t<t_*$ and constitutes a substantial part of that paper. 
In order to limit the size of the present paper, we will not repeat the full description of
transformations from 
\cite{BK} here, but rather underline  the distinctions of the supercritical case,
caused by the presence of the whiskers.
 
The transformations $Y \mapsto X \mapsto V$ are the same as in \cite{BK}. The precise form of these transformations is irrelevant
for the present paper. We only need to know the resulting RH problem, and
the fact that the $11$-entry remains unchanged, so that by \eqref{Y11},
\begin{equation} \label{V11}
	V_{11}(z) = Y_{11}(z) = P_{n,n}(z).
	\end{equation}
	To state the RH problem for $V$ we introduce
\begin{align} \label{defQ1}
	Q_1(z) & = \begin{cases}  \frac{2}{3t} z^{3/2} - \frac{1}{3 t} z^3,  &  z \in S_0, \\
	 - \frac{2}{3t} z^{3/2} - \frac{1}{3t} z^3, & z \in S_1 \cup S_2,
	\end{cases} \\ \label{defQ2}
	Q_2(z) & =  \begin{cases} \frac{4}{3t} z^{3/2}, & 0 <  \arg z < 2\pi /3 \text{ or } -\pi < \arg z < -2\pi/3, \\
	- \frac{4}{3t} z^{3/2}, & - 2\pi/3 < \arg z < 0 \text{ or } 2\pi/3 < \arg z < \pi, 
		\end{cases} 
		\end{align} 
with principal branches of the fractional powers.
The definitions are such that $ Q_j(\omega z) = Q_j(z)$ for $j=1,2$.
We also define the constant 
\begin{equation} \label{def:alpha} 
	\alpha = \frac{1}{2} + \frac{i}{6} \sqrt{3} = \frac{\sqrt{3}}{3} e^{\pi i/6}.
	\end{equation}
Then the RH problem for $V$ is, see \cite[RH problem 6.7]{BK}:

\begin{rhproblem} \label{RHforV}
$V : \mathbb C \setminus \Gamma_V \to \mathbb C^{3\times 3}$, where $\Gamma_V = \Gamma_Y \cup \Sigma_2$
and $\Sigma_2$ is given by \eqref{eq:Sigma2}, satisfies
\begin{itemize}
\item $V$ is analytic
\item $V_+ = V_- J_V$ on $\Gamma_V$ with
\begin{equation} \label{JV1} 
	J_V(z)  = \begin{cases} \begin{pmatrix} 1 & e^{-n Q_1(z) } & 0 \\ 
	0 & 1 & 0 \\ 0 & 0 & 1 \end{pmatrix},  & 
 	   z \in \Sigma_{1}^o = \bigcup_{j=0}^2 [0, \omega^j z_1] \\
 \begin{pmatrix} 1 & \alpha e^{-n Q_1(z) } & 0 \\ 
	0 & 1 & 0 \\ 0 & 0 & 1 \end{pmatrix},  & 
 	   z \in \bigcup_{j=0}^2 \omega^j ( \gamma_{1,2} \cup C_0^+)  \\
 \begin{pmatrix} 1 & \overline{\alpha} e^{-n Q_1(z) } & 0 \\ 
	0 & 1 & 0 \\ 0 & 0 & 1 \end{pmatrix},  & 
 	   z \in \bigcup_{j=0}^2 \omega^j (\gamma_{1,3} \cup C_0^-)  \\
\begin{pmatrix} 1 & 0 &  \overline{\alpha}  e^{-n Q_{1,-}(z)} \\ 
	0 & \omega^2 e^{n Q_2(z)}  & 1 \\ 
		0 & 0 & \omega e^{-n Q_2(z)} \end{pmatrix},  & z \in \Sigma_2. \\
	  \end{cases}
\end{equation}
\item $ V(z) = (I + O(1/z)) A(z) \begin{pmatrix} z^n & 0 & 0 \\ 0 & z^{-n/2} & 
	0 \\ 0 & 0 & z^{-n/2}   \end{pmatrix}$ as $z \to \infty$, 
	where $A(z)$ is given by, 
	\begin{equation} \label{defAz}
	A(z) = \begin{pmatrix} 1 & 0 & 0 \\ 0 & z^{1/4} & 0 \\ 0 & 0 & z^{-1/4} \end{pmatrix}
	\times \frac{1}{\sqrt{2}}
	\begin{cases}
	\begin{pmatrix} \sqrt{2} & 0 & 0 \\ 0 & 1 & - i \\ 0 &  -i & 1 \end{pmatrix}, &
		z \in S_0, \\
	\begin{pmatrix} \sqrt{2} & 0 & 0 \\ 0 & i & 1 \\ 0 &  -1 & i \end{pmatrix}, &
		z \in S_1, \\	
	\begin{pmatrix} \sqrt{2} & 0 & 0 \\ 0 & -i & - 1 \\ 0 &  1 & i \end{pmatrix}, &
		z \in S_2,
		\end{cases}
	\end{equation}
		see also formula (6.9) in \cite{BK}.
\end{itemize}	 
\end{rhproblem}

The next transformation $V \mapsto U$ uses the measures $\mu_1$ and $\mu_2$, see \eqref{defmu2},
and their $g$ functions 
\begin{equation} \label{defgj} 
	g_j(z) = \int \log(z-s) d\mu_j(s), \qquad j = 1,2. 
	\end{equation}
The branches for the logarithm are chosen as in \cite[Section 6.3]{BK}.
It means that $g_1$ is
defined and analytic in $\mathbb C \setminus (\Sigma_1 \cup \mathbb R^-)$,
$g_2$ is analytic in $\mathbb C \setminus \Sigma_2$ with symmetries
\[ g_1(\omega^{\pm} z) =  g_1(z) \pm 2 \pi i/3, \quad
	 g_2(\omega^{\pm} z) =  g_2(z) \pm  \pi i/3, \qquad \text{for } z \in S_0. \]
In addition $g_1(x)$ and $g_2(x)$ are real for real $x > z_1$,
see \cite[Section 6.3]{BK}.

Then, as in \cite[Lemma 6.8]{BK}, there is a constant $\ell$ such that
\[ g_{1,+}(z) + g_{1,-}(z) - g_2(z) = 
		\frac{2}{3t} z^{3/2} - \frac{1}{3t} z^3 + \ell, \qquad z \in [0,z_1]. \]
On the whiskers, we then have 
\begin{equation}\label{g1jump} 
	g_{1,+}(z) + g_{1,-}(z) - g_2(z) = 
		\frac{2}{3t} z^{3/2} - \frac{1}{3t} z^3 + \ell 
			\begin{cases} + 2\pi i \beta,  & \quad z \in \gamma_{1,2}, \\
       - 2\pi i \beta, &  \quad z \in \gamma_{1,3}, 
			\end{cases} \end{equation}
where $\beta = \mu_1(\gamma_{1,2}) = \mu_1(\gamma_{1,3})$, which by the symmetry
is the same as $\beta = \frac{1}{6} \mu_1(\Sigma_1^w)$, see \eqref{def-beta}.

The definition of $U$ is as in Definition 6.9 of \cite{BK}, namely
\begin{equation} \label{defU} 
	U(z) = \begin{pmatrix} e^{-n \ell} & 0 & 0 \\ 0 & 1 & 0 \\ 0 & 0 & 1 \end{pmatrix}
	V(z) \begin{pmatrix} e^{-n (g_1(z)-\ell)} & 0 & 0 \\ 0 & e^{n(g_1(z) - g_2(z))} & 0 \\ 0 & 0 & e^{n g_2(z)} \end{pmatrix}.
	\end{equation}
It then follows that 
\begin{equation} \label{U11}
	U_{11}(z) = P_{n,n}(z) e^{-n g_1(z)}, \qquad z \in \mathbb C \setminus \Gamma_U,
	\end{equation}
with $\Gamma_U = \Gamma_V = \Gamma_Y \cup \Sigma_2$.

The jumps in the RH problem
for $U$ are conveniently expressed in terms of the functions
\begin{align} \label{phi1}
	\varphi_1(z) & = \frac{1}{2t} \int_{\omega^j z_1}^z (\xi_1(s) - \xi_2(s)) ds && 
	\text{for } z \in S_j \setminus \Sigma_1, \, j = 0,1,2, \\
	 \label{phi2}
	\varphi_2(z) & = \frac{1}{2t} \int_0^z (\xi_2(s) - \xi_3(s)) ds  \mp \frac{\pi i}{6} &&
	\text{for } z \in S_0 \cap \mathbb C^{\pm} \setminus \Sigma_1,
	\end{align}
	with formulas similar to \eqref{phi2} in $S_1 \setminus \Sigma_1$ and $S_2 \setminus \Sigma_1$
	as in \cite[formula (6.23)]{BK}.
The path of integration in \eqref{phi1}	goes from $\omega^j z_1$ to $\omega^j (z_1 + \varepsilon)$ for some $\varepsilon > 0$
and then continues to $z$ in the domain $S_j \setminus \Sigma_1$. The path of integration in \eqref{phi2} is
in $S_0^{\pm} \setminus \Sigma_1$, where $S_0^{\pm} =S_0 \cap \mathbb C^{\pm}$.
Note that 
\begin{equation} \label{phi-H}
\Re \varphi_1(z)=\frac{1}{2t}H(z)
\end{equation}
where $H$ was given by \eqref{def:H}.

We remind, see \eqref{def-beta}, that
\begin{equation} \label{beta}
	\beta = \frac{1}{6} \mu_1(\Sigma_1^w) =  \mu_1(\gamma_{1,2})  \in (0, \tfrac{1}{6}).
	\end{equation} 
This number appears in the RH problem for $U$. The reason for this is
that it appears in the jump relation \eqref{g1jump}, and it also
comes in the jump of $\varphi_1$ on $\Sigma_1$.  
Indeed, from the definition of $\varphi_1$ (in particular the choice of the path from $\omega^j z_1$ to $z$) 
and \eqref{eq:mu}, \eqref{beta}, one finds 
\begin{equation} \label{phi1-jump} 
	\varphi_{1,+}(z) + \varphi_{1,-}(z) = 
	\begin{cases} 0 & \text{ on } \Sigma_1^o, \\
	 2 \pi i \beta & \text{ on } \bigcup_{j=0}^2 \omega^j \gamma_{1,2}, \\
	 - 2\pi i \beta & \text{ on } \bigcup_{j=0}^2 \omega^j \gamma_{1,3}. 
		\end{cases} \end{equation}
We will not give the details for the
(sometimes tedious) calculations that produce a list of identities that 
lead to the following RH problem.

\begin{rhproblem} \label{RHforU}
$U$ is the solution of the following RH problem.
\begin{itemize}
\item $U$ is analytic in $\mathbb C \setminus \Gamma_U$, where $\Gamma_U = \Gamma_V$.
\item $U_+ = U_- J_U$ on $\Gamma_U$ with
\begin{equation} \label{JU} 
	J_U(z) = \begin{cases} 
	\begin{pmatrix} e^{-2n \varphi_{1,+}(z)} & 1 & 0 \\ 
	0 & e^{-2n \varphi_{1,-}(z)} & 0 \\ 0 & 0 & 1 \end{pmatrix}, &  z \in \Sigma_{1}^o, \\
	\begin{pmatrix} e^{-2n \varphi_{1,+}(z) + 2 \pi in \beta} & \alpha e^{2\pi in \beta} & 0 \\ 
		0 & e^{-2n \varphi_{1,-}(z) +2\pi in \beta} & 0 \\ 0 & 0 & 1 \end{pmatrix}, &
		z \in \bigcup_{j=0}^2 \omega^j \gamma_{1,2}, \\
	\begin{pmatrix} e^{-2n \varphi_{1,+}(z) - 2\pi in \beta} & \overline{\alpha} e^{-2\pi in \beta} & 0 \\ 
		0 & e^{-2n \varphi_{1,-}(z) -2\pi in \beta} & 0 \\ 0 & 0 & 1 \end{pmatrix}, &
		z \in \bigcup_{j=0}^2 \omega^j \gamma_{1,3}, \\
		\begin{pmatrix} 1 & 0 &  \overline{\alpha}  e^{2n \varphi_{1,-}(z)} \\ 0 & \omega^2 e^{-2n \varphi_{2,+}(z)}  & 1 \\ 
		0 & 0 & \omega e^{-2n \varphi_{2,-}(z)} \end{pmatrix}, &  z \in \Sigma_2, 
	  \end{cases} \end{equation}
and
\begin{equation} \label{JU2}
			J_U(z) = \begin{cases} \begin{pmatrix} 1 &  \alpha  e^{2n \varphi_1(z)} & 0 \\ 0 & 1 & 0 \\ 0 & 0 & 1 \end{pmatrix},
	  	& z \in \bigcup_{j=0}^2 C_j^+ \\
	  \begin{pmatrix} 1 &  \overline{\alpha}  e^{2n \varphi_1(z)} & 0 \\ 0 & 1 & 0 \\ 0 & 0 & 1 \end{pmatrix},
	  	& z \in \bigcup_{j=0}^2 C_j^-.
	  \end{cases} \end{equation}
\item $U(z) = (I + O(1/z)) A(z)$ as $z\to \infty$, where $A(z)$ is given by \eqref{defAz}.
\end{itemize}
\end{rhproblem}

Further transformations $U  \mapsto T \mapsto S$ are again as in \cite{BK}.
The transformation $U \mapsto T$ is effective in the domain bounded by $C_0$, $C_1$, $C_2$
and the whiskers. It simplifies the jumps, as it kills for example the $13$-entry in
the jump matrix on $\Sigma_2$. The number $\alpha$ also disappears from the jumps,
and only its real part $\Re \alpha = \frac{1}{2}$, see  \eqref{def:alpha}, 
remains in the jump matrices. It leads to the following RH problem.

\begin{rhproblem} \label{RHforT}
$T$ is the solution of the following RH problem.
\begin{itemize}
\item $T$ is analytic in $\mathbb C \setminus \Gamma_T$, where $\Gamma_T = \Gamma_U$.
\item $T_+ = T_- J_T$ on $\Gamma_T$ with
\begin{equation} \label{JT} 
	J_T(z) = \begin{cases} 
	\begin{pmatrix} e^{-2n \varphi_{1,+}(z)} & 1 & 0 \\ 
	0 & e^{-2n \varphi_{1,-}(z)} & 0 \\ 0 & 0 & 1 \end{pmatrix}, &  z \in \Sigma_{1}^o, \\
	\begin{pmatrix} e^{-2n \varphi_{1,+}(z) + 2 \pi in \beta} & \frac{1}{2} e^{2\pi in \beta} & 0 \\ 
		0 & e^{-2n \varphi_{1,-}(z) +2\pi in \beta} & 0 \\ 0 & 0 & 1 \end{pmatrix}, &
		z \in \bigcup_{j=0}^2 \omega^j \gamma_{1,2}, \\
	\begin{pmatrix} e^{-2n \varphi_{1,+}(z) - 2\pi in \beta} & \frac{1}{2} e^{-2\pi in \beta} & 0 \\ 
		0 & e^{-2n \varphi_{1,-}(z) -2\pi in \beta} & 0 \\ 0 & 0 & 1 \end{pmatrix}, &
		z \in \bigcup_{j=0}^2 \omega^j \gamma_{1,3}, \\
		\begin{pmatrix} 1 & 0 &  0 \\ 0 & \omega^2 e^{-2n \varphi_{2,+}(z)}  & 1 \\ 
		0 & 0 & \omega e^{-2n \varphi_{2,-}(z)} \end{pmatrix}, &  z \in \Sigma_2, 
		 \end{cases} \end{equation}
and
\begin{equation} \label{JT2}
	J_T(z) = \begin{cases} 
	  \begin{pmatrix} 1 &  \frac{1}{2}  e^{2n \varphi_1(z)} & 0 \\ 0 & 1 & 0 \\ 0 & 0 & 1 \end{pmatrix},
	  	& z \in \bigcup_{j=0}^2 C_j^+ \\
	  \begin{pmatrix} 1 &  \frac{1}{2}  e^{2n \varphi_1(z)} & 0 \\ 0 & 1 & 0 \\ 0 & 0 & 1 \end{pmatrix},
	  	& z \in \bigcup_{j=0}^2 C_j^-.
	  \end{cases} \end{equation}
\item $T(z) = (I + O(1/z)) A(z)$ as $z\to \infty$, where $A(z)$ is given by \eqref{defAz}.
\end{itemize}
\end{rhproblem}

\begin{figure}[t] 
\centering
\begin{overpic}[width=8cm,height=8cm]{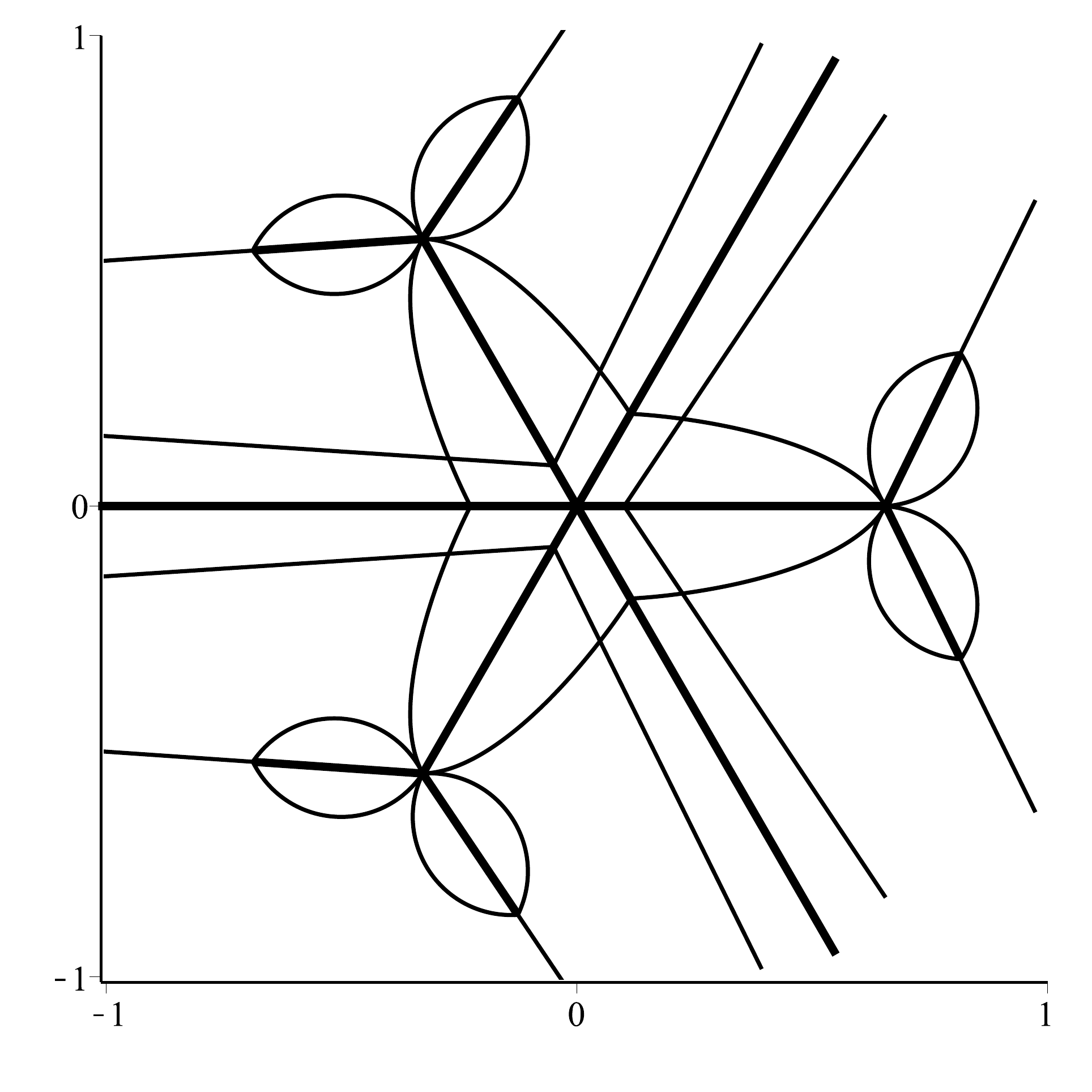}
	\put(85,52){$z_1$}
	\put(88,66){$z_2$}
	\put(88,38){$z_3$}
\end{overpic}
\caption{Contours $\Gamma_S$ in the RH problem for $S$. \label{fig:ContoursForS}}
\end{figure}

In the step $T \mapsto S$ lenses are opened around $\Sigma_1$ and $\Sigma_2$. 
It is based on a factorization of the jump matrices on these contours. 
This is the same as in \cite{BK}, except that we now also have the whiskers in $\Sigma_1$.
The jump matrix $J_T$ on $\bigcup_j \omega^j \gamma_{1,2}$, see \eqref{JT}, has the following factorization (we only list the nontrivial
$2 \times 2$ block)
\begin{multline}  \begin{pmatrix} e^{-2n \varphi_{1,+}(z) + 2 \pi in \beta} & \frac{1}{2} e^{2\pi in \beta} \\ 
		0 & e^{-2n \varphi_{1,-}(z) +2\pi in \beta}  \end{pmatrix} 
		\\ =
			\begin{pmatrix} 1 & 0 \\ 2 e^{-2n \varphi_{1,-}(z)} & 1 \end{pmatrix}
			\begin{pmatrix} 0 & \frac{1}{2} e^{2\pi in \beta} \\ 2 e^{-2\pi in \beta} & 0
			\end{pmatrix}
			\begin{pmatrix} 1 & 0 \\ 2 e^{-2n \varphi_{1,+}(z)} & 1 \end{pmatrix}
		\end{multline}
and there is a similar factorization fo the jump matrix $J_T$ on $\bigcup_j \omega^j \gamma_{1,3}$.

The definition of $S$ (which we do not specify here in detail) leads
to a RH problem for $S$ on a complicated set of contours, see Figure \ref{fig:ContoursForS}.

\begin{rhproblem} \label{RHforS}
$S$ is the solution of the following RH problem.
\begin{itemize}
\item $S$ is analytic in $\mathbb C \setminus \Gamma_S$, where $\Gamma_S$ consists of $\Gamma_U$ and lenses around
$\Sigma_1$ and $\Sigma_2$.
\item $S_+ = S_- J_S$ on $\Gamma_S$ with
\begin{equation} \label{JS} 
	J_S = \begin{cases} 
	\begin{pmatrix} 0 & 1 & 0 \\ -1 & 0 & 0 \\ 0 & 0 & 1 \end{pmatrix}  &  \text{ on } \Sigma_{1}^o, \\
	\begin{pmatrix} 0 & \frac{1}{2} e^{2\pi in \beta} & 0 \\ -2 e^{-2\pi in\beta} & 0 & 0 \\ 0 & 0 & 1 \end{pmatrix} 
	& \text{ on } \bigcup_{j=0}^{2} \omega^j \gamma_{1,2} \\
	\begin{pmatrix} 0 & \frac{1}{2} e^{-2\pi in \beta} & 0 \\ -2 e^{2\pi in\beta} & 0 & 0 \\ 0 & 0 & 1 \end{pmatrix} 
	& \text{ on } \bigcup_{j=0}^2 \omega^j \gamma_{1,3} \\
	\begin{pmatrix} 1 & 0 & 0 \\ 0 & 0 & 1 \\ 0 & -1 & 0 \end{pmatrix} 
	& \text{on } \Sigma_2, 
		\end{cases} \end{equation}
and	
\begin{equation} \label{JS2} 
	J_S = \begin{pmatrix} 1 & \frac{1}{2} e^{2 n \varphi_1} & 0 \\ 0 & 1 & 0 \\ 0 & 0 & 1 \end{pmatrix} \qquad
	 \text{ on } \bigcup_{j=0}^{2} C_j^{\pm}
	 \end{equation}
on $\Gamma_S$, and on the lips of the lenses it is
\begin{equation} \label{JSlips}
	J_S = \begin{cases}
		\begin{pmatrix} 1 & 0 & 0 \\ e^{-2n \varphi_1} & 1 & 0 \\ 0 & 0 & 1 \end{pmatrix} 
			& \begin{array}{l} \text{ on lips around $\Sigma_1^o$} \\ \text{ outside lens around $\Sigma_2$} \end{array} \\
		\begin{pmatrix} 1 & 0 & 0 \\ 2 e^{-2n \varphi_1} & 1 & 0 \\ 0 & 0 & 1 \end{pmatrix} 
			& \text{ on lips around $\Sigma_1^w$} \\
		\begin{pmatrix} 1 & 0 & 0 \\ 0 & 1 & 0 \\ 0 & \omega^{\mp} e^{-2n \varphi_2} & 1 \end{pmatrix}
		  &  \begin{array}{l} \text{ on lips around $\Sigma_2$} \\ \text{ outside lens around $\Sigma_1$} \end{array} \\
		\begin{pmatrix} 1 & 0 & 0 \\ e^{-2n \varphi_1} & 1 & 0 \\ \pm e^{-2n (\varphi_1 + \varphi_2)} & 0 & 1 \end{pmatrix}
		  &  \begin{array}{l} \text{ on lips around $\Sigma_1$} \\ \text{ inside lens around $\Sigma_1$} \end{array} \\
		\begin{pmatrix} 1 & 0 & 0 \\ 0 & 1 & 0 \\ \omega^{\pm} e^{-2n (\varphi_1 + \varphi_2)} & \omega^{\mp} e^{-2n \varphi_2} & 1 \end{pmatrix}
		  &  \begin{array}{l} \text{ on lips around $\Sigma_2$} \\ \text{ inside lens around $\Sigma_1$.} \end{array} 
			\end{cases}
	\end{equation}
\item $S(z) = (I + O(1/z)) A(z)$ as $z\to \infty$, where $A(z)$ is given by \eqref{defAz}.
\end{itemize}
\end{rhproblem}
Notice that by \eqref{U11} and the transformations $U \mapsto T \mapsto S$,
\begin{equation} \label{S11}
	S_{11}(z) = T_{11}(z) = U_{11}(z) = P_{n,n}(z) e^{-n g_1(z)} 	\quad \text{for $z$ outside the lenses around $\Sigma_1$.}
	\end{equation}

As it was shown in \cite{BK}, the jumps on the lips of the lenses are such that $J_S = I + O(e^{-cn})$ as $n \to \infty$
uniformly if we stay away from the branch points $\omega^j z_k$, $j=0,1,2$, $k=1,2,3$.
Even though the lenses around whiskers were not considered in \cite{BK},
the above estimate is valid on these lenses due to \eqref{phi-H} and 
Remark \ref{rem-signs}.
Ignoring the jumps that are close to the identity matrix, we arrive at
a model RH problem.

\subsection{The model problem (outer parametrix)} \label{sec-modelRHP}

We ignore the jumps \eqref{JSlips} on the lenses, as well as the jump \eqref{JS2} on $\bigcup_j C_j^{\pm}$, 
and we find the following model RH problem for a matrix $M = M_n$.

\begin{rhproblem} \label{RHforM}
\begin{itemize}
\item $M$ is analytic in $\mathbb C \setminus (\Sigma_1 \cup \Sigma_2)$, 
\item $M_+ = M_- J_M$ on $\Sigma_1 \cup \Sigma_2 $ with
\begin{equation} \label{JM}
	J_M = \begin{cases} \begin{pmatrix} 0 & 1 & 0 \\ -1 & 0 & 0 \\ 0 & 0 & 1 \end{pmatrix} & \quad \text{on } 
		\Sigma_1^o  \\
		\begin{pmatrix} 0 & \frac{1}{2} e^{2\pi in \beta} & 0 \\ -2 e^{-2\pi in\beta} & 0 & 0 \\ 0 & 0 & 1 \end{pmatrix} 
	& \quad \text{on } \bigcup_{j=0}^2 \omega^j \gamma_{1,2} \\
	\begin{pmatrix} 0 & \frac{1}{2} e^{-2\pi in \beta} & 0 \\ -2 e^{2\pi in\beta} & 0 & 0 \\ 0 & 0 & 1 \end{pmatrix} 
	& \quad \text{on } \bigcup_{j=0}^2 \omega^j \gamma_{1,3} \\
	\begin{pmatrix} 1 & 0 & 0 \\ 0 & 0 & 1 \\ 0 & -1 & 0 \end{pmatrix} 
	& \quad \text{on } \Sigma_2 
 	\end{cases} \end{equation}
\item $M(z)  = (I + O(1/z)) A(z)$ as $z \to \infty$ where $A(z)$
is given by \eqref{defAz}
\item For $j=0,1,2$ and $k=1, 2, 3$ we have
	\[ M(z) =  O\left((z- \omega^j z_k)^{-1/4}\right) \qquad \text{ as } z \to \omega^j z_k. \]
\end{itemize}
\end{rhproblem}

The model problem depends on $n \beta \mod{\mathbb Z}$ with $n \in \mathbb N$ and $0 < \beta \frac{1}{6}$.  
The asymptotic condition \eqref{defAz} is compatible with the jump of $M$ on the unbounded
contour $\Sigma_2$, since
\[ A_+ = A_- \begin{pmatrix} 1 & 0 & 0 \\ 0 & 0 & 1 \\ 0 & -1 & 0 \end{pmatrix} \quad \text{ on } \Sigma_2. \]
as can be checked from the definition \eqref{defAz}.

We write
\begin{equation} \label{def-betastar}
\beta^* = \frac{1}{2} + \frac{\tau}{2\pi i} \log 2 + \int_{\infty_1}^{-A^{1/3}} \omega_R
\end{equation}
see \eqref{omegaR}, \eqref{def-tau} and \eqref{Nepsilon} for
the definitions of $\omega_R$, $\tau$,  and $-A^{1/3}$.

\begin{proposition} \label{prop:modelRHP}
The following holds:
\begin{enumerate}
\item[\rm (a)]  The model RH problem is  solvable
if and only if 
\begin{equation} \label{betacongruent} 
	n \beta \neq \beta^*  \qquad \mod{\mathbb Z}
	\end{equation}
and if \eqref{betacongruent} holds then there is a unique solution $M = M_n$.
\item[\rm (b)] 
Let $\varepsilon > 0$ and $r > 0$, and let $\| \cdot \|$ denote any matrix norm.
Then there is a constant $K > 0$ such that
\[ \| M_n^{-1}(z)\| \leq K, \qquad \| M_n(z) \| \leq K \]
holds for all $z$ such that
$ |z- \omega^j z_k| \geq r$ for all $j=0,1,2$, $k = 1,2,3$, and all
$n \in \mathbb N$ such that $\dist(n \beta - \beta^*, \mathbb Z) \geq \varepsilon$.
\item[\rm (c)] The $11$-entry of $M_n$ has the form given in \eqref{def-Mn11}.
\end{enumerate}
\end{proposition}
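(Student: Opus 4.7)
The plan is to construct $M_n$ explicitly via a theta function ansatz on the Riemann surface $\mathcal R$, exploiting its $\mathbb Z_3$ symmetry, and then to read off solvability, uniqueness, and the uniform bounds from this explicit formula by standard Riemann--Hilbert arguments.

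First I would reinterpret RH problem \ref{RHforM} on $\mathcal R$: the three columns of $M_n$ play the role of values on the three sheets of a single multi-valued function, the jump on $\Sigma_1$ interchanging sheets~$1 \leftrightarrow 2$ (with multiplicative twists $2 e^{\mp 2 \pi i n \beta}$ on the whiskers and no twist on $\Sigma_1^o$) and the jump on $\Sigma_2$ interchanging sheets~$2 \leftrightarrow 3$. Together with the prescribed asymptotics from $A(z)$ in \eqref{defAz} and the $(z-\omega^j z_k)^{-1/4}$ bound at the branch points, this pins down the divisor and the quasi-periodic behavior that the ansatz must have.

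Second, for the $11$-entry I would write $M_{n,11}$ as the product of three factors corresponding directly to those in \eqref{def-Mn11}: (i) the algebraic prefactor $\xi_1(z) / ( 3 \xi_1^2 - 2 z^2 \xi_1 - (1+t) z )^{1/2}$, whose numerator vanishes at the three points $-\omega^j A^{1/3}$ and whose radicand vanishes simply at the nine branch points, producing the required $(z-\omega^j z_k)^{-1/4}$ behavior and a nonzero leading constant as $z \to \infty_1$; (ii) a multiplier $2^{2 \int_z^{\infty_1} \omega_R}$ which picks up the factors $2^{\pm 1}$ when crossing $\Sigma_1^w$ and compensates the shift $\frac{\tau}{2 \pi i} \log 2$ appearing in piece~(iii); and (iii) a ratio of theta functions in the Abel-map variable $s_0 + \int_{-A^{1/3}}^z \omega_R$, whose denominator cancels the spurious zero of~(i) at $z = -A^{1/3}$ while its numerator absorbs the phase twist by $e^{\pm 2 \pi i n \beta}$ through the shift $n\beta$ inside $\theta$. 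Direct verification of each jump in \eqref{JM} then reduces to the periodicity relations \eqref{theta-periodicity} together with $\oint_{c_R} \omega_R = 1$ and $\oint_{a_R} \omega_R = \tau$. The remaining entries of $M_n$ are obtained by $\mathbb Z_3$ symmetry and by replacing $-A^{1/3}$ with its rotates.

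Third, solvability reduces to nonvanishing of the denominator theta function in \eqref{def-Mn11} at $z = \infty_1$; using $s_0 = (-1+\tau)/4$ and the zero set of $\theta$ described after \eqref{def-s0}, this is exactly the condition $n\beta \not\equiv \beta^* \pmod{\mathbb Z}$ in view of \eqref{def-betastar}. Uniqueness follows from a standard vanishing argument: if $M_n$ and $\widetilde M_n$ both solve, then $F = M_n \widetilde M_n^{-1}$ has trivial jumps on all contours and at worst isolated $(z-\omega^j z_k)^{-1/2}$ singularities at the branch points, which are removable; since $F \to I$ at infinity, Liouville's theorem forces $F \equiv I$. Part (b) then follows from continuous dependence of the explicit formula on $t$ and $n \beta \bmod 1$, combined with compactness of the parameter set cut out by $\dist(n\beta - \beta^*, \mathbb Z) \geq \varepsilon$ and $t$ in a compact subset of $(t_*, t_{**})$. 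The main obstacle is the careful bookkeeping of twists: matching all four jump types in \eqref{JM} simultaneously requires precise tracking of how each of the three factors transforms across each cycle of $\mathcal R$, and in particular pinpointing the specific shift $\frac{\tau}{2 \pi i} \log 2$ in the theta argument that is needed for the whisker jumps to close up consistently.
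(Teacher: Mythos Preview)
Your treatment of the $11$-entry is essentially the paper's own argument, repackaged: your factors (i), (ii), (iii) are exactly the paper's ``second step'' (the square root of $F=\eta^2/(3\eta^2-2w\eta-(1+t)w)$ on the quotient surface $\mathcal S=\mathcal R/\mathbb Z_3$), ``first step'' (the diagonal gauge $2^{-2u_j}$ that removes the factors $2,\tfrac12$ from the whisker jumps), and ``third step'' (the ratio of shifted $\theta$'s). The paper phrases everything on the genus-one quotient $\mathcal S$ rather than on $\mathcal R$, but since $\omega_R=\psi^*(\omega_S)$ this is only a change of coordinates. Your uniqueness argument via $F=M_n\widetilde M_n^{-1}$ and removability of the $O(|z-\omega^jz_k|^{-1/2})$ singularities is the standard one and is correct.

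The genuine gap is the sentence ``the remaining entries of $M_n$ are obtained by $\mathbb Z_3$ symmetry and by replacing $-A^{1/3}$ with its rotates.'' This does not work: the three rotates $-\omega^jA^{1/3}$ all map under $\psi$ to the \emph{same} point $Q_{1/2}=(-A,0)\in\mathcal S$, and in particular have the same image under the Abel map built from $\omega_R$; substituting them into your formula reproduces the first row, not the second or third. The three rows of $M_n$ are distinguished not by the $\mathbb Z_3$ action on $z$ but by their asymptotic behavior at the two points at infinity (encoded in $A(z)$), which forces genuinely different zero divisors. The paper supplies the missing argument in its fourth step: it first proves (Lemma~5.4, with separate and somewhat delicate cases for $P=\infty_1$ and $P=\infty_2$) that the zero divisor $D=P_\nu+\rho(P_\nu)+\rho^2(P_\nu)$ of the first-row vector is non-special, so by Riemann--Roch $\dim L(D+2\infty_2)=3$; a basis $\{1,f,g\}$ then yields the matrix $B$ whose rows are $f_j\cdot(\text{first row})$, and a determinant computation shows $B A^{-1}\to C$ invertible at infinity, whence $N=C^{-1}B$. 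Without this step you have only a vector solution, and neither the existence of the full matrix $M_n$ nor the uniform bound on $M_n^{-1}$ in part~(b) is established.
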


The proof of Proposition \ref{prop:modelRHP} is rather long. We decided
to put it in a separate section~\ref{sec:modelRHP}.

\subsection{The local parametrix} 
Let $r > 0$ be a small number such that the disks $D(\omega^j z_k,r)$, $j = 0,1,2$, $k=1,2,3$, 
are all disjoint and they do not intersect with the lens  around $\Sigma_2$. Let $D$ denote
the union of these nine disks.

The local parametrix $P$ is defined in each of these disks and it should satisfy
\begin{rhproblem} \label{RHforP}
$P$ satisfies the following:
\begin{itemize}
\item $P$ is analytic in $D \setminus \Gamma_S$, 
\item $P_+ = P_- J_P$ on $\Gamma_S \cap D$ with $J_P = J_S$ is given by \eqref{JS}.
\item $P(z) = (I + O(1/n)) M_n(z)$ as $n \to \infty$ uniformly for $z$ on the boundary of each of the disks.
\end{itemize}
\end{rhproblem}

We will be able to constuct $P$ with the desired jumps, but the required matching can be done
only if $n$ is restricted to 
\[ \mathbb N_{\varepsilon} = \{ n \in \mathbb N \mid \dist(n \beta - \beta^*, \mathbb Z) \geq \varepsilon \} \]
for some $\varepsilon > 0$. Note that  this agrees with the earlier definition \eqref{Nepsilon},
in view of \eqref{def-betastar}.

The construction can be done in each of the disks with Airy functions. We will do it in some detail for the point $z_1$
because of the somewhat unusual fact that three pieces of $\Sigma_1$ are connected at $z_1$ and, as a result of 
the opening of lenses, that nine curves from $\Gamma_S$ come together at $z_1$. 
The construction of $P$ at $\omega z_1$ and $\omega^2 z_1$ follows by symmetry and the construction
at the other branch points is of a standard form, see e.g. \cite{Deift}.

\begin{figure}[t] 
\centering
\begin{overpic}[width=5cm,height=5cm]{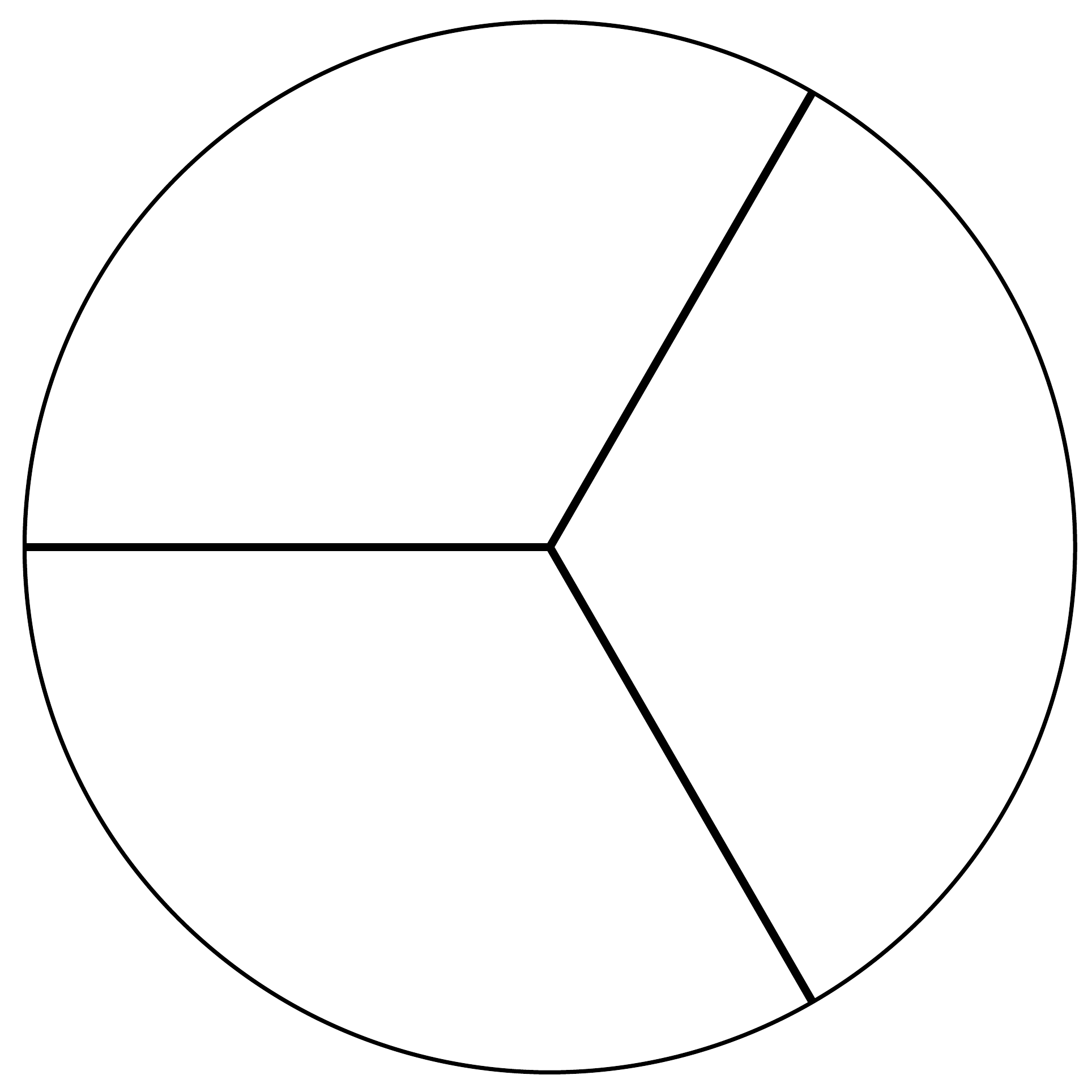}
	\put(53,49){$z_1$}
	\put(62,60){region I}
	\put(16,70){region II}
	\put(15,25){region III}
\end{overpic}
\caption{Disk $D(z_1,r)$ around $z_1$ and the parts of $\Sigma_1$ inside this disk. \label{fig:DiskDz1}}
\end{figure}

The set $D(z_1,r) \setminus \Sigma_1$  consists of three parts, that we call regions I, II, III,
as shown in Figure \ref{fig:DiskDz1}.  
Note that by \eqref{phi1-jump} we have that the function
\[ \widehat{\varphi}_1(z) = \begin{cases} \varphi_1(z) & \text{ in region I,} \\
		2 \pi i \beta -  \varphi_1(z) & \text{ in region II,} \\
		-2\pi i \beta - \varphi_1(z)  & \text{ in region III,} \end{cases} \]
is analytic in $D(z_1, r) \setminus [0, z_1]$,  By \eqref{phi1} it is real for real $z > z_1$ 
and for some constant $c > 0$,
\[ \widehat{\varphi}_1(z) = c(z-z_1)^{3/2} + O(z-z_1)^{1/2} \qquad \text{ as } z \to z_1, \]
see \eqref{xi12-atz1}. It then follows that 
\begin{equation} \label{def-f1} f_1(z) = \left[ \frac{3}{2} \widehat{\varphi}_1(z) \right]^{2/3}, \qquad z \in D(z_1,r)  
\end{equation}
is a conformal map from the disk $D(z_1,r)$ to a neighborhood of the origin, in such a way that $\Sigma_1 \cap D(z_1,r)$
is mapped by $f_1$ into the three half rays with angles $\pm \pi/3$ and $\pi$. We have the
freedom to open the lenses around $\Sigma_1$ in such a way that the lips of the lenses 
in $D(z_1,r)$ are mapped by $f_1$ into half rays as well.
In order to reduce the number of jump contours we open the lenses such that the lips  of two consecutive lenses 
coincide in the disk $D(z_1,r)$ and are mapped by $f_1$ into the half rays with 
the angles  $0$ and $\pm 2\pi/3$. 
Note that 
\[ \varphi_1(z) = \begin{cases}	 \frac{2}{3} f_1(z)^{3/2}  & \text{ in region I}, \\
	- \frac{2}{3} f_1(z)^{3/2} + 2\pi i \beta & \text{ in region II},  \\
	- \frac{2}{3} f_1(z)^{3/2} - 2 \pi i \beta & \text{ in region III},  \end{cases} \]
and we find from \eqref{JS} and \eqref{JSlips}, after some calculations, 
that the jump matrices for $P$ can be written as, 
\begin{multline} \label{JPnearz1}
J_P = J_S = \\
	\begin{cases} 
	\begin{pmatrix} e^{n \varphi_{1,-}} & 0 & 0 \\ 0 & e^{-n \varphi_{1,-}} & 0 \\ 0 & 0 & 1 \end{pmatrix} 
	\begin{pmatrix} 0 & 1 & 0 \\ -1 & 0 & 0 \\ 0 & 0 & 1 \end{pmatrix} 
	\begin{pmatrix} e^{-n \varphi_{1,+}} & 0 & 0 \\ 0 & e^{n \varphi_{1,+}} & 0 \\ 0 & 0 & 1 \end{pmatrix} 
	& \text{ if } \arg f_1(z) = \pi \\
	\begin{pmatrix} e^{n \varphi_{1,-}} & 0 & 0 \\ 0 & e^{-n \varphi_{1,-}} & 0 \\ 0 & 0 & 1 \end{pmatrix} 
	\begin{pmatrix} 0 & \frac{1}{2} & 0 \\ -2 & 0 & 0 \\ 0 & 0 & 1 \end{pmatrix} 
	\begin{pmatrix} e^{-n \varphi_{1,+}} & 0 & 0 \\ 0 & e^{n \varphi_{1,+}} & 0 \\ 0 & 0 & 1 \end{pmatrix}
	& \text{ if } \arg f_1(z) = \pm \pi/3 \\
	\begin{pmatrix} e^{n \varphi_{1}} & 0 & 0 \\ 0 & e^{-n \varphi_{1}} & 0 \\ 0 & 0 & 1 \end{pmatrix}
	\begin{pmatrix} 1 & 0 & 0 \\ -1 & 1 & 0 \\ 0 & 0 & 1 \end{pmatrix} 	
	\begin{pmatrix} e^{-n \varphi_{1}} & 0 & 0 \\ 0 & e^{n \varphi_{1}} & 0 \\ 0 & 0 & 1 \end{pmatrix} 
	& \text{ if } \arg f_1(z) = \pm 2\pi/3  \\
	\begin{pmatrix} e^{n \varphi_{1}} & 0 & 0 \\ 0 & e^{-n \varphi_{1}} & 0 \\ 0 & 0 & 1 \end{pmatrix} 
	\begin{pmatrix} 1 & 0 & 0 \\ 4 & 1 & 0 \\ 0 & 0 & 1 \end{pmatrix}
	\begin{pmatrix} e^{-n \varphi_{1}} & 0 & 0 \\ 0 & e^{n \varphi_{1}} & 0 \\ 0 & 0 & 1 \end{pmatrix} 
	& \text{ if } \arg f_1(z) = 0. 
	\end{cases}
	\end{multline}
Here the contours that are mapped to angles $ \pm 2\pi/3$ and $\pi$ are oriented towards $z_1$ and the
others are oriented away from $z_1$.

We then look for a matrix valued function  $\Psi$ defined and analytic in
an auxiliary $\zeta$ plane except with cuts at the half rays $\arg \zeta = k \pi/3$ for $k=-2, \ldots, 3$, such that
$ \Psi_+ = \Psi_- J_{\Psi} $ with
\begin{equation} J_{\Psi}  =
	\begin{cases} 
	\begin{pmatrix} 0 & 1 & 0 \\ -1 & 0 & 0 \\ 0 & 0 & 1 \end{pmatrix} 
	& \text{ if } \arg \zeta = \pi, \\
	\begin{pmatrix} 0 & \frac{1}{2} & 0 \\ -2 & 0 & 0 \\ 0 & 0 & 1 \end{pmatrix} 
	& \text{ if } \arg \zeta = \pm \pi/3, \\
	\begin{pmatrix} 1 & 0 & 0 \\ -1 & 1 & 0 \\ 0 & 0 & 1 \end{pmatrix} 	
	& \text{ if } \arg \zeta = 2\pi/3,  \\
	\begin{pmatrix} 1 & 0 & 0 \\ 4 & 1 & 0 \\ 0 & 0 & 1 \end{pmatrix}
	& \text{ if } \arg \zeta = 0. 
	\end{cases}
	\end{equation}
	
Then for any analytic prefactor $E_n$ we will have that
\[ E_n(z) \Psi(n^{2/3} f_1(z)) \begin{pmatrix} e^{-n \varphi_1(z)} & 0 & 0 \\ 0 & e^{n \varphi_1(z)} & 0 \\ 0 & 0 & 1 \end{pmatrix} \]
satisfies the required jumps for the local parametrix.

The matrix $\Psi$ is constructed with the Airy functions
\[ y_0(\zeta) = \Ai(\zeta), \quad y_1(\zeta) = \omega \Ai(\omega \zeta), \quad y_2(\zeta) = \omega^2 \Ai(\omega^2 \zeta) \]
in the following way
\[ \Psi(\zeta) = \begin{cases}
	\begin{pmatrix} -2y_2 & \frac{1}{2}y_0 & 0 \\ -2y_2' & \frac{1}{2} y_0' & 0 \\ 0 & 0 & 1 \end{pmatrix} & 
		\text{ for }  0 <  \arg \zeta <  \pi/3, \\
	\begin{pmatrix} -y_0 & -y_2 & 0 \\ -y_0' &  -y_2' & 0 \\ 0 & 0 & 1 \end{pmatrix} & 
		\text{ for }  \pi/3 <  \arg \zeta < 2 \pi/3, \\
	\begin{pmatrix} y_1 & -y_2 & 0 \\ y_1' & -y_2' & 0 \\ 0 & 0 & 1 \end{pmatrix} & 
		\text{ for } 2 \pi/3 <  \arg \zeta < \pi, \\
	\begin{pmatrix} 2y_1 &  \frac{1}{2}y_0 & 0 \\ 2y_1' & \frac{1}{2} y_0' & 0 \\ 0 & 0 & 1 \end{pmatrix} & 
		\text{ for }  - \pi/3  <  \arg \zeta < 0, \\
	\begin{pmatrix} y_0 & -y_1 & 0 \\ y_0' &  -y_1' & 0 \\ 0 & 0 & 1 \end{pmatrix} & 
		\text{ for }  -2 \pi/3 <  \arg \zeta < - \pi/3, \\
	\begin{pmatrix} -y_2 & -y_1 & 0 \\ -y_2' & -y_1' & 0 \\ 0 & 0 & 1 \end{pmatrix} & 
		\text{ for }  - \pi < \arg \zeta < -2 \pi/3.
		\end{cases}
		\]
		
From the known asymptotic behavior of the Airy functions we find the behavior
\begin{multline} 
\Psi(n^{2/3} f_1(z)) \begin{pmatrix} e^{-n \varphi_1(z)} & 0 & 0 \\ 0 & e^{n \varphi_1(z)} & 0 \\ 0 & 0 & 1 \end{pmatrix} 
	= 
	\frac{1}{2\sqrt{\pi}} \begin{pmatrix} n^{-1/6} f_1(z)^{-1/4} & 0 & 0 \\ 0 & n^{1/6} f_1(z)^{1/4} & 0 \\ 0 & 0 & 1 \end{pmatrix} \\
	\times 
	\begin{cases} 
		\begin{pmatrix} 2i & \frac{1}{2} & 0 \\ 2i & -\frac{1}{2} & 0 \\ 0 & 0 & 1 \end{pmatrix} 
			& \text{ if } -\pi/3 < \arg f_1(z) < \pi/3, \\
			\begin{pmatrix} -1 & i & 0 \\ 1 & i & 0 \\ 0 & 0 & 1 \end{pmatrix}
				\begin{pmatrix} e^{-2\pi i n \beta} & 0 & 0 \\ 0 & e^{2\pi i n \beta} & 0 \\ 0 & 0 & 1 \end{pmatrix} 
				& \text{ if } \pi/3 < \arg f_1(z) < \pi,  \\
				\begin{pmatrix} 1 & -i & 0 \\ -1 & -i & 0 \\ 0 & 0 & 1 \end{pmatrix} 
				\begin{pmatrix} e^{2\pi i n \beta} & 0 & 0 \\ 0 & e^{-2\pi i n \beta} & 0 \\ 0 & 0 & 1 \end{pmatrix} 
				& \text{ if } -\pi < \arg f_1(z) < -\pi/3.  
		\end{cases} \\
		\times (I + O(n^{-1})) \qquad \text{ as $n \to \infty$}
		\end{multline}
uniformly for $z$ on the circle $|z-z_1| = r$.
		
The matching with $M_n$ is then provided if we define
\begin{multline} E_n(z) = \sqrt{\pi} M_n(z)  
	\times \begin{cases} 
		\begin{pmatrix} - \frac{1}{2} i & - \frac{1}{2} i & 0 \\ 2 & -2 & 0 \\ 0 & 0 & 1 \end{pmatrix} 
			& \text{ if } -\pi/3 < \arg f_1(z) < \pi/3 \\
				\begin{pmatrix} e^{2\pi i n \beta} & 0 & 0 \\ 0 & e^{-2\pi i n \beta} & 0 \\ 0 & 0 & 1 \end{pmatrix}
			\begin{pmatrix} -1 & 1 & 0 \\ - i & - i & 0 \\ 0 & 0 & 1 \end{pmatrix} 
			& \text{ if } \pi/3 < \arg f_1(z) < \pi  \\
				\begin{pmatrix} e^{-2\pi i n \beta} & 0 & 0 \\ 0 & e^{2\pi i n \beta} & 0 \\ 0 & 0 & 1 \end{pmatrix}
				\begin{pmatrix} 1 & -1 & 0 \\ i & i & 0 \\ 0 & 0 & 1 \end{pmatrix} 
			& \text{ if } -\pi < \arg f_1(z) < -\pi/3
		\end{cases} 
		\\ \times \begin{pmatrix} n^{1/6} f_1(z)^{1/4} & 0 & 0 \\ 0 & n^{-1/6} f_1(z)^{-1/4} & 0 \\ 0 & 0 & 1 \end{pmatrix}
\end{multline}
It is straightforward to check that $E_n$ is analytic in a full neighborhood of $z_1$.

Then if we define 
\[ P(z) = E_n(z) \Psi(n^{2/3} f_1(z)) \begin{pmatrix} e^{-n \varphi_1(z)} & 0 & 0 \\ 0 & e^{n \varphi_1(z)} & 0 \\ 0 & 0 & 1 \end{pmatrix} \]
we get $ P M_n^{-1} = M_n (I + O(1/n)) M_n^{-1} = I + M_n O(1/n) M_n^{-1}$ as $n \to \infty$.
If $n \to \infty$ with $n \in \mathbb N_{\varepsilon}$ then by Proposition \ref{prop:modelRHP} (b), 
both $M_n$ and $M_n^{-1}$ are uniformly
bounded on the circle $|z-z_1 | = r$. This leads to the matching condition
$PM_n^{-1} = I + O(1/n)$ as $n \in \mathbb N_{\varepsilon} \to \infty$, 
uniformly for $z \in \partial D(z_1, r)$ as required in the RH problem \ref{RHforP}.

\subsection{Proof of Theorem \ref{theorem2}}

In the final transformation we define
\begin{equation} \label{def-R}
	R(z) = \begin{cases} S(z) P(z)^{-1} & \text{ in } D, \\
		S(z) M(z)^{-1} & \text{ elsewhere}.
		\end{cases}
		\end{equation}
Recall that $D$ denotes the union of nine disks $D(\omega^j z_k)$, $j=0,1,2$, $k=1,2,3$.

Then $R$ is defined and analytic outside $\Gamma_S \cup \partial D$ and it has an analytic
extension across $\Sigma_1$, $\Sigma_2$ and across the parts of $\Gamma_S$ that are in $D$. 
Thus $R$ is analytic in $\mathbb C \setminus \Gamma_R$ where $\Gamma_R$ consists of $\partial D$,
and the parts of $\Gamma_S \setminus (\Sigma_1 \cup \Sigma_2)$ that are outside the disks,
and we have the following RH problem.
\begin{rhproblem} \label{RHforR}
\begin{itemize} 
\item $R$ is analytic in $\mathbb C \setminus \Gamma_R$.
\item $R_+ = R_- J_R$ on $\Gamma_R$ where 
\[  J_R(z) = \begin{cases} M(z)^{-1} P(z) & \text{ for $z$ on the circles,} \\
	M(z)^{-1} J_S(z) M(z) & \text{ elsewhere on $\Gamma_R$.}
	\end{cases} \]
\item $R(z) = I + O(z^{-1})$ as $z \to \infty$.
\end{itemize} 
\end{rhproblem}

As a result of the matching condition in the RH problem \ref{RHforP} for $P$ we have
\[ J_R(z) = I + O(1/n) \qquad \text{ as } n \to \infty, n \in \mathbb N_{\varepsilon}. \]
On the remaining parts of $\Gamma_R$ the jumps for $R$ are exponentially close to the identity matrix:
\[ J_R(z) = I + O(e^{-cn}) \qquad \text{ elsewhere on } \Gamma_R \]
for some $c > 0$.
This follows from the formulas \eqref{JS} and \eqref{JSlips} for the jump matrices $J_S$
which are of the form $I + O(e^{-cn})$ and we also use Proposition \ref{prop:modelRHP} (b),
which says that $M(z)$ and $M(z)^{-1}$ are uniformly bounded on $\Gamma_R$. For $z \to \infty$,
the estimate can be sharpened to
\[ J_R(z) = I + O(e^{-cn |z|^3}) \qquad \text{ elsewhere on } \Gamma_R. \]

Then by standard estimates on RH problems \cite{Deift}, we have
\begin{equation} \label{R-estimate} 
	R(z) = I + O\left( \frac{1}{n (1+|z|)} \right)   \qquad \text{ as } n \to \infty, n \in \mathbb N_{\varepsilon} 
	\end{equation}
uniformly for $z \in \mathbb C \setminus \Gamma_R$. The estimate \eqref{R-estimate} is the
final result of the steepest descent analysis of the RH problem.

\medskip

\begin{proof}[Proof of Theorem \ref{theorem2}]

We can now prove Theorem \ref{theorem2} by following the steps $Y \mapsto X \mapsto V \mapsto U \mapsto T \mapsto S \mapsto R$
to see the effect on the polynomial $P_{n,n}$. This is similar to the proof of Lemma 6.1 in \cite{BK}.
As in that proof we find
\[ P_{n,n}(z) = S_{11}(z) e^{n g_1(z)}, \qquad z \in \mathbb C \setminus L_1 \] 
where $L_1$ denotes the lens around $\Sigma_1$.
Also $S = RM$ with $R$ satisfying \eqref{R-estimate} gives us
\[ S(z) = M_{n,11}(z) + O(1/n), \qquad z \in \mathbb C \setminus L_1, \,  n \in \mathbb N_{\varepsilon}. \]
where we also used the fact that $M_{n,11}(z)$ remains bounded, see Proposition \ref{prop:modelRHP}.
This proves \eqref{Pnn-asymp} with a $O(1/n)$ term that is uniform for $z \in \mathbb C \setminus L_1$.
Since we have the freedom to open the lens as small as we wish, we find \eqref{Pnn-asymp}
with $M_{n,11}$ given by \eqref{def-Mn11} according to part (c) of Proposition \ref{prop:modelRHP}.

The $O(1/n)$ is also uniform for $t$ in compact subsets of $t_*, t_{**}$, for
values of $n \in \mathbb N_{\varepsilon}$, where $\mathbb N_{\varepsilon}$, see \eqref{Nepsilon}, 
is varying with $t$.

This completes the proof of Theorem \ref{theorem2}, pending the proof of Proposition \ref{prop:modelRHP}
that will follow in the next section.
\end{proof}

\section{Proof of Proposition \ref{prop:modelRHP}} \label{sec:modelRHP}

\subsection{Riemann surface $\mathcal S$}

To solve the model RH problem we use the Riemann surface $\mathcal R$ as before
and with the sheet structure  shown in Figure \ref{fig:three-sheets}. It has genus three.

The Riemann surface has three fold symmetry 
\begin{equation} \label{rho-def} 
	\rho : \mathcal R \to \mathcal R : (z, \xi) \mapsto (\omega z, \omega^2 \xi) 
	\end{equation}
which induces an action of $\mathbb Z_3$ on $\mathcal R$. It will be useful
to consider the orbit space which is a Riemann surface that we call $\mathcal S$,
and the quotient map 
\begin{equation} \label{psi-def} 
	\psi : \mathcal R \to \mathcal S : (z, \xi) \mapsto (z^3, z \xi). 
	\end{equation}
Since $\mathcal R$ is defined by the equation \eqref{def:speccurve}, we find that
$\mathcal S$ has the equation (where $w= z^3$ and $\eta = z \xi$),
\begin{equation} \label{equation-S} 
	\mathcal S : \qquad  \eta^3 - w \eta^2 - (1+t) w\eta + w^2 + Aw = 0. 
	\end{equation}
Then $\mathcal S$ is a genus one Riemann surface, whose sheet structure is shown 
in Figure \ref{fig:surfaceS} with $w_j = z_j^3$ for $j=1,2,3$.  

The branch points are connected by cuts that are the images of the cuts $\gamma_{1,2}$ and $\gamma_{1,3}$
under the mapping $w = z^3$. We denote these by 
\begin{equation} \label{cutsw1w2w3} 
	[w_1, w_2] := \gamma_{1,2}^3, \qquad [w_1,w_3] := \gamma_{1,3}^3 
	\end{equation}
but we emphasize that these are not exact straight line segments.

\begin{figure}[!t]
\centering
\subfigure[$\mathcal S_1$]{
\begin{tikzpicture}[scale=2]
\draw[very thick] (0,0) node[left]{$0$};
\draw[very thick] (0,0)--(0.4,0) node[right]{$w_1$};
\draw[very thick] (0.4,0)--(0.6,0.3) node[right]{$w_2$};
\draw[very thick] (0.4,0)--(0.6,-0.3) node[right]{$w_3$};
\draw  (-1.0,-0.5) rectangle (1.5,0.5);
\end{tikzpicture} 
}  
\subfigure[$\mathcal S_2$]{
\begin{tikzpicture}[scale=2]
\draw[very thick] (0,0) node[above]{$0$};
\draw[very thick] (0,0)--(0.4,0) node[right]{$w_1$};
\draw[very thick] (0.4,0)--(0.6,0.3) node[right]{$w_2$};
\draw[very thick] (0.4,0)--(0.6,-0.3) node[right]{$w_3$};
\draw[dashed](0,0)--(-1.0,0);
\draw  (-1.0,-0.5) rectangle (1.5,0.5);
\end{tikzpicture}
} 

\subfigure[$\mathcal S_3$]{
\begin{tikzpicture}[scale=2]
\draw[very thick] (0,0) node[right]{$0$};
\draw[dashed](0,0)--(-1.0,0);
\draw  (-1.0,-0.5) rectangle (1.5,0.5);
\end{tikzpicture}
} 

\caption{The three sheets $\mathcal S_1$, $\mathcal S_2$ and $\mathcal S_3$ of 
the Riemann surface $\mathcal S$ \label{fig:surfaceS}}
\end{figure}
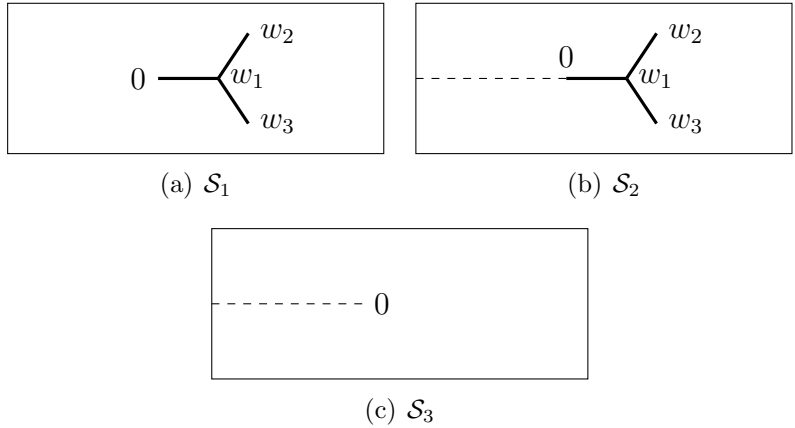

There are three solutions to \eqref{equation-S} with respective asymptotic behaviors
\begin{equation} \label{eta-asymp}
\begin{aligned} 
	\eta_1(w) & = w + t + O(w^{-1}) \\
	\eta_2(w) & = w^{1/2} - \frac{t}{2} + O(w^{-1/2}) \\
	\eta_3(w) & = -w^{1/2} - \frac{t}{2} + O(w^{-1/2}) 
	\end{aligned}
	\end{equation}
as $w \to \infty$. The solution $\eta_j$ is defined and analytic on sheet $\mathcal S_j$ for $j=1,2,3$. Note that
$\eta_j(z^3) = z \xi_j(z)$.

The Riemann surface $\mathcal S$ has an anti-holomorphic involution 
$ \sigma: \mathcal S \to \mathcal S : (w, \eta) \mapsto (\overline{w}, \overline{\eta})$.
The real part of $\mathcal S$ is
\begin{equation}\label{Sreal-def} 
	{\mathcal S}_{real} =  \{ Q \in \mathcal S \mid \sigma(Q) = Q \}. 
	\end{equation}
which is a closed loop that contains the two points at infinity. 
We provide it with an orientation from left to right on the intervals $[w_1, \infty)$ and $(-\infty,0]$
on the first sheet, and the interval $[0, \infty)$ on the third sheet, and from right to left
on the interval $(\infty,w_1]$ on the second sheet. 
This cycle is called $c_S$.  Also
\[ c_S = \psi(c_R) \]
where $c_R$ is the cycle on $\mathcal R$ that was introduced in section \ref{sec:orthopoly}.

There is a cycle $b_S$ that starts and ends at $w_1$ and goes around  $[w_1,w_2]$ on the first
sheet in counterclockwise fashion. 
The cycle $\overline{b}_S$ goes around $[w_1,w_3]$ on the first sheet with clockwise orientation. 
Then ($\overline{b}_S$, $b_S$) is a canonical homology basis for $\mathcal S$ and 
\begin{equation} \label{cS-def} 
	c_S = b_S + \overline{b}_S
	\end{equation}
	with equality in the sense of homotopic cycles.
Also
\begin{equation} \label{aS-def} 
	a_S = b_S - \overline{b}_S 
	\end{equation}
	is a cycle that goes around both $[w_1,w_2]$ and $ [w_1,w_3]$ on the first sheet with counterclockwise orientation.

Since the genus of $\mathcal S$ is one, there is a unique holomorphic differential $\omega_S$ such that
\begin{equation} \label{cS-period} \oint_{c_S} \omega_S = 1. \end{equation}
Using the equation \eqref{equation-S} for $\mathcal S$ we can find
\begin{equation} \label{omegaS-def} 
	\omega_S = \frac{C \, dw}{3\eta^2-2w\eta - (1+t)w}
	\end{equation}
for some constant $C > 0$. The denominator $3 \eta^2 - 2w \eta - (1+t)w$ 
has simple zeros at the branch points $w_1, w_2, w_3$ and a double zero at $0$.
It is also real on ${\mathcal S}_{real}$ and in fact positive on the real parts 
of the first and third sheets, and negative on the real part of the second sheet,
as can be verified from \eqref{eta-asymp} and the fact that there are no other zeros on ${\mathcal S}_{real}$ besides
$w_1$ and $0$.
A local analysis shows that \eqref{omegaS-def}
is indeed holomorphic at the branch points, as well as at the points at infinity.

Note that the holomorphic differential $\omega_R$ from \eqref{omegaR} is
the pullback of $\omega_S$ onto $\mathcal R$:
\[ \psi^*(\omega_S) = \omega_R. \]

Because of symmetry, the $a_S$ period of $\omega_S$ is purely imaginary, 
with positive imaginary part (due to the chosen orientation). We introduce 
\begin{equation} \label{tau-def} 
	\tau := \oint_{a_S} \omega_S  \in i \mathbb R^+ 
\end{equation}
which is the same number as given by \eqref{def-tau}.
Then by \eqref{cS-def}--\eqref{aS-def} and \eqref{cS-period}--\eqref{tau-def}
\begin{align} \label{bS-periods} 
	\oint_{b_S} \omega_S = \frac{1}{2} + \frac{1}{2} \tau, \qquad
 \oint_{\overline{b}_S} \omega_S = \frac{1}{2} - \frac{1}{2} \tau.
\end{align}
The lattice $L$ of periods is thus generated by $\frac{1}{2} - \frac{1}{2} \tau $ and $\frac{1}{2} + \frac{1}{2} \tau$.
\begin{equation} \label{latticeL} 
	L = \{ m(\tfrac{1}{2} - \tfrac{1}{2} \tau) + n (\tfrac{1}{2} + \tfrac{1}{2} \tau) \mid m,n \in \mathbb Z \}. 
	\end{equation}

The Abel map with base point $\infty_1$ (the point at infinity on $\mathcal S_1$) is
\begin{equation} \label{Abelmap} 
	u : \mathcal S \to \mathbb C \slash L : Q \in \mathcal S \mapsto \int_{\infty_1}^{Q} \omega_S  \quad \text{modulo periods} 
	\end{equation}
and it identifies $\mathcal S$ with the complex torus $\mathbb C \slash L$.
A fundamental domain for $\mathbb C \slash L$ is the parallelogram with vertices 
$0, \frac{1}{2} - \frac{1}{2} \tau, 1, \frac{1}{2} + \frac{1}{2} \tau$. The Abel map is real-valued on $c_S$.

To make the Abel map single valued we take away the cuts $[w_1, w_2]$ and $[w_1, w_3]$ (that is, the
$a_S$ cycle), and the restrictions of the Abel map are then denoted by $u_1$, $u_2$, $u_3$.
Thus
\begin{equation} \label{uj-def} 
	u_j(w) = u (w^{(j)}) = \int_{\infty_1}^{w^{(j)}} \omega_S, \qquad j = 1,2,3 
	\end{equation} 
where $w^{(j)}$ is the point on the $j$th sheet that projects onto $w \in \mathbb C$
and the path of integration does not intersect $[w_1,w_2]$ and $[w_1,w_3]$ on the sheets $\mathcal S_1$
and $\mathcal S_2$. Then $u_1$ is defined and analytic  on $\overline{\mathbb C} \setminus 
([0,w_1] \cup [w_1,w_2] \cup [w_1,w_3])$ with $u_1(\infty) = 0$, $u_2$ is analytic on $\mathbb C \setminus 
((-\infty, w_1 \cup [w_1,w_2] \cup [w_1,w_3])$, $u_3$ is 
analytic on $\mathbb C \setminus (-\infty, 0]$ with the following properties
\begin{align} \label{ujump0w1}
	u_{1, \pm}  & = u_{2, \mp} && \text{on } [0, w_1] \\ 
	\label{ujumpw1w2}
	u_{1,\pm}(w) - u_{2,\mp}(w) & = - \oint_{\overline{b}_S} \omega_S 
	 = - \tfrac{1}{2}(1 -\tau) &&  \text{on } [w_1, w_2] \\
	\label{ujumpw1w3}
	u_{1,\pm}(w) - u_{2,\mp}(w) & = - \oint_{b_S} \omega_S 
		= - \tfrac{1}{2}(1 + \tau) && \text{on } [w_1, w_3],
	\end{align}
see \eqref{bS-periods}, and
\begin{align} \label{ujumpinfty0}
	u_{2, \pm}  & = u_{3, \mp} \qquad \text{on } (-\infty,0]. 
	\end{align}
We also note that $u_1 + u_2 + u_3$ is constant on $\mathbb C$, since this sum
is bounded and analytic outside of $(-\infty,w_1]$, $[w_1, w_2]$ and $[w_1,w_3]$,
and has no jumps on any of these arcs as follows from \eqref{ujump0w1}--\eqref{ujumpinfty0}.

\subsection{First step}
The goal of the first step is to remove the prefactors $2$ and $1/2$ in the offdiagonal
entries of the jump matrices on the arcs $\bigcup_j \omega^j \gamma_{1,2}$ and $\bigcup_j \omega^j \gamma_{1,3}$, 
see \eqref{JM}. 
Here we use the components $u_1, u_2, u_3$ of the Abel map of $\mathcal S$ evaluated
in $z^3$. We seek $M$ in the form
\begin{equation} \label{MinN} 
	M(z) = \begin{pmatrix} 1 & 0 & 0 \\ 0 & 2^{2 u_2(\infty)} & 0 
	\\ 0 & 0 & 2^{2 u_3(\infty)} \end{pmatrix} N(z) \begin{pmatrix} 2^{-2 u_1(z^3)} & 0 & 0 \\ 0 & 2^{-2 u_2(z^3)} & 0 \\
	 0 & 0 & 2^{-2 u_3(z^3)} \end{pmatrix}. 
	\end{equation}
Note that $u_2(\infty) = u_3(\infty)$.

In order that $M$ satisfies the RH problem for $M$ we need that
$N$ satisfies the following RH problem. The jumps in \eqref{JN} are obtained from \eqref{JM}, 
\eqref{ujump0w1}--\eqref{ujumpinfty0}, and the definition \eqref{MinN} of $N$ in terms of $M$.

\begin{rhproblem} \label{RHforN}
\begin{itemize}
\item $N$ is analytic in $\mathbb C \setminus (\Sigma_1 \cup \Sigma_2)$.
\item $N_+ = N_- J_N$ on $\Sigma_1 \cup \Sigma_2 $ with
\begin{equation} \label{JN}
	J_N = \begin{cases} \begin{pmatrix} 0 & 1 & 0 \\ -1 & 0 & 0 \\ 0 & 0 & 1 \end{pmatrix} & \quad \text{on } 
		\Sigma_1^o  \\
		\begin{pmatrix} 0 & 2^{-\tau} e^{2\pi in \beta} & 0 \\ -2^{\tau} e^{-2\pi in\beta} & 0 & 0 \\ 0 & 0 & 1 \end{pmatrix} 
	& \quad \text{on } \bigcup_{j=0}^2 \omega^j \gamma_{1,2} \\
	\begin{pmatrix} 0 & 2^{\tau} e^{-2\pi in \beta} & 0 \\ -2^{-\tau} e^{2\pi in\beta} & 0 & 0 \\ 0 & 0 & 1 \end{pmatrix} 
	& \quad \text{on } \bigcup_{j=0}^2 \omega^j \gamma_{1,3} \\
	\begin{pmatrix} 1 & 0 & 0 \\ 0 & 0 & 1 \\ 0 & -1 & 0 \end{pmatrix} 
	& \quad \text{on } \Sigma_2.
 	\end{cases} \end{equation}
\item $N(z)  = (I + O(1/z)) A(z)$ as $z \to \infty$ where $A(z)$ is given by \eqref{defAz}.
\item For $j,k=1, 2, 3$ we have
	\begin{equation} \label{Natbranch} 
		N(z) =  O\left((z- \omega^j z_k)^{-1/4}\right) \qquad \text{ as } z \to \omega^j z_k. 
		\end{equation}
\end{itemize}
\end{rhproblem}
To see that the asymptotic condition in the RH problem for $N$ is indeed the same as the one for $M$
requires some calculations and uses the facts that  
$u_1(z^3) = O(z^{-3})$, $u_2(z^3) = u_2(\infty) + O(z^{-3/2})$, and
$u_3(z^3) = u_3(\infty) + O(z^{-3/2})$ as $z \to \infty$ with $u_2(\infty) = u_3(\infty)$.

The effect of the first step is that the RH problem for $N$ now depends
on the real parameter (recall that $\tau$ is purely imaginary),
\begin{equation} \label{nu-def} 
	\nu = \nu_n = n \beta - \frac{\tau}{2\pi i} \log 2 
	\end{equation} and so
\begin{equation} \label{JN2} 
	J_N = \begin{cases} 
		\begin{pmatrix} 0 & e^{2\pi i \nu} & 0 \\ -e^{-2\pi i \nu} & 0 & 0 \\ 0 & 0 & 1 \end{pmatrix} 
	& \quad \text{on } \bigcup_{j=0}^2 \omega^j \gamma_{1,2} \\
	\begin{pmatrix} 0 & e^{-2\pi i \nu} & 0 \\ - e^{2\pi i \nu} & 0 & 0 \\ 0 & 0 & 1 \end{pmatrix} 
	& \quad \text{on } \bigcup_{j=0}^2 \omega^j \gamma_{1,3}.
	\end{cases} \end{equation}
We consider $\nu \in \mathbb R \slash \mathbb Z$.

\subsection{Second step}
Our next task is to construct functions $v_1, v_2, v_3$ that are defined
and holomorphic on the respective sheets $\mathcal S_1, \mathcal S_2, \mathcal S_3$ of $\mathcal S$
such that
\begin{align} \label{v1v2v3jump1} 
	\begin{pmatrix} v_1 & v_2 & v_3 \end{pmatrix}_+ & = 
	\begin{pmatrix} v_1 & v_2 & v_3 \end{pmatrix}_- \begin{pmatrix} 0 & 1 & 0 \\ -1 & 0 & 0 \\ 0 & 0 & 1 \end{pmatrix}
	\quad \text{on } [0, w_1] \\
	\label{v1v2v3jump2} 
	\begin{pmatrix} v_1 & v_2 & v_3 \end{pmatrix}_+ & = 
	\begin{pmatrix} v_1 & v_2 & v_3 \end{pmatrix}_- \begin{pmatrix} 1 & 0 & 0 \\ 0 & 0 & 1 \\ 0 & -1 & 0 \end{pmatrix}
	\quad \text{on } (-\infty,0]  \\
	\label{v1v2v3jump3} 
	\begin{pmatrix} v_1 & v_2 & v_3 \end{pmatrix}_+ & = 
	\begin{pmatrix} v_1 & v_2 & v_3 \end{pmatrix}_- \begin{pmatrix} 0 & e^{2\pi i \nu} & 0 \\ -e^{-2\pi i \nu} & 0 & 0 
	\\ 0 & 0 & 1 \end{pmatrix} \quad \text{on } [w_1, w_2], \\
	\label{v1v2v3jump4} 
	\begin{pmatrix} v_1 & v_2 & v_3 \end{pmatrix}_+ & = 
	\begin{pmatrix} v_1 & v_2 & v_3 \end{pmatrix}_-
	\begin{pmatrix} 0 & e^{-2\pi i \nu} & 0 \\ - e^{2\pi i \nu} & 0 & 0 \\ 0 & 0 & 1 \end{pmatrix} 
	 \quad \text{on }  [w_1, w_3] \\
	\label{vatbranch} 
	(v_1, v_2, v_3) & = \begin{cases}  O\left((w- w_k)^{-1/4}\right) & \text{ as } w \to w_k \text{ for } k = 1,2,3, \\
		 O(1) & \text{ as } w \to 0, 
		 \end{cases} \\
	\label{vatinfinity}
		v_1 = O(1), & \quad v_2 = O(w^{-1/4}), \quad v_3 = O(w^{-1/4}) \text{ as } w \to \infty.
\end{align}
If we can find such functions $v_j$ then $(v_1(z^3), v_2(z^3), v_3(z^3))$ is a vector
that satisfies $v_+ = v_- J_N$. 

The problem for $v$ clearly depends on $\nu$. 
In the second step we show that it is possible to solve it for the particular value $\nu = 1/2$. 
In this case (and also in the case $\nu = 0$)
all non-zero entries in the jump matrices in \eqref{v1v2v3jump1}--\eqref{v1v2v3jump4} 
are $\pm 1$. This then implies that any solution $(v_1, v_2, v_3)$ of the above vector valued RH problem,
yields a meromorphic function on $\mathcal S$ defined by  
\[ Q = (w, \eta) \in \mathcal S \mapsto v_j^2(w), \qquad \text{if } Q \in \mathcal S_j \text{ for } j= 1,2,3. \] 
Because of \eqref{vatbranch} this function can have simple poles at $w_1, w_2, w_3$, but not at $w=0$,
and because of \eqref{vatinfinity} it has a simple zero at $\infty_2$. Having three poles, the
function must have two more zeros,  
and the only possibility is to have a double
zero somewhere in $\mathcal S_{\real}$. 

We can now solve the problem for $v$ by reversing the arguments. We start by noting that
the function
\begin{equation} \label{eq-F} F : \mathcal S \to \overline{\mathbb C} : \quad (w, \eta) \in \mathcal S \mapsto  
	\frac{\eta^2}{3\eta^2 - 2 w \eta - (1+t)w} \end{equation}
is meromorphic with simple poles at $w=w_1, w_2, w_3$, a simple zero at $\infty_2$ (see also the asymptotics \eqref{eta-asymp}) 
and a double zero at
\[ Q_{1/2} := (-A, 0) \in \mathcal S_1. \]
Indeed, by \eqref{equation-S} we have that $\eta = 0$  implies $w=0$ or $w=-A$. The numerator $\eta^2$
in \eqref{eq-F} thus gives the double zero at $Q_{1/2}$ and an inspection of the behavior of $\eta_1$
reveals that $Q_{1/2} \in \mathcal S_1$. The origin $(w,\eta) = (0,0)$ is not a zero of $F$ since
the denominator of \eqref{eq-F} also vanishes quadatically at the origin and $F(0,0) = \frac{1}{3} > 0$.
Also $F(\infty_1) = 1$.

Let $F_j$ denote the restriction of $F$ to the $j$th sheet.
Since each sheet is simply connected and the zero at $Q_{1/2} \in \mathcal S_1$ is a double zero 
(the other zeros and poles are on the cuts), we can take an analytic square root on each
sheet. We do it in such a way that 
$v_j^2 = F_j$
with
$v_1(\infty) = 1$, $v_1(0) =  - \frac{1}{\sqrt{3}}$,
$v_{2,+}(0) = - v_{2,-}(0) = \frac{1}{\sqrt{3}}$ and $v_3(0) = \frac{1}{\sqrt{3}}$.
The construction is such that \eqref{v1v2v3jump1} and \eqref{v1v2v3jump2} are satisfied.

Then a careful analysis about how the branches of the square root behave under
the analytic continuation (we were assisted by Maple)
shows that the jump matrix on $[w_1,w_2]$ and $[w_1,w_3]$  is 
$\begin{pmatrix} 0 & -1 & 0 \\ 1 & 0 & 0 \\ 0 & 0 & 1 \end{pmatrix}$,
which is the jump matrix in \eqref{v1v2v3jump3} and \eqref{v1v2v3jump4} with $\nu = 1/2$.

Thus we can solve the vector problem for the parameter  $\nu = 1/2$.
The solution is denoted by $(v_{1}^{(1/2)}, v_{2}^{(1/2)}, v_{3}^{(1/2)})$. 
Note that $v_{1}^{(1/2)}$ has a simple zero at the point $-A$ on the negative real line.
We also have $v_1^{(1/2)}(\infty) = 1$
and $v_2^{(1/2)}(w) = O(w^{-1/4})$, $v_3^{(1/2)}(w) = O(w^{-1/4})$, as $w \to \infty$.

The construction gives in particular by \eqref{eq-F}
\begin{equation} \label{v1-def}
	v_1^{(1/2)}(w) = F^{1/2}(w^{(1)}) =  \frac{ \eta_1(w)}{(3\eta_1(w)^2 - 2 w \eta_1(w) - (1+t) w)^{1/2}}.
	\end{equation}

\begin{remark}
There is a constant $c \in (0,1)$ such that
\[ \tilde{F} : \mathcal S \to \overline{\mathbb C} : (w, \eta) \mapsto \frac{\eta^2 - cw}{3\eta^2 - 2 w\eta- (1+t)w} \]
has a double zero at a point $Q_0 \in \mathcal S_2 \cap \mathcal S_{real}$.
It further has simple poles at $w_1, w_2, w_3$ and a simple zero at $\infty_2$.
Then a similar construction, yields a vector
$(v_{1}^{(0)}, v_{2}^{(0)}, v_{3}^{(0)})$ that satisfies the jump conditions
in the vector problem with
$\nu = 0$.  The difference in jumps comes from the fact that $(\eta^2 - cw)^{1/2}$
is not a globally defined analytic function on the Riemann surface (despite having only
double zeros and poles). The change in argument of $\eta^2 - cw$ along the $b_S$ and  $\overline{b}_S$ cycle
is an odd multiple of $2\pi$, and then the change in argument of $(\eta^2-cw)^{1/2}$
is by an odd multiple of $\pi$, which leads to a change in sign in the jump matrices.
\end{remark}

\subsection{Third step}

We start from the functions $v_{1}^{(1/2)}$, $v_{2}^{(1/2)}$ and $v_{3}^{(1/2)}$ 
that solve the vector problem with $\nu  = 1/2$.
We are going to use Jacobi theta functions to modify the functions so that they
solve the vector problem with an arbitrary $\nu \in [0,1)$, see also \cite{BL,DKMVZ}
for similar uses of theta functions in RH problems.
Let $\theta(s)$ be the theta function as in \eqref{theta3-def} which has zeros at the values
$s_0 ~(\mod{L})$ with $s_0$ as in \eqref{def-s0} and no other zeros.
Let 
\begin{equation} \label{delta-def} 
	\delta = u(Q_{1/2}) - s_0 = \int_{-\infty_1}^{-A} \omega_S - \frac{-1 + \tau}{4} 
	\end{equation}
Then $Q  \mapsto \theta(u(Q) - \delta)$ has a simple zero at $Q_{1/2}$ (and no other zeros) and it follows that the
functions
\begin{equation} \label{vjgamma} 
	v_{j}^{(\nu)}(w) = \frac{\theta(u_j(w) - \delta  + \nu- 1/2)}{\theta(u_j(w)- \delta)} 
	\, v_{j}^{(1/2)}(w), \qquad w \in \mathcal S_j, \quad j=1,2,3 
	\end{equation}
are well-defined and analytic. The zero of $\theta(u_1(w) - \delta)$ at $w= - A$ is
cancelled by the zero of $v_1^{(1/2)}(w)$.

\begin{lemma} \label{lem:vgamma}
The vector $(v_1^{(\nu)}, v_2^{(\nu)}, v_3^{(\nu)})$ satisfies the
conditions for the vector problem \eqref{v1v2v3jump1}--\eqref{vatinfinity}.

\end{lemma}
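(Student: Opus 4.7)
The plan is to verify the five conditions \eqref{v1v2v3jump1}--\eqref{vatinfinity} by reducing everything to the solution $(v_1^{(1/2)}, v_2^{(1/2)}, v_3^{(1/2)})$ already built in the third step, and exploiting the quasi-periodicity \eqref{theta-periodicity} of $\theta$. Write $T_j(w) = \theta(u_j(w) - \delta + \nu - 1/2)/\theta(u_j(w) - \delta)$ so that $v_j^{(\nu)} = T_j \, v_j^{(1/2)}$.

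First I would argue that each $v_j^{(\nu)}$ is single-valued and holomorphic on the corresponding sheet. Since $\theta$ is entire with simple zeros exactly on $s_0 + L$, and $u_j$ is holomorphic on $\mathcal S_j$, the only possible singularity of $T_j$ is where $u_j(w) - \delta \equiv s_0 \pmod{L}$. By the choice \eqref{delta-def} of $\delta$ and the injectivity of the Abel map, this happens only at $w = -A$ on $\mathcal S_1$, and the zero of the denominator there is simple because $u_1$ is a local biholomorphism at $-A$. It is cancelled by the simple zero of $v_1^{(1/2)}$ at $-A$ coming from the third step construction, so $v_1^{(\nu)}$ extends holomorphically across $-A$. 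On $\mathcal S_2$ and $\mathcal S_3$ there is nothing to check.

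Next I would dispose of the jumps on $[0, w_1]$ and $(-\infty, 0]$. On these arcs the Abel-map components only swap across the cut, by \eqref{ujump0w1} and \eqref{ujumpinfty0}, so $T_{1,\pm} = T_{2,\mp}$ and $T_{2,\pm} = T_{3,\mp}$ respectively. The jumps \eqref{v1v2v3jump1}--\eqref{v1v2v3jump2}, which do not involve $\nu$, therefore pass through the $T_j$ prefactors unchanged, and the corresponding jump relations for $v_j^{(1/2)}$ immediately imply them for $v_j^{(\nu)}$.

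The key calculation is on $[w_1, w_2]$ and $[w_1, w_3]$. On $[w_1, w_2]$ relation \eqref{ujumpw1w2} gives $u_{1,+} = u_{2,-} - (1-\tau)/2$, so applying the third identity in \eqref{theta-periodicity} to both theta factors in $T_1$ the $e^{-\pi i(1+\tau)/2}$ prefactors cancel between numerator and denominator and the $e^{2\pi i s}$ terms contribute $e^{2\pi i(\nu - 1/2)} = -e^{2\pi i \nu}$; a short computation gives $T_{1,+}/T_{2,-} = -e^{-2\pi i \nu}$. Combined with $v_{1,+}^{(1/2)} = v_{2,-}^{(1/2)}$ from the third step this yields the $(1,2)$-entry of the jump \eqref{v1v2v3jump3}, and the $(2,1)$-entry follows identically from $v_{2,+}^{(1/2)} = -v_{1,-}^{(1/2)}$. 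The jump \eqref{v1v2v3jump4} on $[w_1, w_3]$ is the parallel calculation using the second identity in \eqref{theta-periodicity} for the period $(1+\tau)/2$, producing $T_{1,+}/T_{2,-} = -e^{2\pi i \nu}$. The local behaviours \eqref{vatbranch} at $0, w_1, w_2, w_3$ and the asymptotics \eqref{vatinfinity} are inherited from $v_j^{(1/2)}$, since each $T_j$ has finite non-zero limits at those points (none of which lie in the zero set of $\theta \circ (u_j - \delta)$ on the relevant sheet). The one delicate step is the bookkeeping of the two half-period translations: the specific shift $\nu - 1/2$ in the numerator of $T_j$ is engineered exactly so that the quasi-periodicity prefactors from \eqref{theta-periodicity} produce the factors $-e^{\mp 2\pi i \nu}$ that match the off-diagonal entries of the jump matrices in \eqref{v1v2v3jump3} and \eqref{v1v2v3jump4}.
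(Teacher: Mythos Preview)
Your proof is correct and follows essentially the same route as the paper: define the theta ratio (the paper calls it $\Theta(s)$, you call it $T_j$), read off its quasi-periodicity from \eqref{theta-periodicity}, and combine with the known jumps of $(v_1^{(1/2)},v_2^{(1/2)},v_3^{(1/2)})$ and the Abel-map relations \eqref{ujump0w1}--\eqref{ujumpinfty0} to obtain \eqref{v1v2v3jump1}--\eqref{v1v2v3jump4}; the endpoint and infinity conditions follow because the theta ratio is bounded there. One harmless over-claim: you assert the $T_j$ have \emph{non-zero} limits at the branch points and at infinity, but only boundedness is needed for \eqref{vatbranch}--\eqref{vatinfinity}, and in fact the numerator of $T_j$ can vanish (this is precisely the zero that produces the spurious-zero point $Q_\nu$ discussed later in the paper).
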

\begin{proof} 
The ratio of theta functions
\begin{equation} \label{def-Theta} 
	\Theta(s) := \frac{\theta(s-\delta + \nu- 1/2)}{\theta(s-\delta)} 
	\end{equation}
has periodicity properties
\[ \Theta(s+1) = \Theta(s), \quad \Theta(s \pm \tfrac{1+\tau}{2}) = -e^{\mp 2\pi i \nu} \Theta(s),
	\quad  \Theta(s \pm \tfrac{1-\tau}{2}) = -e^{\pm 2\pi i\nu} \Theta(s), \]
which easily follows from \eqref{theta-periodicity} and \eqref{def-Theta}.	

Let $w \in [w_1, w_2]$. Then $v_{1,+}^{(1/2)}(w) =  v_{2,-}^{(1/2)}(w)$
and $u_{1,+}(w) = u_{2,-}(w) + \frac{-1 + \tau}{2}$ by \eqref{ujumpw1w2} so that
by the periodicity property
\[ \Theta(u_{1,+}(w)) = \Theta(u_{2,-}(w) - \tfrac{1  - \tau}{2}) =
	      -e^{-2\pi i \nu} \Theta(u_{2,-}(w)). \]
Thus
\begin{align*} 
 v_{1,+}^{(\nu)}(w) & =  \Theta(u_{1,+}(w)) v_{1,+}^{(1/2)}(w) \\
	& =  - e^{-2\pi i \nu} \Theta(u_{2,-}(w)) v_{2,-}^{(1/2)}(w) \\
	& = - e^{-2\pi i \nu} v_{2,-}^{(\nu)}(w). 
	\end{align*}
Similarly, $v_{2,+}^{(\nu)}(w) = e^{2\pi i \nu} v_{1,-}^{(\nu)}(w)$, which gives
the jump \eqref{v1v2v3jump3} on $[w_1,w_2]$.
A similar calculation shows that $v_1^{(\nu)}, v_2^{(\nu)}, v_3^{(\nu)}$ satisfies the jump 
\eqref{v1v2v3jump4} on $[w_1, w_3]$.
The jumps on $[0, w_1]$ and $(-\infty,0]$, as well as the asymptotic  conditions \eqref{vatbranch} 
and \eqref{vatinfinity}  are straightforward to verify.
\end{proof}

By Lemma \ref{lem:vgamma} we can solve the vector problem for any real $\nu$.
The solution is not unique, since we can multiply by a common constant.
If $v_1^{(\nu)}(\infty) \neq 0$, then we can normalize the solution  and we find that  
\begin{equation} \label{N1j-def} 
 \begin{aligned}
	N_{1,j}(z) & = \frac{v_j^{(\nu)}(z^3)}{v_1^{(\nu)}(\infty)}   \\
	& = \frac{\theta(- \delta)}{\theta(- \delta  + \nu-1/2)}
	\frac{\theta(u_j(z^3) - \delta  + \nu- 1/2)}{\theta(u_j(z^3)- \delta)}  
		\frac{v_{j}^{(1/2)}(z^3)}{v_1^{(1/2)}(\infty)}
		\end{aligned}
		\end{equation}
	gives a vector $(N_{1,j}, N_{2,j}, N_{3,j})$ that satisfies the conditions
	for the first row in the RH problem \eqref{RHforN}.

This fails if $v_1^{(\nu)}(\infty) = 0$,  which by \eqref{vjgamma} and the fact that $u_1(\infty) = 0$
comes down to $\theta(-\delta + \nu - 1/2)  = 0$. Thus $-\delta + \nu - 1/2 \equiv  s_0 \mod{L}$.
From \eqref{delta-def}  it then follow that there is unique
$\nu = \nu^*$ for which this holds, namely
\[ \nu^* = \frac{1}{2} + \int_{-\infty_1}^{-A} \omega_S \qquad  \mod{\mathbb Z}. \]
Using $\psi^*(\omega_S) = \omega_R$, we can also write
\begin{equation} \label{def-nustar} 
	\nu^* = \frac{1}{2} + \int_{-\infty_1}^{-A^{1/3}} \omega_R \qquad \mod{\mathbb Z}, 
	\end{equation}
where $-A^{1/3}$ denotes the point $(-A^{1/3},0)$ that is on the first sheet $\mathcal R_1$
of the Riemann surface.

In view of \eqref{nu-def} and \eqref{def-nustar} we  conclude that if
\begin{equation} \label{n-nonspecial} 
	n \beta \not\equiv \beta^* \quad \mod{\mathbb Z}, 
\end{equation}
where $\beta^*$ is given by \eqref{def-betastar},
then \eqref{N1j-def} solves the first row in the RH problem for $N$.
It also follows that if, $n \in \mathbb N_{\varepsilon}$, see \eqref{Nepsilon}, then
the distance from $n \beta - \nu^* - \frac{\tau}{2\pi i} \log 2$ to the set
of integers is at least $\varepsilon$, and then the entries \eqref{N1j-def} are
uniformly bounded for $z$ in compact subsets of $\overline{\mathbb C}$ away from
the branch points with a bound that only depends on $\varepsilon$.

\subsection{Fourth step}

Before we can continue with filling in the other rows of $N$, we need a lemma.
Recall that $\rho$ is the symmetry \eqref{rho-def} of the Riemann surface $\mathcal R$.
The real part of $\mathcal R$ is
\[ \mathcal R_{real} = \{ (\xi, z) \in \mathbb R \times \mathbb R \mid P(\xi,z) = 0 \} \]
where $P$ is the algebraic equation \eqref{def:speccurve} for $\mathcal R$.

\begin{lemma} \label{lem:nonspecial}
Let $P \in \mathcal R_{real}$. Then 
the divisor $D = P + \rho(P) + \rho^2(P)$ is non-special.  
\end{lemma}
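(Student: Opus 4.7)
My approach is to reduce the statement to a $3 \times 3$ linear algebra computation with an explicit basis of $H^0(\mathcal R, K_\mathcal R)$ adapted to the symmetry $\rho$. Since $\mathcal R$ has genus $g = 3$ and $D$ has degree $g$, non-speciality of $D$ is equivalent to the non-existence of a non-zero holomorphic differential vanishing at each of $P$, $\rho(P)$, and $\rho^2(P)$.

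The first step is to exhibit such a basis. Using $\rho^\ast z = \omega z$ and $\rho^\ast \xi = \omega^2 \xi$, the differentials $\omega_R$, $z\omega_R$, and $\xi\omega_R$ are $\rho^\ast$-eigenvectors with eigenvalues $1$, $\omega$, $\omega^2$. Holomorphicity of $\omega_R$ is part of the construction \eqref{omegaR}; for $z\omega_R$ and $\xi\omega_R$ the only non-trivial check is at the two points at infinity, and a local analysis using $1/z$ and $1/\xi$ as local coordinates there shows that the double zeros of $\omega_R$ at $\infty_1$ and at the point $\infty_{23}$ where $\mathcal R_2, \mathcal R_3$ meet exactly compensate the poles of $z$ and $\xi$. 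For the dimension count, $\psi^\ast H^0(K_\mathcal S) \subset H^0(K_\mathcal R)$ gives the $\rho^\ast$-invariant subspace, which is one-dimensional because $g(\mathcal S) = 1$, while the antiholomorphic involution $\sigma$ satisfying $\sigma \rho = \rho^{-1} \sigma$ interchanges the $\omega$- and $\omega^2$-eigenspaces, forcing each to be one-dimensional. Therefore $\omega_R$, $z\omega_R$, $\xi\omega_R$ form a basis of $H^0(\mathcal R, K_\mathcal R)$.

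Writing a general $\Omega = (c_0 + c_1 z + c_2 \xi)\, \omega_R$ and using that $\omega_R$ has no finite zeros, the condition $\Omega(P) = \Omega(\rho P) = \Omega(\rho^2 P) = 0$ becomes the vanishing of the meromorphic function $c_0 + c_1 z + c_2 \xi$ at these three finite points. Substituting $z(\rho^j P) = \omega^j z(P)$ and $\xi(\rho^j P) = \omega^{2j} \xi(P)$, the resulting linear system in $(c_0, c_1, c_2)$ has coefficient matrix
\[
\begin{pmatrix} 1 & z(P) & \xi(P) \\ 1 & \omega z(P) & \omega^2 \xi(P) \\ 1 & \omega^2 z(P) & \omega \xi(P) \end{pmatrix},
\]
whose determinant equals the Vandermonde-type expression $-3 i \sqrt{3}\, z(P)\, \xi(P)$. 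The only solution is the trivial one whenever $z(P)\xi(P) \neq 0$, and $D$ is non-special.

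The main technical obstacle is to handle the two exceptional real points where $z(P) = 0$ or $\xi(P) = 0$, namely $P = (0, -A^{1/3})$ and $P = (-A^{1/3}, 0)$; these are the preimages under $\psi$ of the two zeros of $\eta$ on $\mathcal S$, and at them the differential $z \omega_R$ (resp.\ $\xi \omega_R$) vanishes on all of $D$ so that the determinant argument degenerates. These exceptional values of $P$ correspond precisely to the critical parameter $\beta^\ast$ in \eqref{def-betastar} at which the model Riemann-Hilbert problem fails to be solvable, and the hypothesis $n \in \mathbb N_\varepsilon$ of Proposition \ref{prop:modelRHP} is exactly what is needed to avoid them in the application.
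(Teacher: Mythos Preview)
Your approach via an explicit $\rho^*$-eigenbasis $\{\omega_R,\, z\omega_R,\, \xi\omega_R\}$ of $H^0(\mathcal R,K_{\mathcal R})$ is elegant and genuinely different from the paper's, which works dually by showing $\dim L(D)=1$ through an iteration $f\mapsto z(f-c)$ and a case analysis at $\infty_1$ and $\infty_2$. Your basis and the Vandermonde determinant $-3i\sqrt{3}\,z(P)\,\xi(P)$ are correct, and you are right to isolate the two real finite points where it vanishes. (You have also tacitly assumed $P$ is finite; the points $\infty_1$ and $\infty_2$, which are fixed by $\rho$ so that $D=3\infty_1$ or $3\infty_2$, require a separate local computation in your framework, but that one does go through.)

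There is, however, a genuine gap in your last paragraph. The identification of the two exceptional points with $\beta^*$ is incorrect: $\beta^*$ in \eqref{def-betastar} corresponds via \eqref{nu-def} and \eqref{def-nustar} to $Q_\nu=\infty_1$, whereas your points $(-A^{1/3},0)$ and $(0,-A^{1/3})$ map under $\psi$ to $Q_{1/2}=(-A,0)$ and to $(0,0)\in\mathcal S$, hence to $\nu=\tfrac12$ and to another value of $\nu$ not excluded by $\mathbb N_\varepsilon$. More importantly, the lemma as stated is actually \emph{false} at both points, so no argument can close the gap. For $P=(0,-A^{1/3})$ the divisor $D$ is the full fibre of $z$ over $0$ and $1/z$ is a nonconstant element of $L(D)$; for $P=(-A^{1/3},0)$ the divisor $D$ is the zero divisor of $\xi$ and $1/\xi\in L(D)$. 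Equivalently, $z\omega_R$ (respectively $\xi\omega_R$) is a holomorphic differential vanishing on all of $D$, so $i(D)\ge 1$. Your vanishing determinant is therefore detecting a genuine failure of the statement, not a removable degeneracy. The paper's own proof shares this blind spot: when $z(P)=0$ the three points over $z=0$ are exactly $P,\rho(P),\rho^2(P)$, so the step ``$f$ rational in $z$ forces $f$ constant'' fails, and when $\xi(P)=0$ the claim that $f$ has no half-integer powers of $z$ near $\infty_2$ is not actually implied by the symmetrization identity. Fortunately the application in Section~\ref{sec:modelRHP} survives, since $\dim L(D+2\infty_2)=3$ follows directly from Riemann--Roch once $\deg(D+2\infty_2)=5>2g-2$, irrespective of whether $D$ is special.
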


\begin{proof}
The space $L(D)$ contains all meromorphic functions on $\mathcal R$ with poles only at $P$, $\omega P$, $\omega^2 P$.
These are all simple poles if $P \not\in \{ \infty_1, \infty_2\}$, and at most poles of order three otherwise.
We have to show that $\dim L(D) = 1$, that is, the only functions in $L(D)$ are the constant functions.

The three points $P, \rho(P), \rho^2(P)$ are
all mapped by \eqref{psi-def} to the same point $Q \in \mathcal S$. Note that $\dim L(Q) = 1$, 
since there are no special points
on a genus one Riemann surface. A meromorphic function $\tilde{f}$ on $S$ gives rise to a meromorphic
function $f= \tilde{f} \circ \psi$ on $\mathcal R$ which is invariant under the $\mathbb Z_3$ action, and any $\mathbb Z_3$
invariant meromorphic function can be obtained that way. It follows that $L(D)$ does not contain
any $\mathbb Z_3$ invariant meromorphic functions, except for constants.

Let $f \in L(D)$. Then $f + f \circ \rho + f \circ \rho^2$ is $\mathbb Z_3$ invariant, and therefore
a constant, say
\begin{equation} \label{symmsum} 
	f + f \circ \rho + f \circ \rho^2  = 3c 
	\end{equation}
	for some constant $c \in \mathbb C$. We now distinguish three cases $P \not\in \{\infty_1, \infty_2\}$, $P = \infty_1$ and $P=\infty_2$.
\paragraph{Case $P \not\in \{\infty_1, \infty_2\}$}
In this case we have Laurent expansions of $f$ about $\infty_1$ and $\infty_2$ of the form
\begin{align*} 
	f(z) = \begin{cases} c + c_1 z^{-1} + \cdots &  \text{ as } z \to \infty_1, \\
			 c + d_1 z^{-1} + \cdots & \text{ as } z \to  \infty_2 
			\end{cases} \end{align*}
	where we recall that $z^{-1}$ is  the local coordinate around $\infty_1$ and $z^{-1/2}$ around $\infty_2$.
We do not have fractional exponents in the expansion about $\infty_2$ since that
would be incompatible with \eqref{symmsum}.  Then $z \mapsto z(f(z)-c)$ is holomorphic at both $\infty_1$ and $\infty_2$ and since
we do not introduce any other poles,
\[ f_1 =  \pi_1 (f - c) \in L(D), \qquad \pi_1 : \mathcal R \to \widehat{\mathbb C} : (z,\xi) \mapsto z. \]
We iterate this argument, and inductively find a sequence $(f_n)$ of functions and a sequence $(c_n)$ of constants
such that 
\[ f_{n+1} = \pi_1 (f_n - c_n) \in L(D). \]
Since $L(D)$ is finite dimensional, there is a linear relations between $f, f_1, \ldots, f_n$ for some $n$.
Then $f$ is rational in $z$, which means that together with pole at $P$, it also poles at other
points on the Riemann surface with the same $z$-coordinate. This forces $f$ to be a constant.

\paragraph{Case $P = \infty_1$}
In this case there is a possible pole at $\infty_1$ of order $\leq 3$ and the Laurent expansions have the form 
\begin{align*} 
	f(z) = \begin{cases} 
		c_{-3} z^{3} + c_{-2} z^2 + c_{-1} z + c + O(z{-1}) + \cdots &  \text{ as } z \to \infty_1, \\
	  c + O(z^{-1}) + \cdots &  \text{ as } z \to  \infty_2. 
		\end{cases} \end{align*}
The identity \eqref{symmsum} implies $c_{-3} = 0$, and as in the first case we find
\[ f_1  = \pi_1 (f - c) \in L(D). \]
Then we can argue as above and conclude that $f$ is a constant.

\paragraph{Case $P= \infty_2$}
In this case there are expansions
\begin{align*} 
		f(z) = \begin{cases} 
		c + c_1 z^{-1} +  O(z^{-2}) &  \text{ as } z \to \infty_1, \\
	  c_{-1} z + c + O(z^{-1}) & \text{ as } z \to  \infty_2,
		\end{cases}
		\end{align*}
since again there can be no terms with $z^{3/2}$, $z^{1/2}$ and $z^{-1/2}$ because of \eqref{symmsum}.
If $c_{-1} \neq 0$, then $\pi_1 (f - c)$ has a pole of order $4$ at $\infty_2$, and so it does not belong to $L(D)$,
as in the other cases. However we now use that $f \circ \rho$ also belongs to $L(D)$ and
has expansions in local coordinates
\begin{align*} 
	(f \circ \rho)(z)  =  		\begin{cases} 
	c + c_1 \omega^2 z^{-1} +  O(z^{-2})  & \text{ as } z \to \infty_1,  \\
	c_{-1} \omega z + c + O(z^{-1}) &  \text{ as } z \to  \infty_2.
	\end{cases}
\end{align*}
where we assume $c_{-1} \neq 0$ (otherwise $f$ has no poles at all, and clearly is a constant).
Then
\[ g = \omega^2 f - f \circ \rho - \omega^2 c + c \]
is in $L(D)$ with
\begin{align*} 
	g(z) = \begin{cases}
		O(z^{-2})  &  \text{ as } z \to \infty_1, \\
	 c_{-1} (\omega^2 - \omega) z + O(z^{-1})  & \text{ as } z \to  \infty_2.
	\end{cases}
\end{align*}
Thus $g$ has a double pole at $\infty_2$ and a double zero at $\infty_1$. There are no other zeros
or poles, and so $1/g$ has a double pole at $\infty_1$, which means that
\[ 1/g \in L(3 \infty_1). \]
From the second case we already know that $L(3 \infty_1)$ consists of constant functions only. 
Thus $g$ is a constant, which is a contradiction with $c_{-1} \neq 0$.
\end{proof}

We can now complete solution of $N$ in the same way as in \cite[section 6.6]{BK}.

We assume that $\nu \neq \nu^*$. Then \eqref{N1j-def} gives us the first
row of $N$. 
The ratio \eqref{def-Theta} of shifted theta-functions has a zero at
$s= s_0 + \delta - \nu + 1/2 = u(Q_{1/2}) - \nu + 1/2$.
There is a value $Q_{\nu}$ on the real part of $\mathcal S$ with
\[ u(Q_{\nu}) = u(Q_{1/2}) - \nu + 1/2 \qquad \mod{\mathbb Z}. \]
If $Q_{\nu} = (w, \eta_j(w))$ is on the sheet $\mathcal S_j$,
then it follows from \eqref{vjgamma} that $v_j^{\nu}(w) = 0$,
and hence $N_{1,j}(z) = 0$ whenever $z^3 = w$.
Also $Q_{\nu} \neq \infty_1$, since $\nu \neq \nu^*$.

Then $\psi^{-1}(Q_{\nu}) = \{ P_{\nu}, \rho(P_{\nu}), \rho^2 (P_{\nu})\}$
for some $P_{\nu} \in \mathcal R_{real} \setminus \{\infty_1\}$.
The divisor
\[ D = P_{\nu} + \rho(P_{\nu}) + \rho^2 (P_{\nu}) \]
is non-special by Lemma \ref{lem:nonspecial}. Thus $\dim L(D) = 1$.
It then follows that $L(D + 2 \infty_2)$ is three dimensional (by the Riemann-Roch theorem). 
It has
a basis $\{1, f, g\}$.
Let $f_1, f_2, f_3$, and $g_1, g_2, g_3$ denote the restrictions of $f$
and $g$ to the respective sheets of $\mathcal S$, and put
\begin{equation} \label{defB} 
	B := \begin{pmatrix} N_{1,1} & N_{1,2} & N_{1,3} \\
	f_1 N_{1,1} & f_2 N_{1,2} & f_{3} N_{1,3} \\
   g_1 N_{1,1} & g_2 N_{1,2} & g_3 N_{1,3} \end{pmatrix}. 
	\end{equation}
If $P_{\nu} = (z, \xi_j(z)) \in \mathcal R_j$ then $f_j$ and $g_j$
have a possible pole at $z$, $\omega z$, and $\omega z^2$. However, the poles 
are compensated by the zero of $N_{1,j}$ and it follows that $B$ is analytic in $\mathbb C \setminus (\Sigma_1 \cup \Sigma_2)$. 

It is then easy to verify that $B$ satisfies the jumps $B_+ = B_- J_N$
as in the RH problem \ref{RHforN} for $N$.
Since $A_+ = A_- J_N$ on $\Sigma_2$, we find that $B A^{-1}$ is analytic across $\Sigma_2$
and therefore it is single valued at infinity.
It can be verified that $B(z) = O(z^{1/4})$ and $A(z) = O(z^{1/4})$ as $z \to \infty$
which means that the Laurent expansion of $B A^{-1}$ has the form
\begin{equation} \label{BAinv} 
	(BA^{-1})(z) = C + O(z^{-1}) \qquad \text{ as } z \to \infty 
	\end{equation}
with a constant matrix $C$. From \eqref{defAz} we see that $\det A \equiv 1$. From $B_+ = B_-J_N$
where $\det J_N \equiv 1$ it also follows that $\det B \equiv b$ for some constant $b$.
If the constant were zero, then we see from \eqref{defB} that there is a nontrivial
linearly combination $h= c_0 + c_1 f + c_2 g$ such that $h_j N_{1,j} \equiv 0 $ for each $j$.
The functions $N_{1,j}$ are analytic and they do not vanish identically, which implies
that $h=0$ and this is impossible since $\{1, f, g \}$ are linearly independent.  

Then by \eqref{BAinv} we have
\[ \det C = \lim_{z \to \infty} \det (BA^{-1}(z)) = b \neq 0 \]
and so $C$ is an invertible matrix.

Then $N = C^{-1} B$ satisfies the jump conditions \eqref{JN} in the RH problem for $N$.
The asymptotic condition is satisfied because by \eqref{BAinv}
\[ N = C^{-1} B = C^{-1} \left( C  + O(z^{-1})\right) A(z) = \left( I + O(z^{-1}) \right) A(z) \]
as $z \to \infty$. Also \eqref{Natbranch} is satisfied, since it holds for the first row
and the functions $f$ and $g$ are analytic at the branch points, except in the case where
$Q_{\nu}$ coincides with the branch point $w_1$. In that case, $f$ and $g$ may have a
pole at $z_1$, $\omega z_1$, $\omega^2 z_1$. However, in that case $N_{1,1}$ and $N_{1,2}$ 
behave like $(z- \omega^j z_1)^{1/4}$ as $z \to \omega^j z_1$ for $j=0,1,2$,
and then \eqref{Natbranch} still holds.

\subsection{Proof of Proposition \ref{prop:modelRHP}}

\begin{proof}

The RH problem for $N$ is solvable if and only if
$n \beta - \nu^* - \frac{\tau}{2\pi i} \log 2$ is not an integer.
Then  \eqref{MinN} gives the solution of the RH problem \ref{RHforM} for $M$
and this proves part (a) of Proposition \ref{prop:modelRHP}.

\medskip

Part (b) follows from the similar statement about $N$. 
Alternatively, it can also be deduced by a compactness argument.

\medskip

For part (c) it remains to show that $M_{n,11}$ is given by \eqref{def-Mn11}.
First from \eqref{MinN} and \eqref{uj-def} we get
\[ M_{n,11}(z) = 2^{-2u_1(z^3)} N_{n,11}(z) = 2^{2 \int_{z^3}^{\infty_1} \omega_S} N_{n,11}(z) \]
Since $\psi^*(\omega_S) = \omega_R$ this is also
\[ M_{n,11}(z) = 2^{2 \int_{z}^{\infty_1} \omega_R} N_{n,11}(z). \]
For $N_{n,11}(z)$ we have the expression \eqref{N1j-def} with $j=1$,
$\delta$ as in \eqref{delta-def} and $\nu$ as in \eqref{nu-def}.
Since $\psi^*(\omega_S)$ we also have
\[ \delta = -s_0 - \int_{-A^{1/3}}^{\infty_1} \omega_R, \]
and (see \eqref{uj-def} for $u_1$),
\[ u_1(z^3) = \int_{\infty_1}^{z} \omega_R, \]
with integration on the first sheet of $\mathcal R$.

Finally, by \eqref{eq-F} and \eqref{v1-def}, we have $v_1^{(1/2)}(\infty) = 1$
and 
\begin{align*} v_1^{(1/2)}(z^3) & = 
	\frac{ \eta_1(z^3)}{(3\eta_1(z^3)^2 - 2 z^3 \eta_1(z^3) - (1+t) z^3)^{1/2}} \\
	& =  \frac{\xi_1(z)}{(3\xi_1(z)^2 - 2 z^2 \xi_1(z) - (1+t) z)^{1/2}}
 \end{align*}
since $\eta_1(z^3) = z \xi_1(z)$. Combining all this we find \eqref{def-Mn11}.
\end{proof}

\section*{Acknowledgements}
We thank Guilherme Silva for useful comments and for assistance in producing the Figures~5 and 9.

The first author is supported by KU Leuven Research Grant OT/12/073,
the Belgian Interuniversity Attraction Pole P07/18, and FWO Flanders
projects G.0641.11 and G.0934.13.

\end{document}